\newif\ifdebug
\newcommand{\newparagraph}[1]{\vspace{5pt}\noindent\textbf{#1}}
\DeclarePairedDelimiter{\db}{\llbracket}{\rrbracket}
\newcommand{\eidType}{{\sf EventID}}
\newcommand{\eventType}{{\sf Event}}
\newcommand{\ghostType}{{\sf Ghost}}
\newcommand{\historyType}{{\sf History}}
\newcommand{\opType}{{\sf Op}}
\newcommand{\stateType}{{\sf State}}
\newcommand{\tidType}{{\sf ThreadID}}
\newcommand{\valueType}{{\sf Val}}
\newcommand{\tsType}{{\sf TS}}
\newcommand{\abs}{{\sf abs}}
\newcommand{\history}{H} 
\newcommand{\event}{e}
\newcommand{\ghost}{G}
\newcommand{\state}{s}
\newcommand{\evalf}[2]{\db{{#1}}_{#2}}
\newcommand{\lint}{\ell}
\newcommand{\isSeq}{{\sf seq}}
\newcommand{\args}{a}
\newcommand{\deqOp}{{\rm Deq}}
\newcommand{\eid}{i}
\newcommand{\eidp}{j}
\newcommand{\enqOp}{{\rm Enq}}
\newcommand{\op}{{\sf op}}
\newcommand{\res}{r}
\newcommand{\tid}{t}
\newcommand{\tidof}[1]{{#1}.{\sf tid}}
\newcommand{\opof}[1]{{#1}.{\sf op}}
\newcommand{\argsof}[1]{{#1}.{\sf arg}}
\newcommand{\resof}[1]{{#1}.{\sf rval}}
\newcommand{\getEvent}{{\tt enqOf}}
\newcommand{\events}{E}
\newcommand{\order}{R}
\newcommand{\rely}{\mathcal{R}}
\newcommand{\guar}{\mathcal{G}}
\newcommand{\complete}[1]{\left \lfloor {#1} \right \rfloor}
\newcommand{\incomplete}[1]{{#1}\setminus \complete{#1}}
\newcommand{\dom}[1]{{\sf dom}(#1)}
\newcommand{\powerset}[1]{\mathcal{P}({#1})}
\newcommand{\sub}[2]{[{#1}\,{:}\,{#2}]}
\newcommand{\inv}{{\sf INV}}
\newcommand{\invlin}{\inv_{\sf LIN}}
\newcommand{\invwf}{\inv_{\sf WF}}
\newcommand{\invord}{\inv_{\sf ORD}}
\newcommand{\invalg}{\inv_{\sf ALG}}
\newcommand{\linv}{{\sf LI}}
\newcommand{\Done}{\None}
\newcommand{\None}{\bot}
\newcommand{\NULL}{{\rm NULL}}
\newcommand{\Todo}{{\sf todo}}
\newcommand{\FALSE}{{\sf false}}
\newcommand{\TRUE}{{\sf true}}
\newcommand{\refines}[2]{{#1} \sqsubseteq {#2}}
\newcommand{\linearizes}[2]{{#1} \sqsubseteq {#2}}
\newcommand{\completes}[2]{{#1} \unlhd {#2}}
\newcommand{\histories}{\mathcal{H}}
\newcommand{\qhist}{\histories_{\sf queue}}
\newcommand{\pool}[1]{{\tt pools}({#1})}
\newcommand{\pid}{p} 
\newcommand{\edge}[3]{{#1} \xrightarrow{#2} {#3}}
\newcommand{\nedge}[3]{\neg ({#1} \xrightarrow{#2} {#3})}
\newcommand{\tsless}{\mathrel{{} <_{\sf TS} {}}}
\newcommand{\untaken}[1]{{\sf untaken}({#1})}
\newcommand{\taken}[1]{{\sf taken}({#1})}
\newcommand{\inqueue}[1]{{\sf inQ}({#1})}
\newcommand{\started}{{\rm started}}
\newcommand{\finished}{{\rm ended}}
\newcommand{\larg}{{\tt arg}}
\newcommand{\lres}{{\tt res}}
\definecolor{gray}{rgb}{0.8,0.8,0.8}
\definecolor{numcol}{rgb}{0.5,0.5,0.5}
\tiny\color{numcol},
\title{Proving Linearizability Using Partial Orders\\
(Extended Version)}
\titlerunning{Proving Linearizability Using Partial Orders}
\author{Artem Khyzha\inst{1} \and
Mike Dodds\inst{2} \and
Alexey Gotsman\inst{1} \and
Matthew Parkinson\inst{3}}
\institute{IMDEA Software Institute, Madrid, Spain
\and University of York, UK
\and Microsoft Research Cambridge, UK}
\newcommand{\ag}[1]{
	\ifdebug
	{\color{red}{\bf AG:  #1}}
	\fi
}
\newcommand{\artem}[1]{
	\ifdebug
	{\bf\color{red} Artem:} {\it #1}
	\fi
}
\DeclareTextFontCommand{\textbfit}{
  \fontseries\bfdefault 
  \itshape
}
\def\@listI{\leftmargin\leftmargini\parsep 1\p@ \@plus1\p@
\@minus\p@\topsep 1\p@ \@plus2\p@ \@minus0\p@\itemsep0\p@}\let\@listi
\begin{document}

\let\oldaddcontentsline\addcontentsline
\def\addcontentsline#1#2#3{}
\maketitle


\begin{abstract}

Linearizability is the commonly accepted notion of correctness for concurrent
data structures. It requires that any execution of the data structure is
justified by a linearization --- a linear order on operations satisfying
the data structure's sequential specification. Proving linearizability is often
challenging because an operation's position in the linearization order may
depend on future operations. This makes it very difficult to incrementally
construct the linearization in a proof. 

We propose a new proof method that can handle data structures with such
future-dependent linearizations. Our key idea is to incrementally construct not
a single linear order of operations, but a partial order that describes multiple
linearizations satisfying the sequential specification. This allows decisions
about the ordering of operations to be delayed, mirroring the behaviour of data
structure implementations. We formalise our method as a program logic based on
rely-guarantee reasoning, and demonstrate its effectiveness by verifying several
challenging data structures: the Herlihy-Wing queue, the TS queue and
the Optimistic set.

\end{abstract}



\section{Introduction}
\label{sec:intro}

Linearizability is a commonly accepted notion of correctness of concurrent data
structures. It matters for programmers using such data structures because it
implies {\em contextual refinement}: any behaviour of a program using a
concurrent data structure can be reproduced if the program uses its sequential
implementation where all operations are executed
atomically~\cite{filipovic-tcs}. This allows the programmer to soundly reason
about the behaviour of the program assuming a simple sequential specification of
the data structure.
\begin{wrapfigure}[8]{r}{.49\textwidth}
\vspace{-2.1em}
\hspace{-0.5em}
\includegraphics[scale=0.4,trim=1 13 1 1,clip]{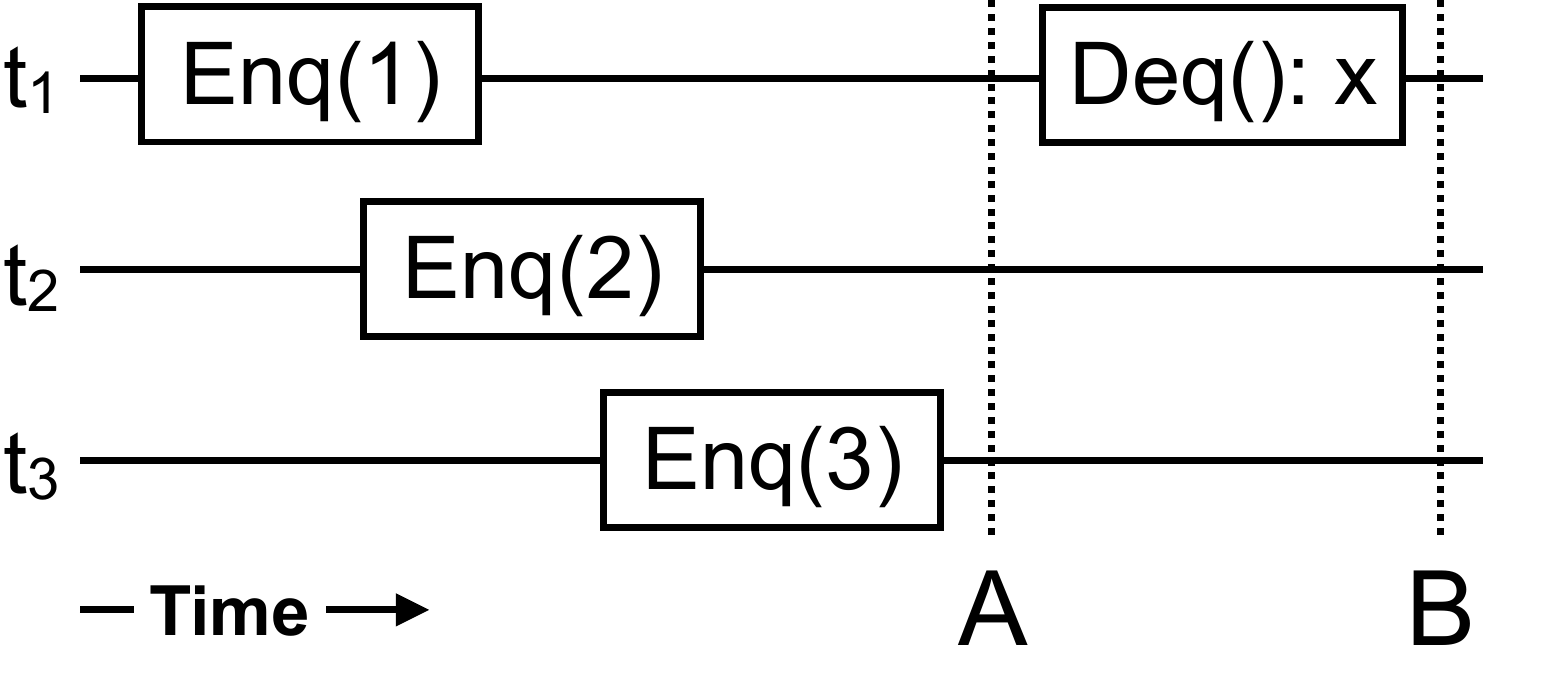}
\vspace{-2em}
\caption{Example execution.
\label{fig:tsqtrace}}
\end{wrapfigure}
Linearizability requires that for any execution of operations on the data
structure there exists a linear order of these operations, called a {\em
  linearization}, such that: {\em (i)} the linearization respects the order of
non-overlapping operations (the {\em real-time order}); and {\em (ii)} the
behaviour of operations in the linearization matches the sequential
specification of the data structure. To illustrate this, consider an execution
in Figure~\ref{fig:tsqtrace}, where three threads are accessing a queue.
Linearizability determines which values $x$ the dequeue operation is allowed to
return by considering the possible linearizations of this execution. Given {\em
  (i)}, we know that in any linearization the enqueues must be ordered before
the dequeue, and Enq(1) must be ordered before Enq(3). Given {\em (ii)}, a
linearization must satisfy the sequential specification of a queue, so the
dequeue must return the oldest enqueued value. Hence, the execution in
Figure~\ref{fig:tsqtrace} has three possible linearizations: [Enq(1); Enq(2);
Enq(3); Deq():1], [Enq(1); Enq(3); Enq(2); Deq():1] and [Enq(2); Enq(1); Enq(3);
Deq():2]. This means that the dequeue is allowed to return 1 or 2, but not 3.

For a large class of algorithms, linearizability can be proved by incrementally
constructing a linearization as the program executes. Effectively, one shows
that the program execution and its linearization stay in correspondence under
each program step (this is formally known as a {\em forward simulation}). The
point in the execution of an operation at which it is appended to the
linearization is called its \emph{linearization point}. This must occur
somewhere between the start and end of the operation, to ensure that the
linearization preserves the real-time order. For example, when applying the
linearization point method to the execution in Figure~\ref{fig:tsqtrace}, by
point (A) we must have decided if Enq(1) occurs before or after Enq(2) in the
linearization. Thus, by this point, we know which of the three possible
linearizations matches the execution. This method of establishing
linearizability is very popular, to the extent that most papers proposing new
concurrent data structures include a placement of linearization points. However,
there are algorithms that cannot be proved linerizable using the linearization
point method.

In this paper we consider several examples of such algorithms, including the
{\em time-stamped (TS) queue}~\cite{haas-thesis,dodds-popl15}---a recent
high-performance data structure with an extremely subtle correctness
argument. Its key idea is for enqueues to attach timestamps to values, and for
these to determine the order in which values are dequeued. As illustrated by the
above analysis of Figure~\ref{fig:tsqtrace}, linearizability allows concurrent
operations, such as Enq(1) and Enq(2), to take effect in any order. The TS queue
exploits this by allowing values from concurrent enqueues to receive
incomparable timestamps; only pairs of timestamps for non-overlapping enqueue
operations must be ordered. Hence, a dequeue can potentially have a choice of
the ``earliest'' enqueue to take values from. This allows concurrent dequeues to
go after different values, thus reducing contention and improving performance.

The linearization point method simply does not apply to the TS queue. In the
execution in Figure~\ref{fig:tsqtrace}, values 1 and 2 could receive
incomparable timestamps. Thus, at point (A) we do not know which of them will be
dequeued first and, hence, in which order their enqueues should go in the
linearization: this is only determined by the behaviour of dequeues later in the
execution. Similar challenges exist for other queue algorithms such as the
baskets queue~\cite{DBLP:conf/opodis/HoffmanSS07}, LCR queue~\cite{lcrq} and
Herlihy-Wing queue~\cite{linearizability}. In all of these algorithms, when an
enqueue operation returns, the precise linearization of earlier enqueue
operations is not necessarily known. Similar challenges arise in the
time-stamped stack~\cite{dodds-popl15} algorithm. We conjecture that our proof
technique can be applied to prove the time-stamped stack linearizable, and we
are currently working on a proof.




In this paper, we propose a new proof method that can handle algorithms where
incremental construction of linearizations is not possible. We formalise it as a
program logic, based on Rely-Guarantee~\cite{rg}, and apply it to give simple
proofs to the TS queue~\cite{dodds-popl15}, the Herlihy-Wing
queue~\cite{linearizability} and the Optimistic Set~\cite{hindsight}. The key
idea of our method is to incrementally construct not a single linearization of
an algorithm execution, but an \emph{abstract history}---a partially ordered
history of operations such that it contains the real-time order of the original
execution and {\em all} its linearizations satisfy the sequential specification.
By embracing partiality, we enable decisions about order to be delayed,
mirroring the behaviour of the algorithms. At the same time, we maintain the
simple inductive style of the standard linearization-point method: the proof of
linearizability of an algorithm establishes a simulation between its execution
and a growing abstract history. By analogy with linearization points, we call
the points in the execution where the abstract history is extended
\emph{commitment points}.
\begin{wrapfigure}[22]{r}{.37\textwidth}
\vspace{-2em}
\hspace{-0.8em}
\includegraphics[scale=0.35]{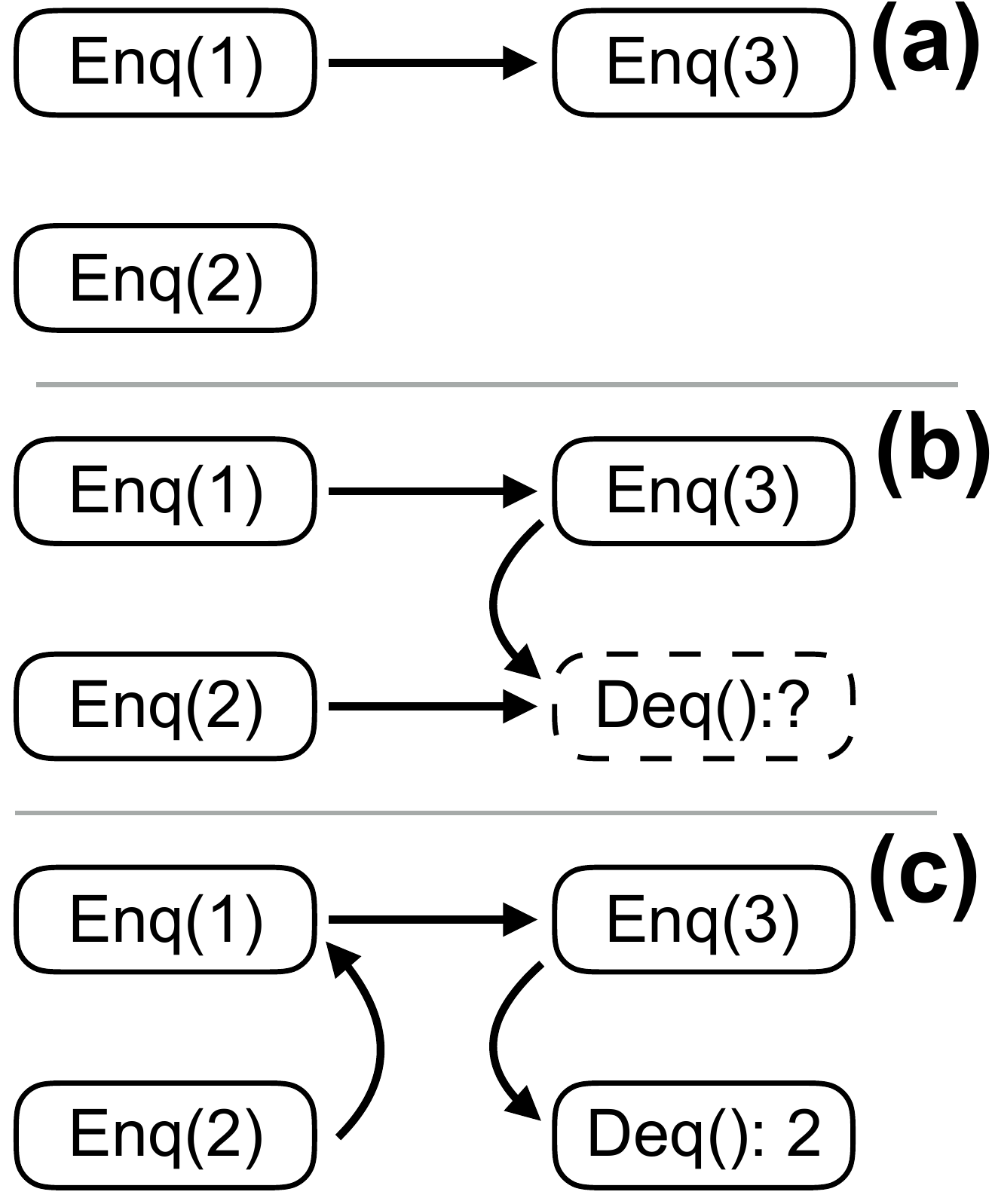}

\vspace{-0.5em}
\caption{Abstract histories constructed for prefixes of the execution in
  Figure~\ref{fig:tsqtrace}: (a) is at point (A); (b) is at the start of the
  dequeue operation; and (c) is at point (B). We omit the transitive
  consequences of the edges shown.
  \label{fig:partialhistory}}
\end{wrapfigure}
The extension can be done in several ways: (1) committing to perform an
operation; (2) committing to an order between previously unordered operatons;
(3) completing an operation.

Consider again the TS queue execution in Figure~\ref{fig:tsqtrace}. By point (A)
we construct the abstract history in Figure~\ref{fig:partialhistory}(a). The
edge in the figure is mandated by the real-time order in the original execution;
Enq(1) and Enq(2) are left unordered, and so are Enq(2) and Enq(3). At the start
of the execution of the dequeue, we update the history to the one in
Figure~\ref{fig:partialhistory}(b). A dashed ellipse represents an operation
that is not yet completed, but we have committed to performing it (case 1
above). When the dequeue successfully removes a value, e.g., 2, we update the
history to the one in Figure~\ref{fig:partialhistory}(c). To this end, we
complete the dequeue by recording its result (case 3). We also commit to an
order between the Enq(1) and Enq(2) operations (case 2). This is needed to
ensure that all linearizations of the resulting history satisfy the sequential
queue specification, which requires a dequeue to remove the oldest value in the
queue.

We demonstrate the simplicity of our method by giving proofs to challenging
algorithms that match the intuition for why they work.
Our method is also similar in spirit to the standard linearization point method.
Thus, even though in this paper we formulate the method as a program logic, we
believe that algorithm designers can also benefit from it in informal reasoning,
using abstract histories and commitment points instead of single linearizations
and linearization points.



\newcommand{\retType}{{\sf RetVals}}

\newcommand{\ids}[1]{{\sf id}({#1})}

\section{Linearizability, Abstract Histories and Commitment Points}
\label{sec:linearizability}

\newparagraph{Preliminaries.}
We consider a data structure that can be accessed concurrently via operations
$\op \in \opType$ in several threads, identified by $\tid \in \tidType$. Each
operation takes one argument and returns one value, both from a set
$\valueType$; we use a special value $\bot \in \valueType$ to model operations
that take no argument or return no value. Linearizability relates the observable
behaviour of an implementation of such a concurrent data structure to its
sequential specification~\cite{linearizability}. We formalise both of these by
sets of {\em histories}, which
are partially ordered sets of {\em events}, recording operations invoked on the
data structure. Formally, an event is of the form $\event = \sub{\eid}{(\tid,
\op, \args, \res)}$. It includes a unique identifier $\eid \in \eidType$ and
records an operation $\op \in \opType$ called by a thread $\tid \in \tidType$
with an argument $\args \in \valueType$, which returns a value $\res \in
\valueType \uplus \{ \Todo \}$. We use the special return value $\Todo$ for
events describing operations that have not yet terminated, and call such events
{\em uncompleted}. We denote the set of all events by $\eventType$. Given a set
$\events \subseteq \eventType$, we write $\events(\eid) = (\tid, \op, \args,
\res)$ if $\sub{\eid}{(\tid, \op, \args, \res)} \in \events$ and let
$\complete{\events}$ consist of all completed events from $\events$. We let
$\ids{\events}$ denote the set of all identifiers of events from $\events$.
Given an event identifier $\eid$, we also use $\tidof{\events(\eid)}$,
$\opof{\events(\eid)}$, $\argsof{\events(\eid)}$ and $\resof{\events(\eid)}$ to
refer to the corresponding components of the tuple $\events(\eid)$.
\begin{definition}\label{def:history}
  A {\em history}\footnote{
    For technical convenience, our notion of a history is different from the one
    in the classical linearizability definition~\cite{linearizability}, which
    uses separate events to denote the start and the end of an operation. We
    require that $\order$ be an interval order, we ensure that our notion is
    consistent with an interpretation of events as segments of time during which
    the corresponding operations are executed, with $\order$ ordering $\eid_1$
    before $\eid_2$ if $\eid_1$ finishes before $\eid_2$
    starts~\cite{interval-order}.
}
  is a pair $\history = (\events, \order)$, where $\events \subseteq \eventType$
  is a finite set of events with distinct identifiers and $\order \subseteq
  \ids{\events} \times \ids{\events}$ is a strict partial order (i.e.,
  transitive and irreflexive), called the {\em real-time order}. We require that
  for each $\tid \in \tidType$:
\begin{itemize}
\item events in $\tid$ are totally ordered by $\order$:
\\ $\forall \eid, \eidp \in \ids{\events} \ldotp \eid
  \neq \eidp \land \tidof{\events(\eid)} = \tidof{\events(\eidp)} = \tid
  \implies
  (\edge{\eid}{\order}{\eidp} \lor \edge{\eidp}{\order}{\eid})$;
\item only maximal events in $\order$ can be uncompleted:\\ 
$\forall \eid \,{\in}\,
  \ids{\events} \ldotp \forall \tid \,{\in}\, \tidType \ldotp
  \resof{\events(\eid)} = \Todo \implies
  \neg\exists \eidp \in \ids{\events} \ldotp \edge{\eid}{\order}{\eidp}$;
\item $\order$ is an interval order:\\
$\forall \eid_1, \eid_2, \eid_3, \eid_4 \ldotp \edge{\eid_1}{\order}{\eid_2} \land \edge{\eid_3}{\order}{\eid_4} \implies \edge{\eid_1}{\order}{\eid_4} \lor \edge{\eid_2}{\order}{\eid_3}$.
\end{itemize}
We let $\historyType$ be the set of all histories.
A history $(\events, \order)$ is {\em sequential}, written $\isSeq(\events,
\order)$, if $\ids{\events} = \complete{\events}$ and $\order$ is total on $\events$.
\end{definition}

Informally, $\edge{\eid}{\order}{\eidp}$ means that the operation recorded by
$E(\eid)$ completed before the one recorded by $E(\eidp)$ started. The
real-time order in histories produced by concurrent data structure
implementations may be partial, since in this case the execution of operations
may overlap in time; in contrast, specifications are defined using sequential
histories, where the real-time order is total.



\newparagraph{Linearizability.} Assume we are given a set of histories that can
be produced by a given data structure implementation (we introduce a programming
language for implementations and formally define the set of histories an
implementation produces in \S\ref{sec:lang}). Linearizability requires all
of these histories to be matched by a similar history of the data structure
specification (its {\em linearization}) that, in particular, preserves the
real-time order between events in the following sense: the real-time order of a
history $\history = (\events, \order)$ is {\em preserved} in a history
$\history' = (\events', \order')$, written $\refines{\history}{\history'}$, if
$\events = \events'$ and $\order \subseteq \order'$.

The full definition of linearizability is slightly more complicated due to the
need to handle uncompleted events: since operations they denote have not
terminated, we do not know whether they have made a change to the data
structure or not. To account for this, the definition makes all events in the
implementation history complete by discarding some uncompleted events and
completing the remaining ones with an arbitrary return value. Formally, an
event $\event = \sub{\eid}{(\tid, \op, \args, \res)}$ {\em can be completed}
to an event $\event' = \sub{\eid'}{(\tid', \op', \args', \res')}$, written
$\completes{e}{e'}$, if $\eid = \eid'$, $\tid = \tid'$, $\op = \op'$, $\args =
\args'$ and either $\res = \res' \neq \Todo$ or $\res' = \Todo$. A history
$\history = (\events, \order)$ {\em can be completed} to a history $\history' =
(\events', \order')$, written $\completes{\history}{\history'}$, if
$\ids{\events'} \subseteq \ids{\events}$, $\complete{\events} \subseteq
\complete{\events'}$, $\order \cap (\ids{\events'}
\times \ids{\events'}) = \order'$ and $\forall \eid \in \ids{\events'} \ldotp
\completes{\sub{\eid}{\events(\eid)}}{\sub{\eid}{\events'(\eid)}}$.
\begin{definition}\label{dfn:linearizability}
  A set of histories $\histories_1$ (defining the data structure
  implementation) is {\em linearized} by a set of sequential histories
  $\histories_2$ (defining its specification), written $\histories_1
  \sqsubseteq \histories_2$, if $\forall \history_1 \in \histories_1.\,
  \exists \history_2 \in \histories_2.\, \exists \history'_1.\,
  \completes{\history_1}{\history'_1} \wedge
  \refines{\history'_1}{\history_2}$.
\end{definition}


Let $\qhist$ be the set of sequential histories defining the behaviour of a
queue with $\opType = \{\enqOp, \deqOp\}$. Due to space constraints, we provide
its formal definition in the extended version of this paper \cite{ext}, but for
example, [Enq(2); Enq(1); Enq(3); Deq():2] $\in \qhist$ and [Enq(1); Enq(2);
Enq(3); Deq():2] $\not\in \qhist$.



\newparagraph{Proof method.} In general, a history of a data structure ($H_1$
in Definition~\ref{dfn:linearizability}) may have multiple linearizations
($H_2$) satisfying a given specification $\histories$. In our proof method, we
use this observation and construct a partially ordered history, an {\em abstract
history}, all linearizations of which belong to $\histories$.
\begin{definition}\label{dfn:abstracthistory}
  A history $\history$ is an {\em abstract history} of a specification given by
  the set of sequential histories $\histories$ if $\{\history' \mid \lfloor
  \history \rfloor \sqsubseteq \history' \land \isSeq(\history')\} \subseteq
  \histories$, where $\complete{(\events, \order)} = (\complete{\events}, \order
  \cap (\ids{\complete{\events}} \times \ids{\complete{\events}}))$. We denote
  this by $\abs(\history, \histories)$.
\end{definition}


We define the construction of an abstract history $\history = (E, R)$ by
instrumenting the data structure operations with auxiliary code that updates the
history at certain {\em commitment points} during operation execution. There are
three kinds of commitment points:
\begin{enumerate}[leftmargin=12pt]
\item When an operation $\op$ with an argument $a$ starts executing in a thread
$t$, we extend $E$ by a fresh event $[i: (t, \op, a, \Todo)]$, which we order in
$R$ after all events in $\complete{E}$.
\item At any time, we can add more edges to $R$.
\item By the time an operation finishes, we have to assign its return value to
its event in $E$.
\end{enumerate}

Note that, unlike Definition~\ref{dfn:linearizability},
Definition~\ref{dfn:abstracthistory} uses a particular way of completing an
abstract history $\history$, which just discards all uncompleted events using
$\lfloor - \rfloor$. This does not limit generality because, when constructing
an abstract history, we can complete an event (item 3) right after the
corresponding operation makes a change to the data structure, without waiting
for the operation to finish.

In \S\ref{sec:logic} we formalise our proof method as a program logic and show
that it indeed establishes linearizability. Before this, we demonstrate
informally how the obligations of our proof method are discharged on an example.






\newcommand{\myEid}{{\tt myEid}()}
\newcommand{\freshEid}{{\tt freshEid}()}
\newcommand{\myTid}{{\tt myTid}()}
\newcommand{\ghostrf}{\ghost_{\sf rf}}
\newcommand{\ghostts}{\ghost_{\sf ts}}

\newcommand{\ttcandpid}{{\tt cand\_pid}}
\newcommand{\ttcandtid}{{\tt cand\_tid}}
\newcommand{\ttcandts}{{\tt cand\_ts}}
\newcommand{\ttstartts}{{\tt start\_ts}}

\newcommand{\ttpid}{{\tt pid}}
\newcommand{\ttts}{{\tt ts}}
\newcommand{\tttid}{{\tt tid}}

\newcommand{\ttcand}{{\tt CAND}}
\newcommand{\ttpools}{{\tt pools}}

\section{Running Example: the Time-Stamped Queue}\label{sec:running}
\begin{figure}[t]
\begin{minipage}[t]{.48\textwidth}
\begin{lstlisting}[numbers=none]
PoolID insert(ThreadID t, Val v) {
  p := new PoolID();
  #$\pool{\tt t}$ := $\pool{\tt t} \cdot ({\tt p}, {\tt v}, \top)$#;
  return p;
}

Val remove(ThreadID t, PoolID p) {
  if (#$\exists \seq, \seq', {\tt v}, \tau \ldotp {}$#
      #$\pool{\tt t}$ = $\seq \cdot ({\tt p}, {\tt v}, \tau) \cdot \seq'$#) {
    #$\pool{\tt t}$ := $\seq \cdot \seq'$#;
    return v;
  } else return NULL;
}
\end{lstlisting}
\end{minipage}
\hfill
\begin{minipage}[t]{.48\textwidth}
\begin{lstlisting}[numbers=none]
#$({\tt PoolID} \times {\tt TS})$# getOldest(ThreadID t) {
  if (#$\exists {\tt p}, \tau \ldotp \pool{\tt t}$ = $({\tt p}, \_, \tau) \cdot \_$#)
    return #$({\tt p}, \tau)$#;
  else
    return #$({\tt NULL}, {\tt NULL})$#;
}

setTimestamp(ThreadID t,
                  PoolID p, TS #$\tau$#) {
  if (#$\exists \seq, \seq', {\tt v} \ldotp {}$#
      #$\pool{\tt t}$ = $\seq \cdot ({\tt p}, {\tt v}, \_) \cdot \seq'$#)
    #$\pool{\tt t}$# := #$\seq \cdot ({\tt p}, {\tt v}, \tau) \cdot \seq'$#;
}
\end{lstlisting}
\end{minipage}
\caption{Operations on abstract SP pools $\ttpools : \tidType \to {\sf
  Pool}$. All operations are atomic.
}
\label{fig:pool}
\end{figure}

We use the TS queue~\cite{haas-thesis} as our running example. Values in the
queue are stored in per-thread single-producer (SP) multi-consumer pools, and we
begin by describing this auxiliary data structure.


\newparagraph{SP pools.} SP pools have well-known linearizable
implementations~\cite{haas-thesis}, so
we simplify our presentation by using abstract pools with the atomic operations
given in Figure~\ref{fig:pool}.
This does not limit generality: since
linerarizability implies contextual refinement (\S\ref{sec:intro}),

\begin{wrapfigure}[16]{r}{.515\textwidth}
\vspace{0pt}
\begin{lstlisting}
enqueue(Val v) {
  atomic {
    PoolID node := insert(#$\myTid$#, v);$\label{line:tsq_insert}$
\end{lstlisting}
\begin{lstlisting}[firstnumber=4,backgroundcolor=\color{gray}]      
    #$\ghostts[\myEid] := \top;\label{line:tsq_ghostts1}$#
\end{lstlisting}
\begin{lstlisting}[firstnumber=5]         
  }
  TS timestamp := newTimestamp(); $\label{line:tsq_newts}$
  atomic {
    setTimestamp(#$\myTid$#, node, timestamp); $\label{line:tsq_setts}$
\end{lstlisting}
\begin{lstlisting}[firstnumber=9,backgroundcolor=\color{gray}]      
    #$\ghostts[\myEid] := {\tt timestamp};$# $\label{line:tsq_ghostts2}$
    #$\resof{\events(\myEid)} := \Done;$#
\end{lstlisting}
\begin{lstlisting}[firstnumber=11]      
  }
  return $\Done$;
}
\end{lstlisting}
\caption{
  The TS queue: enqueue. Shaded portions are auxiliary code used in the proof.
}
\label{fig:tsq_enqueue} 
\end{wrapfigure}
\noindent
properties
proved using the abstract
pools will stay valid for their linearizable
implementations. In the figure and in the following we denote irrelevant
expressions by~$\_$.


The SP pool of a thread contains a sequence of triples $(p, v, \tau)$, each
consisting of a unique identifier $\pid \in {\sf PoolID}$, a value $v \in
\valueType$ enqueued into the TS queue by the thread and the associated
timestamp $\tau \in \tsType$. The set of timestamps $\tsType$ is
partially ordered by $\tsless$, with a distinguished timestamp $\top$ that is
greater than all others. We let ${\tt pool}$ be the set of states of an
abstract SP pool. Initially all pools are empty.
The operations on SP pools are as follows:
\begin{itemize} 
\item {\tt insert(t,v)} appends a value {\tt v} to the back of the pool of
  thread {\tt t} and associates it with the special timestamp $\top$; it returns
  an identifier for the added element.
\item {\tt setTimestamp(t,p,$\tau$)} sets to $\tau$ the timestamp of the element
  identified by {\tt p} in the pool of thread {\tt t}.
\item {\tt getOldest(t)} returns the identifier and timestamp of the value from
  the front of the pool of thread {\tt t}, or $(\NULL, \NULL)$ if the pool is
  empty.
\item {\tt remove(t,p)} tries to remove a value identified by {\tt p} from the
  pool of thread {\tt t}. Note this can fail if some other thread removes the
  value first.
\end{itemize}
Separating {\tt insert} from {\tt setTimestamp} and {\tt getOldest} from {\tt
  remove} in the SP pool interface reduces the atomicity granularity, and
permits more efficient implementations.


\begin{figure}[t]
\hfill
\begin{minipage}[t]{0.99\textwidth}
\begin{lstlisting}[firstnumber=14]
Val dequeue() {
  Val ret := NULL;
\end{lstlisting}
\begin{lstlisting}[firstnumber=16,backgroundcolor=\color{gray}]
  #$\eidType$ $\ttcand$;#
\end{lstlisting}
\begin{lstlisting}[firstnumber=17]
  do {
    TS #\ttstartts# := newTimestamp(); $\label{line:tsq_startts}$
    PoolID pid, #\ttcandpid# := NULL;
    TS ts, #\ttcandts# := $\top$;
    ThreadID #\ttcandtid#;
    for each k in 1..NThreads do { $\label{line:tsq_loop_begin}$
      atomic {
        (pid, ts) := getOldest(k); $\label{line:tsq_getoldest}$
\end{lstlisting}
\begin{lstlisting}[firstnumber=25,backgroundcolor=\color{gray}]  
        #$\order$ := ($\order \cup \{ (e, \myEid) \mid
        e \in \ids{\complete{\events}} \cap
          \inqueue{\ttpools, \events, \ghostts}$ \label{line:tsq_scan}# 
                                      #${} \land
        \lnot(\ttstartts \tsless \ghostts(e))\})^+;$ \label{line:tsq_scan_end}#
\end{lstlisting}
\begin{lstlisting}[firstnumber=26]           
      }
      if (pid ${\neq}$ NULL && ${\tt ts} \tsless \ttcandts$ && #$\lnot$#(#$\ttstartts \tsless {\tt ts}$#)) { $\label{line:tsq_compare}$
        #(\ttcandpid, \ttcandts, \ttcandtid)# := (pid, ts, k);
\end{lstlisting}
\begin{lstlisting}[firstnumber=31,backgroundcolor=\color{gray}]        
        #$\ttcand$ := \getEvent($\events$, $\ghostts$, \ttcandtid, \ttcandts);# $\label{line:tsq_getevent}$
\end{lstlisting}
\begin{lstlisting}[firstnumber=32]        
      }
    }  $\label{line:tsq_loop_end}$
    if (#\ttcandpid# #$\neq$# NULL) $\label{line:tsq_checknull}$
      atomic { $\label{line:tsq_remove}$
        ret := remove(#\ttcandtid#, #\ttcandpid#);
\end{lstlisting}
\begin{lstlisting}[firstnumber=37,backgroundcolor=\color{gray}]          
        if (#${\tt ret} \neq \NULL$#) {
          #$\resof{\events(\myEid)}$ := ret;#
          #$\order$ := ($\order \cup
%            \{ (\ttcand, \myEid) \}$#
%                  #${}\cup
                  \{ (\ttcand, e) \mid
                    e \in \inqueue{\ttpools, \events, \ghostts} \}$#
                  #${}\cup
                  \{ (\myEid, d) \mid
                            \opof{\events(d)} = \deqOp \land
                              d \in \ids{\incomplete{\events}} \}
                  )^+;$#
        }
\end{lstlisting}
\begin{lstlisting}[firstnumber=43]        
      } $\label{line:tsq_remove_ends}$
  } while (ret = NULL);
  return ret;
}
\end{lstlisting}
\end{minipage}
\caption{The TS queue: dequeue. Shaded portions are auxiliary code used in the
proof.
}
\label{fig:tsq_dequeue} 
\end{figure}

\newparagraph{Core TS queue algorithm.}  Figures~\ref{fig:tsq_enqueue}
and~\ref{fig:tsq_dequeue} give the code for our version of the TS queue. Shaded
portions are auxiliary code needed in the linearizability proof to update the
abstract history at commitment points; it can be ignored for now. In the overall
TS queue, enqueuing means adding a value with a certain timestamp to the pool of
the current thread, while dequeuing means searching for the value with the
minimal timestamp across per-thread pools and removing it.



In more detail, the ${\tt enqueue}(v)$ operation first inserts the value $v$
into the pool of the current thread, defined by {\tt myTid}
(line~\ref{line:tsq_insert}). At this point the value $v$ has the default,
maximal timestamp $\top$. The code then generates a new timestamp using {\tt
  newTimestamp} and sets the timestamp of the new value to it
(lines~\ref{line:tsq_newts}-\ref{line:tsq_setts}). We describe an implementation
of {\tt newTimestamp} later in this section. The key property that it ensures is
that out of two non-overlapping calls to this function, the latter returns a
higher timestamp than the former; only concurrent calls may generate
incomparable timestamps. Hence, timestamps in each pool appear in the ascending
order.





The ${\tt dequeue}$ operation first generates a timestamp $\ttstartts$ at
line~\ref{line:tsq_startts}, which it further uses to determine a consistent
snapshot of the data structure. After generating $\ttstartts$, the operation
iterates through per-thread pools, searching for a value with a minimal
timestamp (lines \ref{line:tsq_loop_begin}--\ref{line:tsq_loop_end}). The search
starts from a random pool, to make different threads more likely to pick
different elements for removal and thus reduce contention. The pool identifier
of the current candidate for removal is stored in $\ttcandpid$, its timestamp in
$\ttcandts$ and the thread that inserted it in $\ttcandtid$. On each iteration
of the loop, the code fetches the earliest value enqueued by thread ${\tt k}$
(line \ref{line:tsq_getoldest}) and checks whether its timestamp is smaller than
the current candidate's $\ttcandts$ (line~\ref{line:tsq_compare}). If the
timestamps are incomparable, the algorithm keeps the first one (either would be
legitimate). Additionally, the algorithm never chooses a value as a candidate if
its timestamp is greater than $\ttstartts$, because such values are not
guaranteed to be read in a consistent manner.

If a candidate has been chosen once the iteration has completed, the code tries
to remove it (line~\ref{line:tsq_remove}). This may fail if some other thread
got there first, in which case the operation restarts. Likewise, the algorithm
restarts if no candidate was identified (the full algorithm
in~\cite{haas-thesis} includes an emptiness check, which we omit for
simplicity).




\newparagraph{Timestamp generation.}
The TS queue requires that sequential calls to {\tt newTimestamp} generate
ordered timestamps. This ensures that the two sequentially enqueued values
cannot be dequeued out of order. However, concurrent calls to {\tt newTimestamp}
may generate incomparable timestamps. This is desirable because it increases
flexibility in choosing which value to dequeue, reducing contention.


\begin{wrapfigure}[12]{r}{0.42\textwidth}
\vspace{-22.5pt}
\hfill
\begin{minipage}{0.395\textwidth}
\begin{lstlisting}[firstnumber=37]
int counter = 1;

TS newTimestamp() {
  int ts = counter;$\label{line:read_count}$
  TS result;
  if (CAS(counter, ts, ts+1))$\label{line:ts_CAS}$ 
    result = (ts, ts); $\label{line:ts_eq}$
  else
    result = (ts, counter-1);$\label{line:ts_new}$
  return result;
}
\end{lstlisting}
\end{minipage}
\vspace{-5pt}
\caption{
Timestamp generation algorithm.
}
\label{fig:ts-CAS} 
\end{wrapfigure}

There are a number of implementations of {\tt newTimestamp} satisfying the above
requirements~\cite{dodds-popl15}. For concreteness, we consider the
implementation given in Figure~\ref{fig:ts-CAS}. Here a timestamp is either
$\top$ or a pair of integers $(s,e)$, representing a time interval. In every
timestamp $(s, e)$, $s \leq e$. Two timestamps 
are considered ordered $(s_1, e_1) \tsless (s_2, e_2)$ if $e_1 < s_2$, i.e., if
the time intervals do not overlap. Intervals are generated with the help of a
shared ${\tt counter}$. The algorithm reads the counter as the start of the
interval and attempts to atomically increment it with a CAS
(lines~\ref{line:read_count}-\ref{line:ts_CAS}), which is a well-known atomic
compare-and-swap operation. It atomically reads the counter and, if it still
contains the previously read value {\tt ts}, updates it with the new
timestamp ${\tt ts}+1$ and returns $\TRUE$; otherwise, it does nothing
and returns $\FALSE$. If CAS succeeds, then the algorithm takes the interval
start and end values as equal (line~\ref{line:ts_eq}). If not, some other
thread(s) increased the counter. The algorithm reads the counter again and
subtracts 1 to give the end of the interval (line~\ref{line:ts_new}). Thus,
either the current call to {\tt newTimestamp} increases the counter, or some
other thread does so. In either case, subsequent calls will generate timestamps
greater than the current one.

This timestamping algorithm allows concurrent enqueue operations in
Figure~\ref{fig:tsqtrace} to get incomparable timestamps. Then the dequeue may
remove either $1$ or $2$ depending on where it starts traversing the
pools\footnote{Recall that the randomness is required to reduce contention}
(line~\ref{line:tsq_loop_begin}). As we explained in
\S\ref{sec:intro}, this makes the standard method of linearization point
inapplicable for verifying the TS queue.



\section{The TS Queue: Informal Development}\label{sec:informal}

In this section we explain how the abstract history is updated at the commitment
points of the TS Queue and justify informally why these updates preserve the key
property of this history---that all its linearizations satisfy the sequential
queue specification. We present the details of the proof of the TS queue in
\S\ref{sec:details}.

\newparagraph{Ghost state and auxiliary definitions.}  To aid in constructing
the abstract history $(\events, \order)$, we instrument the code of the
algorithm to maintain a piece of ghost state---a partial function $\ghostts :
\eidType \rightharpoonup \tsType$. Given the identifier $\eid$ of an event
$\events(\eid)$ denoting an ${\tt enqueue}$ that has inserted its value into a
pool, $\ghostts(\eid)$ gives the timestamp currently associated with the value.
The statements in lines \ref{line:tsq_ghostts1} and \ref{line:tsq_ghostts2} in
Figure~\ref{fig:tsq_enqueue} update $\ghostts$ accordingly. These statements use
a special command $\myEid$ that returns the identifier of the event associated
with the current operation.

As explained in \S\ref{sec:running}, the timestamps of values in each pool
appear in strictly ascending order. As a consequence, all timestamps assigned by
$\ghostts$ to events of a given thread $\tid$ are distinct, which is formalised
by the following property:
\[
\forall \eid, \eidp \ldotp \eid \neq \eidp \land \tidof{\events(\eid)} =
\tidof{\events(\eidp)} \land \eid, \eidp \in \dom{\ghostts}
\implies \ghostts(\eid) \neq \ghostts(\eidp) 
\]
Hence, for a given thread $\tid$ and a timestamp $\tau$, there is at most one
enqueue event in $\events$ that inserted a value with the timestamp $\tau$ in
the pool of a thread $\tid$. In the following, we denote the identifier of this
event by $\getEvent(\events, \ghostts, \tid, \tau)$ and let the set of the
identifiers of such events for all values currently in the pools be
$\inqueue{{\tt pools},
\events, \ghostts}$:
\[
\inqueue{{\tt pools}, \events, \ghostts}
\triangleq
\{ \getEvent(\events, \ghostts, \tid, \tau) \mid \exists \pid \ldotp
{\tt pools}(\tid) = \_ \cdot (\pid, \_, \tau) \cdot \_
\}
\]

\newparagraph{Commitment points and history updates.} We further instrument the
code with statements that update the abstract history at commitment points,
which we now explain. As a running example, we use the execution in
Figure~\ref{fig:tsqtrace2}, extending that in Figure~\ref{fig:tsqtrace}. As we
noted in \S\ref{sec:linearizability}, when an operations starts, we
automatically add a new uncompleted event to $E$ to represent this operation and
order it after all completed events
\begin{wrapfigure}[9]{r}{.48\textwidth}
\vspace{-20pt}
\hspace{-0.5em}
\includegraphics[scale=0.4,trim=1 13 1 1,clip]{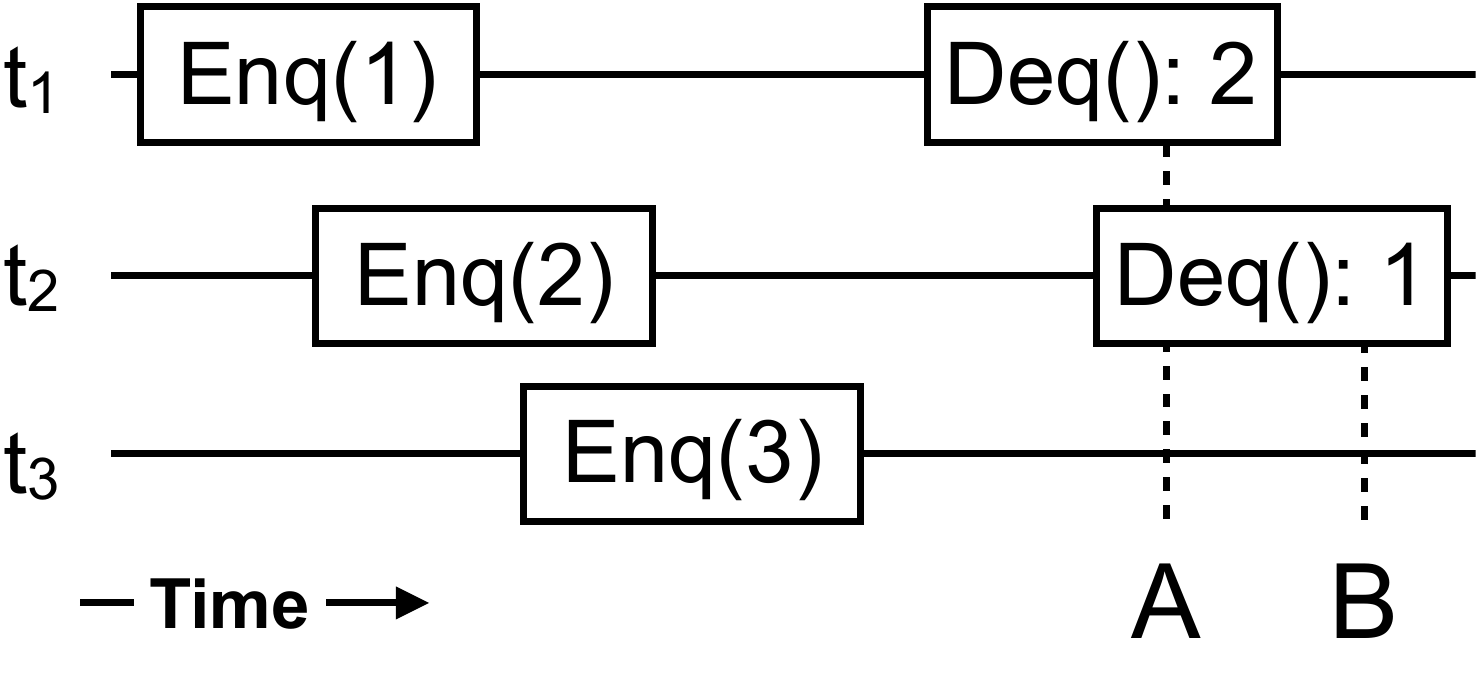}

\vspace{-10pt}
\caption{Example execution extending Figure~\ref{fig:tsqtrace}. Dotted lines indicate commitment points at lines~\ref{line:tsq_remove}--\ref{line:tsq_remove_ends} of the dequeues.
\label{fig:tsqtrace2}}
\end{wrapfigure}
\noindent
 in $\order$. For example, before the start of
\textrm{Enq(3)} in the execution of Figure~\ref{fig:tsqtrace2}, the abstract
history contains two events \textrm{Enq(1)} and \textrm{Enq(2)} and no edges in
the real-time order. At the start of \textrm{Enq(3)} the history gets
transformed to that in Figure~\ref{fig:tsqcommit}(a). The commitment point at
line~\ref{line:tsq_setts} in Figure~\ref{fig:tsq_enqueue} completes the enqueue
by giving it a return value $\Done$, which results in the abstract history in
Figure~\ref{fig:tsqcommit}(b).

\begin{figure}[t]
\includegraphics[width=\textwidth,trim=1 1 1 1,clip]{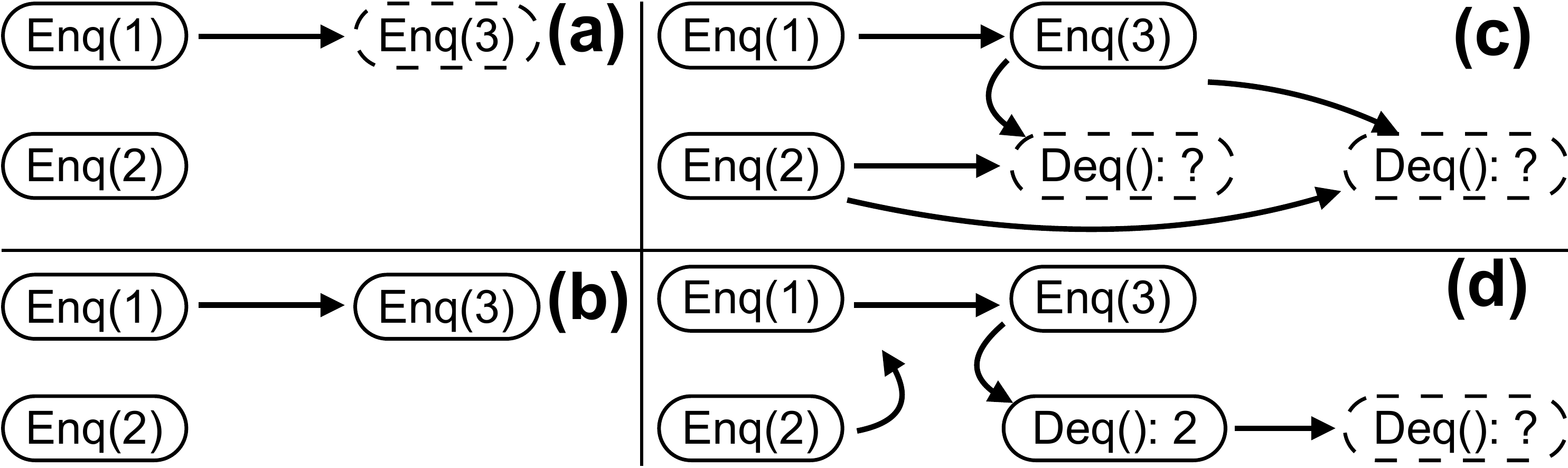}
\caption{Changes to the abstract history of the execution in
Figure~\ref{fig:tsqtrace2}.
\label{fig:tsqcommit}}
\end{figure}



Upon a dequeue's start, we similarly add an event representing it. Thus, by
point (A) in Figure~\ref{fig:tsqtrace2}, the abstract history is as shown in
Figure~\ref{fig:tsqcommit}(c). At every iteration ${\tt k}$ of the loop, the
dequeue performs a commitment point at
lines~\ref{line:tsq_scan}--\ref{line:tsq_scan_end}, where we order enqueue
events of values currently present in the pool of a thread ${\tt k}$ before the
current dequeue event. Specifically, we add an edge $(e, \myEid)$ for each
identifier $e$ of an enqueue event whose value is in the ${\tt k}$'s pool and
whose timestamp is not greater than the dequeue's own timestamp $\ttstartts$.
Such ordering ensures that in all linearizations of the abstract history, the
values that the current dequeue observes in the pool according to the algorithm
are also enqueued in the sequential queue prior to the dequeue. In particular,
this also ensures that in all linearizations, the dequeue returns a value that
has already been inserted.

The key commitment point in dequeue occurs in
lines~\ref{line:tsq_remove}--\ref{line:tsq_remove_ends}, where the abstract
history is updated if the dequeue successfully removes a value from a pool. The
ghost code at line~\ref{line:tsq_getevent} stores the event identifier for the
enqueue that inserted this value in $\ttcand$. At the commitment point we first
complete the current dequeue event by assigning the value removed from a pool as
its return value. This ensures that the dequeue returns the same value in the
concrete execution and the abstract history. Finally, we order events in the
abstract history to ensure that all linearizations of the abstract history
satisfy the sequential queue specification. To this end, we add the following
edges to $\order$ and then transitively close it:
\begin{enumerate}
\item $(\ttcand, e)$ for each identifier $e$ of an enqueue event whose value
  is still in the pools. This ensures that the dequeue removes the oldest value
  in the queue.
\item $(\myEid, d)$ for each identifier $d$ of an uncompleted dequeue event.
  This ensures that dequeues occur in the same order as they remove values from
  the queue.
\end{enumerate}
At the commitment point (A) in Figure~\ref{fig:tsqtrace2} the abstract history
gets transformed from the one in Figure~\ref{fig:tsqcommit}(c) to the one in
Figure~\ref{fig:tsqcommit}(d).


\newcommand{\seqc}{\mathbin{;}}
\newcommand{\com}{C}
\newcommand{\cskip}{{\sf skip}}

\newcommand{\pcom}{\alpha}
\newcommand{\pcomType}{{\sf PCom}}

\newcommand{\intp}[3]{\llbracket {#2} \rrbracket_{#1} ({#3})}

\newcommand{\comType}{{\sf Com}}
\newcommand{\sttrans}[5]{\langle {#1}, {#2} \rangle
\mathrel{{\longrightarrow}_{#3}} \langle {#4}, {#5} \rangle}

\newcommand{\astate}{\sigma}

\newcommand{\myfrac}[2]{\genfrac{}{}{0.5pt}{}{\displaystyle #1}{\displaystyle #2}}

\newcommand{\cdt}{D}
\newcommand{\adt}{\mathbb{D}}

\newcommand{\tp}{c}
\newcommand{\tpType}{{\sf Cont}}
\newcommand{\idle}{{\sf idle}}

\newcommand{\hsem}[1]{\mathcal{H}(#1)}

\newcommand{\htrans}[3]{
\langle {#1} \rangle
\mathrel{{\twoheadrightarrow}_{#2}}
\langle {#3} \rangle
}

\newcommand{\mhtrans}[3]{
\langle {#1} \rangle
\mathrel{{\twoheadrightarrow}^*_{#2}}
\langle {#3} \rangle
}

\newcommand{\locType}{{\sf Loc}}

\section{Programming Language}
\label{sec:lang}

To formalise our proof method, we first introduce a programming language for
data structure implementations. This defines such implementations by functions
$\cdt : \opType \to \comType$ mapping operations to {\em commands} from a set
$\comType$. The commands, ranged over by $C$, are written in a simple
while-language, which includes {\em atomic} commands $\alpha$ from a set
$\pcomType$ (assignment, CAS, etc.) and standard control-flow constructs. To
conserve space, we describe the precise syntax in the extended version of this paper \cite{ext}.

Let $\locType \subseteq \valueType$ be the set of all memory locations. We let
$\stateType = \locType \to \valueType$ be the set of all states of the data
structure implementation, ranged over by $\state$. Recall from
\S\ref{sec:linearizability} that operations of a data structure can be called
concurrently in multiple threads from $\tidType$. For every thread $\tid$, we
use distinguished locations $\larg[\tid], \lres[\tid] \in \locType$ to store an
argument, respectively, the return value of an operation called in this thread.

We assume the semantics of each atomic command $\pcom \in \pcomType$ given by a
non-deterministic state transformers $\db{\pcom}_\tid : \stateType \to
\powerset{\stateType}$, $\tid \in \tidType$. For a state $\state$,
$\intp{\tid}{\pcom}{\state}$ is the set of states resulting from thread
$\tid$ executing $\pcom$ atomically in $\state$. We then lift this semantics to
a sequential small-step operational semantics of arbitrary commands from
$\comType$: $\sttrans{\com}{\state}{\tid}{\com'}{\state'}$. Again, we omit the
standard rules of the semantics; see \cite{ext}.

We now define the set of histories produced by a data structure implementation
$\cdt$, which is required by the definition of linearizability
(Definition~\ref{dfn:linearizability}, \S\ref{sec:linearizability}). Informally,
these are the histories produced by threads repeatedly invoking data structure
operations in any order and with any possible arguments (this can be thought of
as running the data structure implementation under its {\em most general
  client}~\cite{linown}). We define this formally using a concurrent small-step
semantics of the data structure $D$ that also constructs corresponding
histories: ${\twoheadrightarrow}_{\cdt} \subseteq (\tpType \times \stateType
\times \historyType)^2$, where $\tpType = \tidType \to (\comType \uplus
\{\idle\})$. Here a function $\tp \in \tpType$ characterises the progress of an
operation execution in each thread $\tid$: $\tp(\tid)$ gives the continuation of
the code of the operation executing in thread $\tid$, or $\idle$ if no operation
is executing. The relation ${\twoheadrightarrow}_{\cdt}$ defines how a step of
an operation in some thread transforms the data structure state and the history:
\[
\begin{array}{@{}c@{}}
\myfrac{
  \eid \notin \ids{\events}
  \quad \args \in \valueType
    \quad \events' = \events \sub{\eid}{(\tid, \op, \args, \Todo)}
    \quad \order' = \order \cup \{ (\eidp, \eid) \mid \eidp \in \complete{\events} \}
}{
  \htrans{\tp \sub{\tid}{\idle}, \state, (\events, \order)}
         {\cdt}
         {\tp \sub{\tid}{\cdt(\op)}, \state \sub{\larg[\tid]}{\args}, (\events', \order')}
}
\\[10pt]
\myfrac{
  \sttrans{\com}{\state}{\tid}{\com'}{\state'}
}{
  \htrans{\tp \sub{\tid}{\com}, \state, (\events, \order)}
         {\cdt} 
         {\tp \sub{\tid}{\com'}, \state', (\events, \order)}
}
\\[10pt]
\myfrac{
  \eid = {\sf last}(\tid, (\events, \order))
  \quad \events(\eid) = (\tid, \op, \args, \Todo)
  \quad \events' = \events \sub{\eid}{(\tid, \op, \args, \state(\lres[\tid]))}
}{
  \htrans{\tp \sub{\tid}{\cskip}, \state, (\events, \order)}
         {\cdt}
         {\tp \sub{\tid}{\idle}, \state, (\events', \order)}
}
\end{array}
\]
First, an idle thread $\tid$ may call any operation $\op \in \opType$ with any
argument $\args$. This sets the continuation of thread $\tid$ to $\cdt(\op)$,
stores $\args$ into $\larg[\tid]$, adds a new event $i$ to the history, ordered
after all completed events. Second, a thread $\tid$ executing an operation may
do a transition allowed by the sequential semantics of the operation's
implementation. Finally, when a thread $\tid$ finishes executing an operation,
as denoted by a continuation $\cskip$, the corresponding event is completed with
the return value in $\lres[\tid]$. The identifier ${\sf last}(t, (E, R))$ of
this event is determined as the last one in $E$ by thread $t$ according to $R$:
as per Definition~\ref{def:history}, events by each thread are totally ordered
in a history, ensuring that ${\sf last}(\tid, \history)$ is well-defined.


Now given an initial state
$s_0 \in \stateType$, we define the set of histories of a data structure
$\cdt$ as 
$\hsem{\cdt, \state_0} = \{
  \history \mid
  \mhtrans{(\lambda \tid \ldotp \idle), \state_0, (\emptyset, \emptyset)}
         {\cdt}
         {\_, \_, \history} \}$.
We say that a data structure $(\cdt, \state_0)$ is {\em linearizable} with
respect to a set of sequential histories $\histories$ if $\hsem{\cdt,
\state_0} \sqsubseteq \histories$ (Definition~\ref{dfn:linearizability}).



\newcommand{\conf}{\kappa}
\newcommand{\Config}{{\sf Config}}

\newcommand{\lP}{P}
\newcommand{\lQ}{Q}

\newcommand{\hupd}[2]{{#1} \leadsto {#2}}
\newcommand{\mhupd}[2]{{#1} \mathrel{{\leadsto}^*} {#2}}

\newcommand{\spec}[5]{
{#1} \vdash_{#2}
\left \{#3 \right \}\, {#4}\,\left\{#5\right\}
}

\newcommand{\semj}[5]{
{#1}\vDash_{#2}
\left \{#3 \right \}
\,{#4}\,
\left \{ #5 \right \}
}

\newcommand{\addevent}[3]{
\left \langle {#2} \right \rangle
\dashrightarrow_{#1}
\left \langle {#3} \right \rangle
}

\newcommand{\assnType}{{\sf Assn}}
\newcommand{\lvarsType}{{\sf LVars}}

\section{Logic}\label{sec:logic}

We now formalise our proof method as a Hoare logic based on
rely-guarantee~\cite{rg}. We make this choice to keep presentation simple; our
method is general and can be combined with more advanced methods for reasoning
about concurrency~\cite{rgsep,CAP,turon-icfp14}.

Assertions $P, Q \in \assnType$ in our logic denote sets of {\em
  configurations} $\conf \in \Config = \stateType \times \historyType \times
\ghostType$, relating the data structure state, the abstract history and the
ghost state from a set $\ghostType$. The latter can be chosen separately for
each proof; e.g., in the proof of the TS queue in \S\ref{sec:informal} we used
$\ghostType = \eidType \to \tsType$.  We do not prescribe a particular syntax
for assertions, but assume that it includes at least the first-order logic, with
a set $\lvarsType$ of special {\em logical variables} used in specifications
and not in programs. We assume a function $\evalf{-}{-} : \assnType \times
(\lvarsType \to \valueType) \to \powerset{\Config}$ such that
$\evalf{P}{\lint}$ gives the denotation of an assertion $P$ with respect to an
interpretation $\lint : \lvarsType \to \valueType$ of logical variables.

Rely-guarantee is a {\em compositional} verification method: it allows reasoning
about the code executing in each thread separately under some assumption on its
environment, specified by a {\em rely}. In exchange, the thread has to ensure
that its behaviour conforms to a {\em guarantee}. Accordingly, judgements of our
logic take the form $\spec{\rely, \guar}{\tid}{\lP}{\com}{\lQ}$, where $\com$
is a command executing in thread $t$, $\lP$ and $\lQ$ are Hoare pre- and
post-conditions from $\assnType$, and $\rely, \guar \subseteq \Config^2$ are
relations defining the rely and the guarantee. Informally, the judgement states
that $\com$ satisfies the Hoare specification $\{\lP\} \_ \{\lQ\}$ and changes
program configurations according to $\guar$, assuming that concurrent threads
change program configurations according to $\rely$.



\begin{figure}[t]
\[
\myfrac{
\forall \lint \ldotp
\semj{\guar}{\tid}{\evalf{\lP}{\lint}}{\pcom}{\evalf{\lQ}{\lint}} 
\land {\sf stable}(\evalf{\lP}{\lint}, \rely)
\land {\sf stable}(\evalf{\lQ}{\lint}, \rely)
}{
  \spec{\rely, \guar}{\tid}{\lP}{\pcom}{\lQ}
}
\]
where for $p, q \in \powerset{\Config}$:
\[
{\sf stable}(p, \rely) \triangleq
\forall \conf, \conf' \ldotp
\conf \in p \land
(\conf, \conf') \in \rely
\implies
\conf' \in p
\]
\begin{multline*}
\semj{\guar}{\tid}{p}{\pcom}{q} \triangleq
\forall \state, \state', \history, \ghost \ldotp 
(\state, \history, \ghost) \in p
\land
\state' \in \intp{\tid}{\pcom}{\state} \implies
{} \\
\exists \history', \ghost' \ldotp
(\state', \history', \ghost') \in q \land
\mhupd{\history}{\history'} \land
((\state, \history, \ghost), (\state', \history', \ghost')) \in \guar
\end{multline*}
and for $(E, R), (E', R') \in \historyType$:
\begin{multline*}
\hupd{(\events, \order)}{(\events', \order')}
\triangleq
(\events = \events' \land \order \subseteq \order')
\lor{}\\
(\exists \eid, \tid, \op, \args, \res \ldotp
(\forall \eidp \ldotp \eidp \,{\neq}\, \eid \implies \events(\eidp) \,{=}\, \events'(\eidp)) \land{} \\
\events(\eid) \,{=}\, (\tid, \op, \args, \Todo) \land
\events'(\eid) \,{=}\, (\tid, \op, \args, \res))
\end{multline*}

\vspace{-10pt}
\caption{Proof rule for primitive commands.}
\label{fig:prim}
\end{figure}

Our logic includes the standard Hoare proof rules for reasoning about sequential
control-flow constructs, which we defer to \cite{ext} due to space
constraints. We now explain the rule for atomic commands in
Figure~\ref{fig:prim}, which plays a crucial role in formalising our proof
method. The proof rule derives judgements of the form $\spec{\rely,
\guar}{\tid}{\lP}{\pcom}{\lQ}$. The rule takes into account possible
interference from concurrent threads by requiring the denotations of $\lP$ and
$\lQ$ to be {\em stable} under the rely $\rely$, meaning that they are preserved
under transitions the latter allows. The rest of the requirements are expressed
by the judgement $\semj{\guar}{\tid}{p}{\pcom}{q}$. This requires that for any
configuration $(\state, \history, \ghost)$ from the precondition denotation $p$
and any data structure state $\state'$ resulting from thread $\tid$ executing
$\pcom$ in $\state$, we can find a history $\history'$ and a ghost state
$\ghost'$ such that the new configuration $(\state', \history', \ghost')$
belongs to the postcondition denotation $q$. This allows updating the history
and the ghost state (almost) arbitrarily, since these are only part of the proof
and not of the actual data structure implementation; the shaded code in
Figures~\ref{fig:tsq_enqueue} and~\ref{fig:tsq_dequeue} indicates how we perform
these updates in the proof of the TS queue. Updates to the history, performed
when $\pcom$ is a commitment point, are constrained by a relation ${\leadsto}
\subseteq \historyType^2$, which only allows adding new edges to the real-time
order or completing events with a return value. This corresponds to commitment
points of kinds 2 and 3 from \S\ref{sec:linearizability}. Finally, as is usual
in rely-guarantee, the judgement $\semj{\guar}{\tid}{p}{\pcom}{q}$ requires that
the change to the program configuration be allowed by the guarantee $\guar$.


Note that ${\leadsto}$ does not allow adding new events into histories
(commitment point of kind 1): this happens automatically when an operation is
invoked. In the following, we use a relation ${\dashrightarrow}_\tid \subseteq
\Config^2$ to constrain the change to the program configuration upon an
operation invocation in thread $t$:
\[
\begin{array}{@{}l@{}}
\addevent{\tid}
  {\state, (\events, \order), \ghost}
  {\state', (\events', \order'), \ghost'}
\iff
(\forall l \in {\sf Loc} \ldotp l \neq \larg[\tid] \implies
\state(l) = \state'(l)) \\ \hfill {} \land
\exists \eid \notin \ids{\events} \ldotp
\events' = \events  \uplus \{[i : t, \_, \_, \Todo]\}
\\ \hfill {} \land
\order' = (\order \cup \{ (\eidp, \eid) \mid \eidp \in \complete{\events}\})
\land \ghost = \ghost'
\end{array}
\]
Thus, when an operation is invoked in thread $\tid$, $\larg[\tid]$ is
overwritten by the operation argument and an uncompleted event associated with
thread $\tid$ and a new identifier $\eid$ is added to the history; this event is
ordered after all completed events, as required by our proof method
(\S\ref{sec:linearizability}).

The rule for primitive commands and the standard Hoare logic proof rules allow
deriving judgements about the implementations $\cdt(\op)$ of every operation
$\op$ in a data structure $\cdt$. The following theorem formalises the
requirements on these judgements sufficient to conclude the linearizability of
$\cdt$ with respect to a given set of sequential histories $\histories$. The
theorem uses the following auxiliary assertions, describing the event
corresponding to the current operation $\op$ in a thread $\tid$ at the start
and end of its execution (${\sf last}$ is defined in \S\ref{sec:lang}):
\[
\begin{array}{r@{\ }c@{\ }l}
\evalf{\started_{\mathcal{I}}(\tid, \op)}{\lint}
&=& \{ (\state, (\events, \order), \ghost) \mid 
\events({\sf last}(\tid, (\events, \order))) = (\tid, \op, \state(\larg[\tid]), \Todo)
\\ && \hfill
{} \land
\exists \conf \in \evalf{\mathcal{I}}{\lint} \ldotp
\addevent{\tid}
  {\conf}
  {\state, (\events, \order), \ghost}
\};
\\
\evalf{\finished(\tid, \op)}{\lint}
&=& \{(\state, (\events, \order), \ghost) \mid
  \events({\sf last}(\tid, (\events, \order))) = (\tid, \op, \_, \state(\lres[\tid]))\}.
\end{array}
\]
The assertion $\started_{\mathcal{I}}(\tid, \op)$ is parametrised by a global
invariant $\mathcal{I}$ used in the proof. With the help of it,
$\started_{\mathcal{I}}(\tid, \op)$ requires that configurations in its
denotation be results of adding a new event into histories satisfying
$\mathcal{I}$.

\begin{theorem}\label{thm:soundness}
  Given a data structure $\cdt$, its initial state $\state_0 \in \stateType$ and
  a set of sequential histories $\histories$, we have $(\cdt, \state_0)$ {\em
  linearizable} with respect to $\histories$ if there exists an assertion
  $\mathcal{I}$ and relations $\rely_{\tid}, \guar_{\tid} \subseteq \Config^2$
  for each $\tid \in \tidType$ such that:
\begin{enumerate}
\item \label{item:init} 
  $\exists \ghost_0.\, \forall \lint \ldotp (\state_0, (\emptyset, \emptyset),
  \ghost_0) \in \evalf{\mathcal{I}}{\lint}$;
\item \label{item:stable}
$\forall \tid, \lint. \, {\sf stable}(\evalf{\mathcal{I}}{\lint}, \rely_\tid)$;
\item \label{item:inv} 
  $\forall \history, \lint \ldotp (\_, \history, \_) \in \evalf{\mathcal{I}}{\lint} \implies
  \abs(\history, \histories)$;
\item \label{item:spec}
$\forall \tid, \op.\, (\spec{\rely_\tid, \guar_\tid}
      {\tid}
      {\begin{array}{@{}c@{}}
      \mathcal{I} \land \started_{\mathcal{I}}(\tid, \op)
      \end{array}}
      {\cdt(\op)}
      {\begin{array}{@{}c@{}}
      \mathcal{I} \land \finished(\tid, \op)
      \end{array}})$;
\item \label{item:rg}
$\forall \tid, \tid'.\, \tid \neq \tid' \implies
  \guar_{\tid} \cup {\dashrightarrow}_{\tid} \subseteq \rely_{\tid'}$.
\end{enumerate}
\end{theorem}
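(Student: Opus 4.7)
The plan is a forward-simulation argument that lifts the per-thread judgements of condition~\ref{item:spec} to the concurrent semantics $\twoheadrightarrow_\cdt$, then reads linearizability off the simulation invariant. I define a global invariant $\Phi$ on configurations $(\tp, s, H)$ asserting that (i) there exist a ghost state $G$ and an interpretation $\lint$ with $(s, H, G) \in \evalf{\mathcal{I}}{\lint}$, and (ii) for every thread $\tid$ with $\tp(\tid) = C \neq \idle$, there is a stable assertion $P_\tid$ currently satisfied at $(s, H, G)$ and a derivation $\spec{\rely_\tid, \guar_\tid}{\tid}{P_\tid}{C}{\mathcal{I} \land \finished(\tid, \op_\tid)}$ obtained from the derivation of condition~\ref{item:spec} by ``fast-forwarding'' past the part of $\cdt(\op_\tid)$ that $\tid$ has already executed. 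Condition~\ref{item:init} supplies the initial instance of $\Phi$; clause (ii) is vacuous since all threads start $\idle$.

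Next I would verify that $\Phi$ is preserved by each of the three rules of $\twoheadrightarrow_\cdt$. For an invocation, the new configuration satisfies $\started_{\mathcal{I}}(\tid, \op)$ by the very definition of $\addevent{\tid}{-}{-}$; condition~\ref{item:rg} places $\dashrightarrow_\tid$ inside $\rely_{\tid'}$ for every $\tid' \neq \tid$, so condition~\ref{item:stable} yields that $\mathcal{I}$ survives the addition of the new event, and condition~\ref{item:spec} then furnishes the triple to be used as $P_\tid$. For a local step of $\tid$, the soundness of the primitive-command rule (Figure~\ref{fig:prim}) together with the structural rules deferred to~\cite{ext} supply a history update $\mhupd{H}{H'}$, a ghost update $G \to G'$, and a new stable intermediate assertion $P'_\tid$ such that $(s', H', G') \in \evalf{\mathcal{I} \land P'_\tid}{\lint}$, with the change itself lying in $\guar_\tid$. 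By condition~\ref{item:rg} this change belongs to every other $\rely_{\tid'}$, and by condition~\ref{item:stable} together with the stability side-conditions built into each $P_{\tid'}$, clauses (i) and (ii) of $\Phi$ are restored. A completion step leaves $H$ structurally intact, assigns the stored return value to the last event of $\tid$ (which matches the one already fixed by $\finished(\tid, \op)$), and makes clause (ii) vacuous for $\tid$.

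Linearizability then follows by a short extraction argument. For any $H \in \hsem{\cdt, \state_0}$ the invariant yields $(\_, H, \_) \in \evalf{\mathcal{I}}{\lint}$ for some $\lint$, so condition~\ref{item:inv} gives $\abs(H, \histories)$. Taking $\history'_1 = \complete{H}$ -- complete nothing, discard every uncompleted event -- satisfies $\completes{H}{\history'_1}$ directly from Definition~\ref{dfn:linearizability}, and any topological extension $\history_2$ of its real-time order to a total order is a sequential history with $\refines{\history'_1}{\history_2}$; by the definition of $\abs$, $\history_2 \in \histories$. This closes the chain $\completes{H}{\history'_1} \wedge \refines{\history'_1}{\history_2}$, giving $\hsem{\cdt, \state_0} \sqsubseteq \histories$.

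The principal technical obstacle is the bookkeeping of clause (ii) of $\Phi$: one must specify how each ``current intermediate assertion'' $P_{\tid'}$ evolves when another thread $\tid$ takes a step, and verify that it remains both stable and derivable after the corresponding $\rely_{\tid'}$-transition. This is precisely what condition~\ref{item:rg} is designed to enable -- by forcing $\guar_\tid \cup {\dashrightarrow}_\tid$ to lie inside $\rely_{\tid'}$ -- so that the stability side-conditions attached to every application of the primitive-command rule in $\tid'$'s derivation can be discharged uniformly; the rest of the argument is the classical rely-guarantee soundness proof, augmented here with the ghost history moves permitted by $\leadsto$ and the automatic event-addition relation $\dashrightarrow_\tid$.
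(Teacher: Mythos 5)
Your overall architecture --- an induction over the concurrent semantics carrying, for each non-idle thread, a stable intermediate assertion and a residual rely-guarantee derivation, with item~\ref{item:rg} discharging interference and item~\ref{item:inv} converting the invariant into $\abs$ --- is essentially the paper's proof (its predicate $\continv$ and the coinductive safety judgement play the role of your clause~(ii) of $\Phi$). But there is one genuine gap, and it concerns the central idea of the whole method: you identify the history $\history$ maintained by the logic with the history $h$ produced by the operational semantics ${\twoheadrightarrow}_{\cdt}$. These are different objects. A local step of a thread leaves the semantics' history unchanged (second rule of ${\twoheadrightarrow}_{\cdt}$), yet at a commitment point the primitive-command rule rewrites the logic's history via ${\leadsto}$ --- adding order edges and, crucially, completing an event with its return value \emph{before} the operation returns. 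So your clause~(i), which asserts $(\state, h, \ghost) \in \evalf{\mathcal{I}}{\lint}$ of the \emph{concrete} history $h$, cannot be maintained and would in fact contradict item~\ref{item:inv} in reachable configurations: for the concrete history, an enqueue that has inserted its value but not yet returned is still uncompleted, and $\abs$ would force its linearizations to discard it while keeping the dequeue that already returned that value. The paper's strengthened induction statement instead existentially quantifies a separate abstract history $\history$ with $\mhupd{h}{\history}$, and it is $\history$, not $h$, that satisfies $\mathcal{I}$.

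This also breaks your extraction step. Taking $\history'_1 = \complete{h}$ (``complete nothing, discard every uncompleted event'') and linearizing is only sound if $\abs(h,\histories)$ holds of the concrete history, which it does not. The paper instead uses its Proposition~\ref{app:prop:abs}: from $\mhupd{h}{\history}$ one obtains $\history''$ with $\completes{h}{\history''}$ and $\refines{\history''}{\history}$, i.e.\ the concrete history is completed \emph{according to the return values the abstract history committed early}, and only then are the remaining uncompleted events dropped and a linearization of $\complete{\history}$ chosen. Your proof can be repaired by threading the pair $(h,\history)$ with $\mhupd{h}{\history}$ through the simulation invariant and replacing your final paragraph with this two-stage completion, but as written the argument does not go through.
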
 

Here $\mathcal{I}$ is the invariant used in the proof, which
item~\ref{item:init} requires to hold of the initial data structure state $s_0$,
the empty history and some some initial ghost state $G_0$.
Item~\ref{item:stable} then ensures that the invariant holds at all times.
Item~\ref{item:inv} requires any history satisfying the invariant to be an
abstract history of the given specification $\histories$
(Definition~\ref{dfn:abstracthistory},
\S\ref{sec:linearizability}). Item~\ref{item:spec} constraints the judgement
about an operation $\op$ executed in a thread $t$: the operation is executed
from a configuration satisfying the invariant and with a corresponding event
added to the history; by the end of the operation's execution, we need to
complete the event with the return value matching the one produced by the
code. Finally, item~\ref{item:rg} formalises a usual requirement in
rely-guarantee reasoning: actions allowed by the guarantee of a thread $\tid$
have to be included into the rely of any other thread $\tid'$. We also include
the relation ${\dashrightarrow}_\tid$, describing the automatic creation of a
new event upon an operation invocation in thread $t$.



\newcommand{\newts}{{\rm newTS}}
\newcommand{\isEmpty}{{\rm isEmpty}}

\newcommand{\seen}{{\sf seen}}
\newcommand{\fnocand}{{\rm noCand}}
\newcommand{\fiscand}{{\rm isCand}}
\newcommand{\fmints}{{\rm minTS}}

\newcommand{\ttvisited}{{\tt A}}

\newcommand{\invdata}{{\sf same\_data}}

\begin{figure}[t]
\begin{itemize}
\item[$(\invlin)$] all linearizations of completed events of the abstract
history satisfy the queue specification:
\vspace{-5pt}
\[
\forall H' \ldotp \linearizes{\complete{\history}}{H'} \land \isSeq(H')
\implies H' \in \qhist \land \invdata(\state, \history, \ghostts, H')
\]

\vspace{-7pt}
\end{itemize}
\begin{itemize}
\item[$(\invord)$] properties of the partial order of the abstract history:
\begin{itemize}[leftmargin=0pt]
\item[(i)] completed dequeues precede uncompleted ones:
\vspace{-5pt}
\[
\forall \eid \in \ids{\complete{\events}} \ldotp
  \forall \eidp \in \ids{\incomplete{\events}} \ldotp
    \opof{\events(\eid)} = \opof{\events(\eidp)} = \deqOp \implies \edge{\eid}{\order}{\eidp}
\]

\vspace{-7pt}
\item[(ii)] enqueues of already dequeued values precede enqueues of values
in the pools:
\vspace{-5pt}
\[
\forall \eid \in \ids{\complete{\events}} \setminus \inqueue{\state(\ttpools), \events, \ghostts} \ldotp
  \forall \eidp \in \inqueue{\state(\ttpools), \events, \ghostts} \ldotp
    \edge{\eid}{\order}{\eidp}
\]
\end{itemize}

\item[$(\invalg)$] properties of the algorithm used to build the loop
invariant:
\begin{itemize}[leftmargin=0pt]
\item[(i)]
enqueues of values in the pools are ordered only if so are
their timestamps:
\vspace{-5pt}
\[
\forall \eid, \eidp \in \inqueue{\state({\tt pools}), \events, \ghostts} \ldotp
\edge{\eid}{\order}{\eidp}
\implies \ghostts(\eid) \tsless \ghostts(\eidp)
\]

\vspace{-7pt}
\item[(ii)]
values in each pool appear in the order of enqueues that inserted them:
\vspace{-5pt}
\begin{multline*}
\forall \tid, \tau_1, \tau_2 \ldotp
\pool{\tid} = \_ \cdot (\_, \_, \tau_1) \cdot \_ \cdot (\_, \_, \tau_2) \cdot \_ \implies{}\\
\edge{\getEvent(\events, \ghostts, \tid, \tau_1)}{\order}{\getEvent(\events, \ghostts, \tid,
\tau_2)}
\end{multline*}

\vspace{-7pt}
\item[(iii)]
the timestamps of values are smaller than the global counter:
\[
\forall \eid, a, b \ldotp \ghostts(\eid) = (a, b) \implies b < \state({\tt counter})
\]
\end{itemize}
\item[($\invwf$)] properties of ghost state:
\begin{itemize}[leftmargin=0pt]
\item[(i)]
$\ghostts$ associates timestamps with enqueue events:
\vspace{-5pt}
\[
\forall \eid \ldotp \eid \in \dom{\ghostts} \implies
\opof{\events(\eid)} = \enqOp
\]

\vspace{-7pt}
\item[(ii)]
each value in a pool has a matching event for the enqueue that inserted
it:
\vspace{-5pt}
\[
\forall \tid, v, \tau \ldotp \pool{\tid} = \_ \cdot (\_, v, \tau) \cdot \_
\implies
\exists \eid \ldotp
\events(\eid) = (\tid, \enqOp, v, \_) \land \ghostts(\eid) = \tau
\]

\vspace{-7pt}
\item[(iii)]
all timestamps assigned by $\ghostts$ to events of a given thread are distinct:
\vspace{-5pt}
\[
\forall \eid, \eidp \ldotp \eid \neq \eidp \land \tidof{\events(\eid)} =
\tidof{\events(\eidp)} \land \eid, \eidp \in \dom{\ghostts}
\implies \ghostts(\eid) \neq \ghostts(\eidp) 
\]

\vspace{-7pt}
\item[(iv)]
$\ghostts$ associates uncompleted enqueues events with the timestamp
$\top$:
\vspace{-5pt}
\[
\begin{array}{l}
\forall \eid \ldotp \opof{\events(\eid)} = \enqOp \implies(\eid \not\in \ids{\complete{\events}}
\iff \eid \notin \dom{\ghostts} \lor \ghostts(\eid) = \top)
\end{array}
\]
\end{itemize}
\end{itemize}
\vspace{-15pt}
\caption{The invariant $\inv = \invlin \land \invord \land \invalg \land \invwf$}
\label{tsq:inv}
\end{figure}

\section{The TS Queue: proof details}\label{sec:details}

In this section, we present some of the details of the proof of the TS
Queue. Due to space constraints, we provide the rest of them in
the extended version of the paper \cite{ext}.

\newparagraph{Invariant.}
We satisfy the obligation~\ref{item:spec} from Theorem~\ref{thm:soundness} by
proving the invariant $\inv$ defined in Figure~\ref{tsq:inv}. The invariant is
an assertion consisting of four parts: $\invlin$, $\invord$, $\invalg$ and
$\invwf$. Each of them denotes a set of configurations satisfying the listed
constraints for a given interpretation of logical variables $\lint$. The first
part of the invariant, $\invlin$, ensures that every history satisfying the
invariant is an abstract history of the queue, which discharges the
obligation~\ref{item:inv} from Theorem~\ref{thm:soundness}. In addition to that,
$\invlin$ requires that a relation $\invdata$ hold of a configuration $(\state,
\history, \ghostts)$ and every linearization $H'$. In this way, we ensure that
the pools and the final state of the sequential queue after $H'$ contain values
inserted by the same enqueue events (we formalise $\invdata$ in \cite{ext}). The
second part, $\invord$, asserts ordering properties of events in the partial
order that hold by construction. The third part, $\invalg$, is a
collection of properties relating the order on timestamps to the partial order
in abstract history. Finally, $\invwf$ is a collection of well-formedness
properties of the ghost state.


\newparagraph{Loop invariant.}
We now present the key verification condition that arises in the {\tt dequeue}
operation: demonstrating that the ordering enforced at the commitment points at
lines~\ref{line:tsq_scan}--\ref{line:tsq_scan_end} and
\ref{line:tsq_remove}--\ref{line:tsq_remove_ends} does not invalidate acyclicity
of the abstract history. To this end, for the {\tt foreach} loop
(lines~\ref{line:tsq_loop_begin}--\ref{line:tsq_loop_end}) we build a loop
invariant based on distinguishing certain values in the pools as {\it seen} by
the {\tt dequeue} operation. With the help of the loop invariant we establish
that acyclicity is preserved at the commitment points.

\begin{figure}[t]
\begin{itemize}[leftmargin=58pt]
\item[$(\fnocand)$:]
  $\seen((\state, \history, \ghostts), \myEid) = \emptyset \land \state(\ttcandpid) = \NULL$
\item[$(\fmints(e))$:]
  $\forall e' \in \seen((\state, \history, \ghostts), \myEid) \ldotp
    \lnot(\ghostts(e') \tsless \ghostts(e))$
\item[$(\fiscand)$:] \hspace{-4pt}
  $\begin{multlined}[t]
  \exists \ttcand \ldotp
  \ttcand = \getEvent(\events, \ghostts,
                         \state(\ttcandtid), \state(\ttcandts))
  \\ {} \land
    \fmints(\ttcand)
  \land
  (\ttcand \in \inqueue{\state(\ttpools), \events, \ghostts} \implies{}\\
    \ttcand \in \seen((\state, \history, \ghostts), \myEid))
    \land \state(\ttcandpid) \neq \NULL
  \end{multlined}$
\end{itemize}

\caption{Auxiliary assertions for the loop invariant
  \label{fig:tsq_linv}
}
\end{figure}

Recall from \S\ref{sec:running}, that the {\tt foreach} loop starts iterating
from a random pool. In the proof, we assume that the loop uses a thread-local
variable $\ttvisited$ for storing a set of identifiers of threads that have been
iterated over in the loop. We also assume that at the end of each iteration the
set $\ttvisited$ is extended with the current loop index ${\tt k}$.

Note also that for each thread ${\tt k}$, the commitment point of a dequeue $d$
at lines~\ref{line:tsq_scan}--\ref{line:tsq_scan_end} ensures that enqueue
events of values the operation sees in ${\tt k}$'s pool precede $d$ in the
abstract history. Based on that, during the {\tt foreach} loop we can we
distinguish enqueue events with values in the pools that a dequeue $d$ has
seen after looking into pools of threads from $\ttvisited$. We define the set of
all such enqueue events as follows:
\begin{multline}\label{def:seen}
\seen((\state, (\events, \order), \ghostts), d) \triangleq
  \{ e \mid
  e \in \ids{\complete{\events}} \cap
            \inqueue{\state({\tt pools}), \events, \ghostts} \\ {}
  \land \edge{e}{\order}{d}
  \land \lnot(\state(\ttstartts) \tsless \ghostts(e))
  \land \tidof{\events(e)} \in \ttvisited
  \}
\end{multline}

A loop invariant $\linv$ is simply a disjunction of two auxiliary assertions,
$\fiscand$ and $\fnocand$, which are defined in Figure~\ref{fig:tsq_linv} (given
an interpretation of logical variables $\lint$, each of assertions denotes a set
of configurations satisfying the listed constraints). The assertion $\fnocand$
denotes a set of configurations $\conf = (\state, (\events, \order), \ghostts)$,
in which the {\tt dequeue} operation has not chosen a candidate for removal
after having iterated over the pools of threads from $\ttvisited$. In this case,
$\state(\ttcandpid) = \NULL$, and the current dequeue has not seen any enqueue
event in the pools of threads from $\ttvisited$.

The assertion $\fiscand$ denotes a set of configurations $\conf = (\state,
(\events, \order), \ghostts)$, in which an enqueue event $\ttcand =
\getEvent(\events, \ghostts, \ttcandtid, \ttcandts)$ has been chosen as a
candidate for removal out of the enqueues seen in the pools of threads from
$\ttvisited$. As $\ttcand$ may be removed by a concurrent dequeue, $\fiscand$
requires that $\ttcand$ remain in the set $\seen(\conf, \myEid)$ as long as
$\ttcand$'s value remains in the pools. Additionally, by requiring
$\fmints(\ttcand)$, $\fiscand$ asserts that the timestamp of $\ttcand$ is
minimal among other enqueues seen by $\myEid$.

In the following lemma, we prove that the assertion $\fiscand$ implies
minimality of $\ttcand$ in the abstract history among enqueue events with values
in the pools of threads from $\ttvisited$. The proof is based on the observation
that enqueues of values seen in the pools by a dequeue are never preceded by
unseen enqueues.

\begin{lemma}\label{lem:tsqmin}
For every $\lint : \lvarsType \to \valueType$ and configuration $(\state,
(\events, \order), \ghostts) \in \evalf{\fiscand}{\lint}$, if $\ttcand =
\getEvent(\events, \ghostts, \ttcandtid, \ttcandts)$ and $\ttcand \in
\inqueue{\state({\tt pools}), \events, \ghostts}$ both hold, then the following is true:
\begin{equation*}
\forall e \in \inqueue{\state(\tt pools), \events, \ghostts} \ldotp
\tidof{\events(e)} \in \ttvisited
\implies
\nedge{e}{\order}{\ttcand}
\end{equation*}
\end{lemma}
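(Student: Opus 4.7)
My plan is to prove the contrapositive by contradiction. Assume there exists an event $e \in \inqueue{\state({\tt pools}), \events, \ghostts}$ with $\tidof{\events(e)} \in \ttvisited$ and $\edge{e}{\order}{\ttcand}$. I will derive a contradiction with the clause $\fmints(\ttcand)$ of $\fiscand$ by demonstrating that $e \in \seen((\state, (\events, \order), \ghostts), \myEid)$ while also $\ghostts(e) \tsless \ghostts(\ttcand)$.

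First I use the hypothesis $\ttcand \in \inqueue{\state({\tt pools}), \events, \ghostts}$ together with the last conjunct of $\fiscand$ to conclude $\ttcand \in \seen((\state, (\events, \order), \ghostts), \myEid)$. Unfolding the definition~(\ref{def:seen}) of $\seen$, this yields three facts about $\ttcand$ that I will propagate to $e$: (a) $\ttcand \in \ids{\complete{\events}}$, (b) $\edge{\ttcand}{\order}{\myEid}$, and (c) $\lnot(\state(\ttstartts) \tsless \ghostts(\ttcand))$. Next, since $e$ and $\ttcand$ are both in $\inqueue{\state({\tt pools}), \events, \ghostts}$ and $\edge{e}{\order}{\ttcand}$, clause~$\invalg$(i) gives $\ghostts(e) \tsless \ghostts(\ttcand)$.

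Now I verify that $e$ satisfies every conjunct in the definition of $\seen$. Membership in $\inqueue{\cdot}$ and the condition $\tidof{\events(e)} \in \ttvisited$ are assumed; $\edge{e}{\order}{\myEid}$ follows from $\edge{e}{\order}{\ttcand}$ and (b) by transitivity of $\order$; and $\lnot(\state(\ttstartts) \tsless \ghostts(e))$ follows because, assuming otherwise, transitivity of $\tsless$ together with $\ghostts(e) \tsless \ghostts(\ttcand)$ would contradict (c). Finally, to see that $e \in \ids{\complete{\events}}$, note that (a) together with $\invwf$(iv) implies $\ghostts(\ttcand) \neq \top$ (an uncompleted enqueue's timestamp would be $\top$, not a proper interval), so $\ghostts(e) \tsless \ghostts(\ttcand)$ forces $\ghostts(e) \neq \top$ as well, and since $e \in \inqueue{\cdot}$ implies $e \in \dom{\ghostts}$ by construction of $\getEvent$, another application of $\invwf$(iv) gives that $e$ is completed. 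Thus $e \in \seen(\cdot, \myEid)$, and $\fmints(\ttcand)$ asserts $\lnot(\ghostts(e) \tsless \ghostts(\ttcand))$, contradicting the derivation above.

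The main subtlety I expect is the careful handling of the $\top$-timestamp case: the predicate $\tsless$ treats $\top$ as greater than every other timestamp, so a chain $\ghostts(e) \tsless \ghostts(\ttcand)$ is informative only once we know $\ghostts(\ttcand)$ is a proper interval. Completing this argument requires routing through $\invwf$(iv) via the fact that $\ttcand$ is completed (obtained from its membership in $\seen$), which is the one place where several invariants have to be combined rather than used in isolation; everything else is a direct transitivity argument over $\order$ or $\tsless$.
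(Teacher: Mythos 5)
Your proof is correct and rests on exactly the same ingredients as the paper's --- $\invalg$(i), transitivity of $\order$ and of $\tsless$, and the definition of $\seen$ --- but it organizes them differently. The paper first factors out an auxiliary lemma (Lemma~\ref{lem:tsqinvis}) stating that an enqueue by a visited thread that is \emph{not} seen by the dequeue cannot $\order$-precede any seen enqueue, and then concludes by a case split on whether $e$ is seen: the unseen case is handled by the auxiliary lemma, and the seen case by combining $\fmints(\ttcand)$ with $\invalg$(i). You instead run a single contradiction: assuming $\edge{e}{\order}{\ttcand}$, you force $e$ into $\seen(\cdot,\myEid)$ and then contradict $\fmints(\ttcand)$. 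Your conjunct-by-conjunct verification that $e\in\seen(\cdot,\myEid)$ mirrors the paper's case analysis of the ways an unseen enqueue could fail to be seen; the one genuinely different micro-step is completedness of $e$, which you obtain from $\invwf$(iv) together with $\ghostts(\ttcand)\neq\top$, whereas the paper reduces the uncompleted case to the ``timestamp too large'' case using that $\ttstartts$ is a proper (non-$\top$) timestamp --- a side hypothesis your route does not need. A caveat that applies equally to both proofs: the argument uses the global invariant $\inv$ (for $\invalg$(i) and $\invwf$(iv)) on top of $\fiscand$, which the lemma statement leaves implicit but which holds at the program points where the lemma is applied.
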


\newparagraph{Acyclicity.} At the commitment points extending the order of the
abstract history, we need to show that the extended order is acyclic as required
by Definition~\ref{def:history} of the abstract history. To this end, we argue
that the commitment points at lines~\ref{line:tsq_scan}--\ref{line:tsq_scan_end}
and lines~\ref{line:tsq_remove}--\ref{line:tsq_remove_ends} preserve acyclicity
of the abstract history.

The commitment point at lines~\ref{line:tsq_scan}--\ref{line:tsq_scan_end}
orders certain completed enqueue events before the current uncompleted dequeue
event $\myEid$. By Definition~\ref{def:history} of the abstract history, the
partial order on its events is transitive, and uncompleted events do not precede
other events. Since $\myEid$ does not precede any other event, ordering any
completed enqueue event before $\myEid$ cannot create a cycle in the abstract
history.

We now consider the commitment point at
lines~\ref{line:tsq_remove}--\ref{line:tsq_remove_ends} in the current dequeue
$\myEid$. Prior to the commitment point, the loop invariant $\linv$ has been
established in all threads, and the check $\ttcandpid \neq \NULL$ at
line~\ref{line:tsq_checknull} has ruled out the case when $\fnocand$ holds.
Thus, the candidate for removal $\ttcand$ has the properties described by
$\fiscand$. If $\ttcand$'s value has already been dequeued concurrently, the
removal fails, and the abstract history remains intact (and acyclic). When the
removal succeeds, we consider separately the two kind of edges added into the
abstract history $(\events, \order)$:
\begin{enumerate}
\item \textbf{The case of $(\ttcand, e)$ for each $e \in \inqueue{{\tt
pools}, \events, \ghostts}$}. By Lemma~\ref{lem:tsqmin}, an edge $(e, \ttcand)$
is not in the partial order $\order$ of the abstract history. There is also no
sequence of edges $\edge{\edge{e}{\order}{...}}{\order}{\ttcand}$, since
$\order$ is transitive by Definition~\ref{def:history}. Hence, cycles do not
arise from ordering $\ttcand$ before $e$.
\item \textbf{The case of $(\myEid, d)$ for each identifier $d$ of an
uncompleted dequeue event}. By Definition~\ref{def:history} of the abstract
history, uncompleted events do not precede other events. Since $d$ is
uncompleted event, it does not precede $\myEid$. Hence, ordering $\myEid$ in
front of all such dequeue events does not create cycles.
\end{enumerate}

\FloatBarrier
\newparagraph{Rely and guarantee relations.}
We now explain how we generate rely and guarantee relations for the proof.
Instead of constructing the relations with the help of abstracted intermediate
assertions of a proof outline for the {\tt enqueue} and {\tt dequeue}
operations, we use the non-deterministic state transformers of primitive
commands together with the ghost code in Figure~\ref{fig:tsq_enqueue} and
Figure~\ref{fig:tsq_dequeue}. To this end, the semantics of state transformers
is extended to account for changes to abstract histories and ghost state. We
found that generating rely and guarantee relations in such non-standard way
results in cleaner stability proofs for the TS Queue, and makes them similar in
style to checking non-interference in the Owicki-Gries method
\cite{owicki-gries}.

Let us refer to atomic blocks with corresponding ghost code at
line~\ref{line:tsq_insert}, line~\ref{line:tsq_setts}, line~\ref{line:tsq_scan}
and line~\ref{line:tsq_remove} as atomic steps {\tt insert}, {\tt setTS}, {\tt
scan(k)} (${\tt k} \in \tidType$) and {\tt remove} respectively, and let us also
refer to the CAS operation at line~\ref{line:ts_CAS} as {\tt genTS}. For each
thread $\tid$ and atomic step $\hat{\alpha}$, we assume a non-deterministic
configuration transformer $\db{\hat{\alpha}}_\tid : \Config \to
\powerset{\Config}$ that updates state according to the semantics of a
corresponding primitive command, and history with ghost state as specified by
ghost code.

Given an assertion $P$, an atomic step $\hat{\pcom}$ and a thread $\tid$,
we associate them with the following relation $\guar_{\tid,\hat{\alpha},P}
\subseteq \Config^2$:
\[
\guar_{\tid,\hat{\alpha},P} \triangleq
\{
(\conf, \conf') \mid \exists \lint \ldotp
\conf \in \evalf{P}{\lint} \land \conf' \in \intp{\tid}{\hat{\pcom}}{\conf}
\}
\]
Additionally, we assume a relation $\guar_{\tid, {\tt local}}$, which describes
arbitrary changes to certain program variables and no changes to the abstract
history and the ghost state. That is, we say that {\tt pools} and {\tt counter}
are {\em shared} program variables in the algorithm, and all others are {\em
thread-local}, in the sense that every thread has its own copy of them. We let
$\guar_{\tid, {\tt local}}$ denote every possible change to thread-local
variables of a thread $\tid$ only.

For each thread $\tid$, relations $\guar_\tid$ and $\rely_\tid$ are defined as follows:
\[
\begin{array}{rcl}
P_\op & \triangleq & \inv\land\started(\tid,\op) \\
\guar_\tid & \triangleq &
(\bigcup_{\tid' \in \tidType} \guar_{\tid,{\tt scan(\tid')}, P_\deqOp}) \cup
\guar_{\tid,{\tt remove},P_\deqOp}
\\ & & \hfill {}
\cup
\guar_{\tid,{\tt insert},P_\enqOp} \cup
\guar_{\tid,{\tt setTS},P_\enqOp} \cup
\guar_{\tid,{\tt genTS},\inv} \cup
\guar_{\tid, {\tt local}},
\\
\rely_\tid & \triangleq &
\cup_{\tid' \in \tidType \setminus \{\tid\}}
(\guar_{\tid'} \cup {\dashrightarrow}_{\tid'})
\end{array}
\]
As required by Theorem~\ref{thm:soundness}, the rely relation of a thread $\tid$
accounts for addition of new events in every other thread $\tid'$ by including
${\dashrightarrow}_\tid'$. Also, $\rely_{\tid}$ takes into consideration every
atomic step by the other threads. Thus, the rely and guarantee relations satisfy
all the requirement~\ref{item:rg} of the proof method from
Theorem~\ref{thm:soundness}. It is easy to see that the
requirement~\ref{item:stable} is also fulfilled: the global invariant $\inv$ is
simply preserved by each atomic step, so it is indeed stable under rely
relations of each thread.



The key observation implying stability of the loop invariant in every thread
$\tid$ is presented in the following lemma, which states that environment
transitions in the rely relation never extend the set of enqueues seen by a
given dequeue.
\begin{lemma}\label{lem:tsq:linv:stablevis}
If a dequeue event ${\tt DEQ}$ generated its timestamp $\ttstartts$, then:
\[
\forall \conf, \conf' \ldotp
(\conf, \conf') \in \rely_{\tid} \implies
\seen(\conf', {\tt DEQ}) \subseteq \seen(\conf, {\tt DEQ})
\]
\end{lemma}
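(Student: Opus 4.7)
The plan is to fix an arbitrary $e \in \seen(\conf', {\tt DEQ})$ and show $e \in \seen(\conf, {\tt DEQ})$ by case analysis on the atomic step underlying the transition $(\conf, \conf') \in \rely_{\tid}$. Two of the five conditions defining $\seen$ transfer for free: $\ttstartts$ and $\ttvisited$ are thread-local to $\tid$, and the thread identifier of each event is immutable, so $\lnot(\ttstartts \tsless \ghostts(e))$ and $\tidof{\events(e)} \in \ttvisited$ carry over from $\conf'$ to $\conf$ under any step by $\tid' \neq \tid$. The remaining task is to re-establish in $\conf$ that $e$ is completed, that $e$'s value lies in the pools, and that $\edge{e}{\order}{{\tt DEQ}}$.

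I would dispatch the easy env steps first. Invocations, local-variable updates, and {\tt genTS} leave $\events$, $\order$, $\ghostts$, and $\ttpools$ untouched. An {\tt insert} by $\tid'$ sets $\ghostts$ of $\tid'$'s current enqueue event to $\top$; if that event were $e$ then $\lnot(\ttstartts \tsless \top)$ would fail in $\conf'$, since $\top$ is maximal and $\ttstartts$ is a non-$\top$ timestamp from {\tt newTimestamp}. A {\tt setTS} by $\tid'$ completes $\tid'$'s enqueue without modifying $\order$; if the completed event were $e$ then by Definition~\ref{def:history} $e$ was maximal in the old $\order$ with no outgoing edges, contradicting $\edge{e}{\order'}{{\tt DEQ}}$. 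A {\tt scan(k)} by $\tid'$ adds only edges whose target is $\tid'$'s uncompleted dequeue event, which has no outgoing edges in the old $\order$, so the transitive closure introduces no edge into ${\tt DEQ}$.

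The main obstacle is a {\tt remove} step by $\tid' \neq \tid$. Writing $c$ for $\tid'$'s candidate event and $m$ for $\tid'$'s dequeue event, the step adds fresh edges $(c, e_0)$ for each $e_0 \in \inqueue$ and $(m, d)$ for every uncompleted dequeue $d$. If $e = c$ then $e$'s value is removed and condition~2 fails in $\conf'$, so $e \neq c$, and conditions~1, 2, and~4 transfer. Any new transitive path from $e$ to ${\tt DEQ}$ must traverse one of the fresh edges. A path through $(c, e_0)$ would require $\edge{e}{\order}{c}$ in the old order; since $\tid'$ has exited its foreach before executing {\tt remove}, its visited-set contains $\tidof{\events(e)}$, and Lemma~\ref{lem:tsqmin} applied to $\tid'$'s loop-invariant instance of $\fiscand$ yields $\nedge{e}{\order}{c}$, ruling this path out. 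A path through $(m, {\tt DEQ})$ requires $\edge{e}{\order}{m}$ in the old order, which can only have been installed at an earlier moment at which $e$ was already completed with its current timestamp $\tau_e = \ghostts(e)$. The hardest step is then to conclude that $(e, {\tt DEQ})$ was already in $\order$ in $\conf$: for this I plan to exploit counter monotonicity together with $\invalg$(iii), which, combined with condition~4 of $\seen$, forces $\tau_e$'s generation to have overlapped $\ttstartts$'s and hence $e$'s {\tt insert} to precede ${\tt DEQ}$'s foreach; coupled with the fact that every one of ${\tt DEQ}$'s scan commitments updates $\order$ for every completed enqueue in every pool (not just the one being visited on that iteration), this lets me argue that once $e$ becomes completed, some iteration of ${\tt DEQ}$'s scan installs $(e, {\tt DEQ})$. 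The bookkeeping around the relative timing of $e$'s {\tt setTS} and ${\tt DEQ}$'s iterations is, I expect, the most delicate part of the argument.
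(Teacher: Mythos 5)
Your overall strategy --- a case analysis over the atomic steps that can appear in $\rely_\tid$, dispatching invocation, local, {\tt genTS}, {\tt insert}, {\tt setTS} and {\tt scan} quickly and concentrating on {\tt remove} --- is exactly the paper's decomposition, and your arguments for the easy cases are sound (for {\tt insert}/{\tt setTS} the paper argues via uncompletedness and maximality where you partly use the $\lnot(\ttstartts \tsless \top)$ condition; both work). For {\tt remove} you are in fact \emph{more} careful than the paper, which disposes of the transitivity concern in a single sentence (``${\tt remove}$ does not introduce such edge, and it is not implied by transitivity'') accompanied by an author's note admitting it needs to be said ``more, and more formally''. Your identification of the two dangerous path shapes, and your use of Lemma~\ref{lem:tsqmin} instantiated at the removing thread's post-loop $\fiscand$ to kill the paths through the fresh $(\ttcand, e_0)$ edges, is correct and fills in detail the paper omits.

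The gap is in your treatment of the path $\edge{e}{\order}{m}$ composed with the fresh edge $(m, {\tt DEQ})$. Your plan is to show that $\edge{e}{\order}{{\tt DEQ}}$ must \emph{already} hold in $\conf$, by arguing from $\lnot(\ttstartts \tsless \ghostts(e))$ and $\invalg$(iii) that $e$'s timestamp generation overlapped ${\tt DEQ}$'s, and hence that ``some iteration of ${\tt DEQ}$'s scan installs $(e,{\tt DEQ})$''. But the counter argument only constrains when $e$'s enqueue \emph{read} the counter (before ${\tt DEQ}$'s {\tt newTimestamp} returned); the scan commitment point only adds $(e,{\tt DEQ})$ when $e \in \ids{\complete{\events}}$, and $e$ becomes completed only at its {\tt setTimestamp}, which a slow enqueue can execute arbitrarily late --- in particular after ${\tt DEQ}$'s final {\tt scan} iteration but before $m$'s scan and {\tt remove}. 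In that interleaving $m$'s scan installs $(e,m)$, $m$'s remove installs $(m,{\tt DEQ})$, and no step of ${\tt DEQ}$ ever installed $(e,{\tt DEQ})$, so the conclusion you need does not follow from the timing facts you cite. To close the case you would need either an additional invariant that rules out $\edge{e}{\order}{m}$ for an $e$ satisfying the remaining $\seen(\cdot,{\tt DEQ})$ conditions but with $\nedge{e}{\order}{{\tt DEQ}}$, or a different accounting of how $(e,m)$ can arise (e.g., separating the transitive-through-a-completed-dequeue case, where the companion edge into ${\tt DEQ}$ does come for free, from the direct scan/real-time case, where it does not). Be aware that the paper's own proof does not resolve this either, so you cannot simply defer to it.
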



\FloatBarrier

\newcommand{\nodeType}{{\sf NodeID}}
\newcommand{\node}{n}
\newcommand{\ghostnode}{\ghost_{\sf node}}

\newcommand{\sllist}{{\tt list}}

\newcommand{\containsOp}{{\sf contains}}
\newcommand{\insertOp}{{\sf insert}}
\newcommand{\removeOp}{{\sf remove}}

\newcommand{\insertOf}{{\sf insOf}}
\newcommand{\removeOf}{{\sf remOf}}
\newcommand{\lastremove}{{\sf lastRemOf}}

\newcommand{\ttafter}{{\tt obs}}

\section{The Optimistic Set: Informal Development}
\label{sec:set}

\begin{figure}[t]
\hspace{1em}
\begin{minipage}[t]{.41\textwidth}
\begin{lstlisting}
struct Node {
  Node *next;
  Int val;
  Bool marked;
}

Bool contains(v) {
  p, c := locate(v);
  return (c.val = v);
}

Bool insert(v) {
  ${\tt Node}{\times}{\tt Node}$ p, c;
  do {
    p, c := locate(v);
    atomic { $\label{line:insert_atomic_start}$
      if (p.next = c 
          && !p.marked) {
\end{lstlisting}
\begin{lstlisting}[firstnumber=18,backgroundcolor=\color{gray}]
        #${\tt commit}_{\tt insert}$#();
\end{lstlisting}
\begin{lstlisting}[firstnumber=20]
        if (c.val ${\neq}$ v) {
          Node *n := new Node;
          n->next := c;
          n->val := v;
          n->marked := $\FALSE$;
          p.next := n;
          return $\TRUE$;
        } else
          return $\FALSE$;
      }
    } $\label{line:insert_atomic_end}$
  } while ($\TRUE$);
}
\end{lstlisting}
\end{minipage}
\hfill
\begin{minipage}[t]{.46\textwidth}
\begin{lstlisting}[firstnumber=33]
${\tt Node}{\times}{\tt Node}$ locate(v) {
  Node prev := head;
  Node curr := prev.next;
  while (curr.val < v) {
    prev := curr;
    atomic { $\label{line:locate_atomic_start}$
      curr := curr.next;
\end{lstlisting}
\begin{lstlisting}[firstnumber=40,backgroundcolor=\color{gray}]
      if (#$\opof{\events(\myEid)} = \containsOp$# #$\label{line:locatecommit_start}$#
          && (curr.val ${}\geq{}$ v))
        #${\tt commit}_{\tt contains}$#();$\label{line:locatecommit_end}$
\end{lstlisting}
\begin{lstlisting}[firstnumber=45]
    } $\label{line:locate_atomic_end}$
  }
  return prev, curr;
}

Bool remove(v) {
  ${\tt Node}{\times}{\tt Node}$ p, c;
  do {
    p, c := locate(v);
    atomic { $\label{line:remove_atomic_start}$
      if (p.next = c 
          && !p.marked) {
\end{lstlisting}
\begin{lstlisting}[firstnumber=55,backgroundcolor=\color{gray}]
        #${\tt commit}_{\tt remove}$#();
\end{lstlisting}
\begin{lstlisting}[firstnumber=57]
        if (c.val = v) {
          c.marked := $\TRUE$;
          p.next := c.next;
          return $\TRUE$;
        } else
          return $\FALSE$;
      }
    } $\label{line:remove_atomic_end}$
  } while ($\TRUE$);
}
\end{lstlisting}
\end{minipage}
\caption{The Optimistic Set. Shaded portions are auxiliary code used in the
proof}
\label{fig:setcode}
\end{figure}

\newparagraph{The algorithm.}
We now present another example, the Optimistic Set~\cite{hindsight}, which is a
variant of a classic algorithm by Heller et al.~\cite{lazy-list}, rewritten to
use atomic sections instead of locks. However, this is a highly-concurrent
algorithm: every atomic section accesses a small bounded number of memory
locations. In this section we only give an informal explanation of the proof and
commitment points; the details are provided in \cite{ext}.

The set is implemented as a sorted singly-linked list. Each node in the list has
three fields: an integer {\tt val} storing the key of the node, a pointer {\tt
next} to the subsequent node in the list, and a boolean flag {\tt marked} that
is set true when the node gets removed. The list also has sentinel nodes {\tt
head} and {\tt tail} that store $-\infty$ and $+\infty$ as keys accordingly. The
set defines three operations: {\tt insert}, {\tt remove} and {\tt contains}.
Each of them uses an internal operation {\tt locate} to traverse the list. Given
a value {\tt v}, {\tt locate} traverses the list nodes and returns a pair of
nodes {\tt (p, c)}, out of which {\tt c} has a key greater or equal to {\tt v},
and {\tt p} is the node preceding {\tt c}.

The {\tt insert} ({\tt remove}) operation spins in a loop locating a place after
which a new node should be inserted (after which a candidate for removal should
be) and attempting to atomically modify the data structure. The attempt may fail
if either {\tt p.next = c} or {\tt !p.marked} do not hold: the former condition
ensures that concurrent operations have not removed or inserted new nodes
immediately after {\tt p.next}, and the latter checks that {\tt p} has not been
removed from the set. When either check fails, the operation restarts. Both
conditions are necessary for preserving integrity of the data structure.

When the elements are removed from the set, their corresponding nodes have the
{\tt marked} flag set and get unlinked from the list. However, the {\tt next}
field of the removed node is not altered, so marked and unmarked nodes of the
list form a tree such that each node points towards the root, and only nodes
reachable from the head of the list are unmarked. In Figure~\ref{fig:set}, we
have an example state of the data structure. The {\tt insert} and {\tt remove}
operations determine the position of a node {\tt p} in the tree by checking
the flag {\tt p.marked}. In {\tt remove}, this check prevents
removing the same node from the data structure twice. In {\tt insert}, checking
{\tt !p.marked} ensures that the new node {\tt n} is not inserted
into a branch of removed nodes and is reachable from the head of the list.

In contrast to {\tt insert} and {\tt remove}, {\tt contains} never modifies the
shared state and never restarts. This leads to a subtle interaction that may
happen due to interference by concurrent events: it may be correct for {\tt
contains} to return $\TRUE$ even though the node may have been removed by the
time {\tt contains} finds it in the list.


\begin{figure}[t]
\begin{minipage}{0.48\textwidth}
\includegraphics[height=135pt,trim=1 1 1 1,clip]{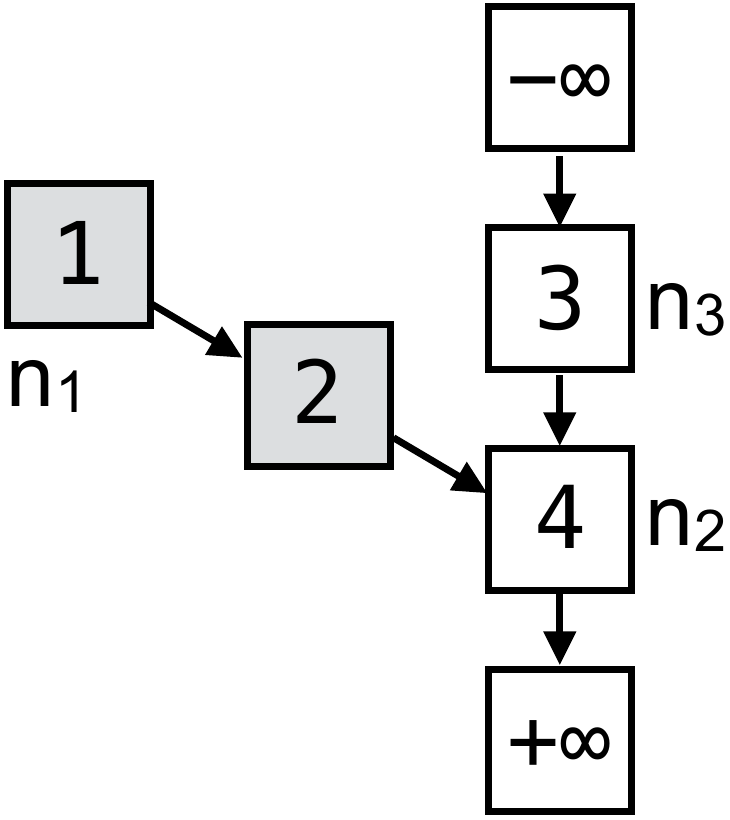}
\caption{Example state of the optimistic set. Shaded nodes have their ``marked''
field set.
  \label{fig:set}}
\end{minipage}
\hfill
\begin{minipage}{0.47\textwidth}
\hspace{-30pt}
\includegraphics[height=95pt,trim=1 1 1 1,clip]{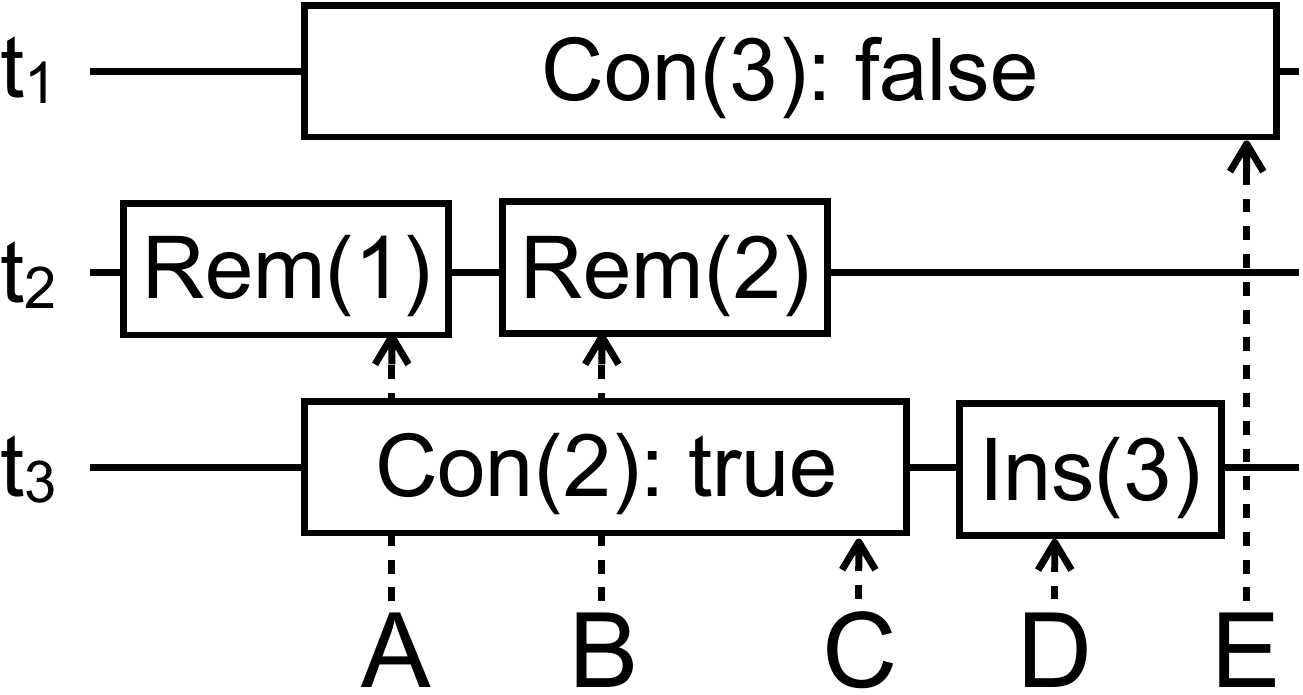}
\caption{Example execution of the set. ``Ins'' and ``Rem'' denote successful {\tt insert} and {\tt
  remove} operations accordingly, and ``Con'' denotes {\tt contains} operations.
  A--E correspond to commitment points of operations.
\label{fig:settrace}}
\end{minipage}

\medskip
\begin{center}
\includegraphics[height=170pt,trim=1 1 1 1,clip]{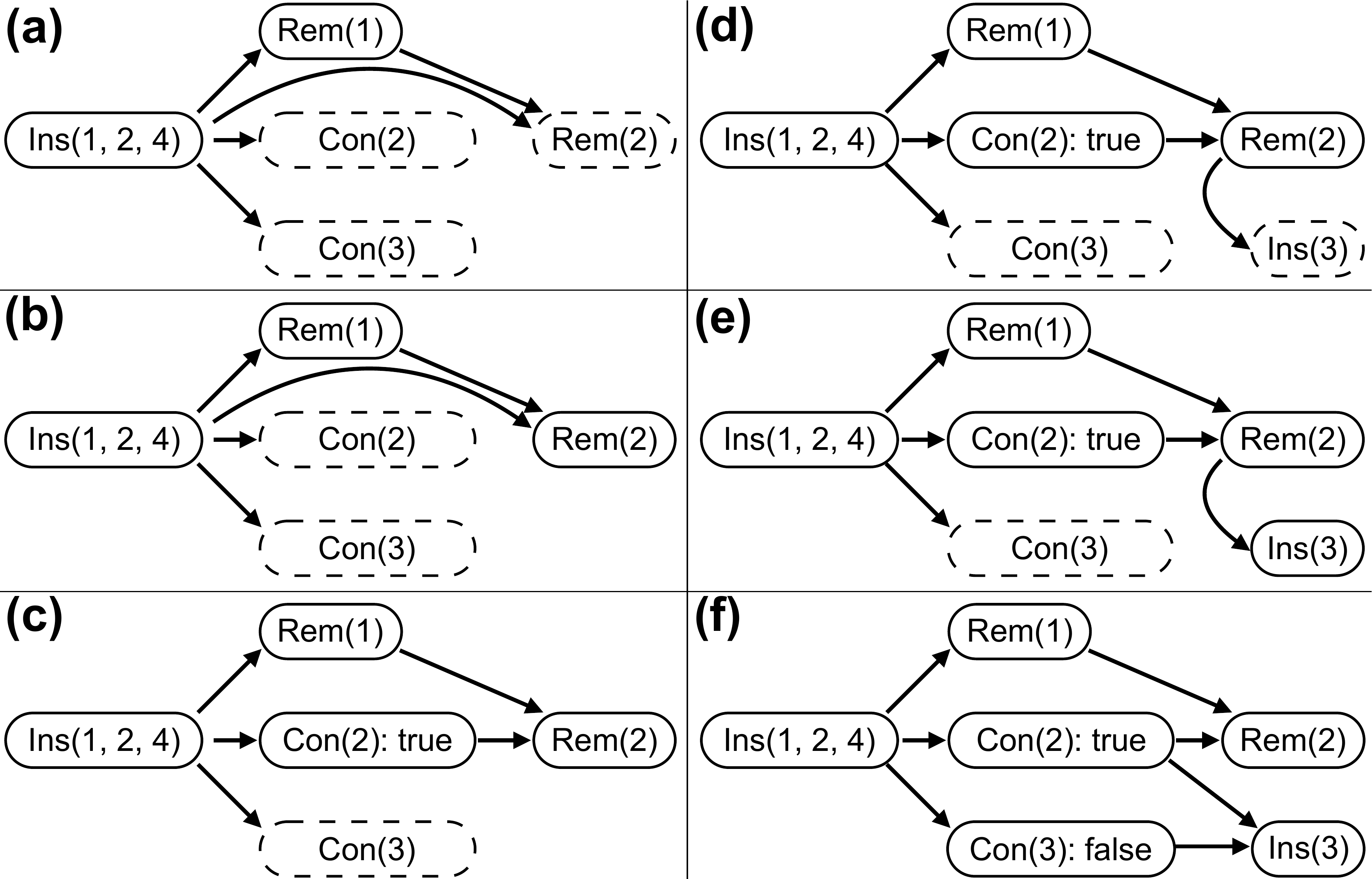}
\end{center}
\caption{Changes to the abstract history of the execution in
Figure~\ref{fig:settrace}. Edges implied by transitivity are omitted.
\label{fig:setcommit}}
\end{figure}

In Figure~\ref{fig:set}, we illustrate the subtleties with the help of a state
of the set, which is a result of executing the trace from
Figure~\ref{fig:settrace}, assuming that values 1, 2 and 4 have been initially
inserted in sequence by performing ``Ins(1)'', ``Ins(2)'' and ``Ins(4)''. 
We consider the following scenario. First, ``Con(2)'' and ``Con(3)'' start
traversing through the list and get preempted when they reach the node
containing $1$, which we denote by $\node_1$. Then the operations are finished
in the order depicted in Figure~\ref{fig:settrace}. Note that ``Con(2)'' returns
$\TRUE$ even though the node containing $2$ is removed from the data structure
by the time the {\tt contains} operation locates it. This surprising behaviour
occurs due to the values $1$ and $2$ being on the same branch of marked nodes in
the list, which makes it possible for ``Con(2)'' to resume traversing from
$\node_1$ and find $2$. On the other hand, ``Con(3)'' cannot find $3$ by
traversing the nodes from $\node_1$: the {\tt contains} operation will reach the
node $\node_2$ and return $\FALSE$, even though $3$ has been concurrently
inserted into the set by this time. Such behaviour is correct, since it can be
justified by a linearization [``Ins(1)'', ``Ins(2)'', ``Ins(4)'', ``Rem(1)'',
``Con(2): true'', ``Rem(2)'', ``Con(3): false'', ``Ins(3)'']. Intuitively, such
linearization order is possible, because pairs of events (``Con(2): true'',
``Rem(2)'') and (``Con(3): false'', ``Ins(3)'') overlap in the execution.

Building a correct linearization order by identifying a linearization point of
{\tt contains} is complex, since it depends on presence of concurrent {\tt
insert} and {\tt remove} operation as well as on current position in the
traversal of the data structure. We demonstrate a different approach to the
proof of the Optimistic Set based on the following insights. Firstly, we observe
that only decisions about a relative order of operations with the same argument
need to be committed into the abstract history, since linearizability w.r.t. the
sequential specification of a set does not require enforcing any additional
order on concurrent operations with different arguments. Secondly, we postpone
decisions about ordering {\tt contains} operations w.r.t. concurrent events till
their return values are determined. Thus, in the abstract history for
Figure~\ref{fig:settrace}, ``Con(2): true'' and ``Rem(2)'' remain unordered
until the former encounters the node removed by the latter, and the order
between operations becomes clear. Intuitively, we construct a linear order on
completed events with the same argument, and let {\tt contains} operations
be inserted in a certain place in that order rather than appended to it.

\newparagraph{Preliminaries.}
We assume that a set $\nodeType$ is a set of pointers to nodes, and that the
state of the linked list is represented by a partial map $\nodeType
\rightharpoonup \nodeType \times {\sf Int} \times {\sf Bool}$. To aid in
constructing the abstract history $(\events, \order)$, the code maintains a
piece of ghost state---a partial function $\ghostnode : \eidType
\rightharpoonup \nodeType$. Given the identifier $\eid$ of
an event $\events(\eid)$ denoting an ${\tt insert}$ that has inserted its value
into the set, $\ghostnode(\eid)$ returns a node identifier (a pointer) of that
value in the data structure. Similarly, for a successful remove event identifier
$\eid$, $\ghostnode(\eid)$ returns a node identifier that the corresponding
operation removed from the data structure.

\newparagraph{Commitment points.}
The commitment points in the ${\tt insert}$ and ${\tt remove}$ operations are
denoted by ghost code in Figure~\ref{fig:setghostcode1}. They are similar in
structure and update the order of events in the abstract history in the same way
described by ${\tt OrderInsRem}$. That is, these commitment points maintain a
linear order on completed events of operations with the same argument: on the
first line of ${\tt OrderInsRem}$, the current {\tt insert}/{\tt remove} event
identified by $\myEid$ gets ordered after each operation $e$ with the same
argument as $\myEid$. On the second line of ${\tt OrderInsRem}$, uncompleted
{\tt insert} and {\tt remove} events with the same argument are ordered after
$\myEid$. Note that uncompleted {\tt contains} events remain unordered w.r.t.
$\myEid$, so that later on at the commitment point of {\tt contains} they could
be ordered before the current {\tt insert} or {\tt remove} operation (depending
on whether they return $\FALSE$ or $\TRUE$ accordingly), if it is necessary.

\begin{figure}[t]
\begin{lstlisting}[numbers=none]
OrderInsRem() {
  #\colorbox{gray}{
    \begin{tabular}{@{}l@{}}
    $\order := (\order
          \cup \{ (e, \myEid) \mid
                e \in \ids{\complete{\events}}
                \land \argsof{\events(e)} = \argsof{\events(\myEid)} \})^+;$
    \\
    $\order := (\order
          \cup \{ (\myEid, e) \mid
              \begin{array}[t]{@{}l@{}}
              e \in \ids{\incomplete{\events}}
              \land \argsof{\events(e)} = \argsof{\myEid}
              \\ \hfill
              {} \land \opof{\events(e)} \neq \containsOp \})^+;
              \end{array}
      $
    \end{tabular}
    }
  #
}
\end{lstlisting}

\begin{minipage}{0.45\textwidth}
\begin{lstlisting}[numbers=none]
#${\tt commit}_{\tt insert}$#() {
  #\colorbox{gray}{
    \begin{tabular}{@{}l@{}}
    $\resof{\events(\myEid)} := ({\tt c.val} \neq {\tt v})$;\\
    if (${\tt c.val} \neq {\tt v}$)\\
    \ \ $\ghostnode[\myEid] := c$;\\
    OrderInsRem();
    \end{tabular}
    }
  #
}
\end{lstlisting}
\end{minipage}
\hspace{10pt}
\begin{minipage}{0.45\textwidth}
\begin{lstlisting}[numbers=none]
#${\tt commit}_{\tt remove}$#() {
  #\colorbox{gray}{
    \begin{tabular}{@{}l@{}}
    $\resof{\events(\myEid)} := ({\tt c.val} = {\tt v})$;\\
    if (${\tt c.val} = {\tt v}$)\\
    \ \ $\ghostnode[\myEid] := c$;\\
    OrderInsRem();
    \end{tabular}
    }
    #
}
\end{lstlisting}
\end{minipage}
\caption{
  The auxiliary code executed at the commitment points of {\tt insert} and {\tt remove}
  \label{fig:setghostcode1}
}
\end{figure}

\begin{figure}

\begin{lstlisting}[numbers=none]
#${\tt commit}_{\tt contains}$#() {
  #\colorbox{gray}{
  \begin{tabular}{@{}l@{}}
    $\resof{\events(\myEid)}$ := (curr.val = v);
    \\
    $\eidType$ \ttafter := if (curr.val = v) 
    \begin{tabular}[t]{l}
    then $\insertOf(\events, {\tt curr})$\\
    else $\lastremove(\events, \order, {\tt v})$;
    \end{tabular}
    \\
    if ($\ttafter \neq \bot$)
    \\
    \ \ $\order := (\order \cup \{ (\ttafter, \myEid) \})^+;$
    \\
    $\order := (\order \cup
      \{ (\myEid, \eid) \mid
            \nedge{\eid}{\order}{\myEid} \land
            \argsof{\events(\eid)} = \argsof{\events(\myEid)}
            \})^+;$
  \end{tabular}
  }
  #
}
\end{lstlisting}
where for an abstract history $(\events, \order)$, a node identifier $n$ and a value $v$:
\[
\begin{array}{@{}c@{}}
\insertOf(\events, n) =
\begin{cases}
\eid, \mbox{if } \ghostnode(\eid) = n, \opof{\events(\eid)} = \insertOp \mbox{ and } \resof{\events(\eid)} = \TRUE
\\
\mbox{undefined otherwise}
\end{cases}
\\[15pt]
\lastremove(\events, \order, v) =
\begin{cases}
\eid, & \mbox{if }
  \events(\eid) = (\_, \removeOp, v, \TRUE)
\\ & {}\land
(\forall \eid' \ldotp \opof{\events(\eid)} = \removeOp
  \land \argsof{\events(\eid')} = v \implies{}
\\ & \hfill
\edge{\eid'}{\order}{\eid})
\\
\bot, & \mbox{if }
\lnot\exists \eid \ldotp \events(\eid) = (\_, \removeOp, v, \TRUE)
\end{cases}
\end{array}
\]

\vspace{-10pt}
\caption{
  The auxiliary code executed at the commitment point of {\tt contains}
  \label{fig:setghostcode2}
}
\end{figure}

At the commitment point, the {\tt remove} operation assigns a return value to
the corresponding event. When the removal is successful, the commitment point
associates the removed node with the event by updating $\ghostnode$. Let us
illustrate how ${\tt commit}_{\tt remove}$ changes abstract histories on the
example. For the execution in Figure~\ref{fig:settrace}, after starting the
operation ``Rem(2)'' we have the abstract history Figure~\ref{fig:setcommit}(a),
and then at point (B) ``Rem(2)'' changes the history to
Figure~\ref{fig:setcommit}(b). The uncompleted event ``Con(2)'' remains
unordered w.r.t. ``Rem(2)'' until it determines its return value ($\TRUE$) later
on in the execution, at which point it gets ordered before ``Rem(2)''.

At the commitment point, the {\tt insert} operation assigns a return value to
the event based on the check ${\tt c.val} \neq {\tt v}$ determining whether
${\tt v}$ is already in the set. In the execution Figure~\ref{fig:settrace},
prior to the start of ``Ins(3)'' we have the abstract history
Figure~\ref{fig:setcommit}(c). When the event starts, a new event is added into
the history (commitment point of kind 1), which changes it to
Figure~\ref{fig:setcommit}(d). At point (D) in the execution, ${\tt commit}_{\tt
insert}$ takes place, and the history is updated to
Figure~\ref{fig:setcommit}(e). Note that ``Ins(3)'' and ``Con(3)'' remain
unordered until the latter determines its return value ($\FALSE$) and orders
itself before ``Ins(3)'' in the abstract history.


The commitment point at
lines~\ref{line:locatecommit_start}--\ref{line:locatecommit_end} of the {\tt
contains} operation occurs at the last iteration of the sorted list traversal in
the {\tt locate} method. The last iteration takes place when ${\tt curr.val}
\geq v$ holds. In Figure~\ref{fig:setghostcode2}, we present the auxiliary code
${\tt commit}_{\tt contains}$ executed at line~\ref{line:locatecommit_end} in
this case. Depending on whether a requested value is found or not, the abstract
history is updated differently, so we further explain the two cases separately.
In both cases, the {\tt contains} operation determines which event in the
history it should immediately follow in all linearizations.

\smallskip
\noindent\textbf{Case (i).} If {\tt curr.val = v}, the requested value {\tt v}
is found, so the current event $\myEid$ receives $\TRUE$ as its return value.
In this case, ${\tt commit}_{\tt contains}$ adds two kinds of edges in the
abstract history.
\begin{itemize}
\item Firstly, $(\insertOf(\events, {\tt curr}), \myEid)$ is added to ensure
that $\myEid$ occurs in all linearizations of the abstract history after the
{\tt insert} event of the node ${\tt curr}$.
\item Secondly, $(\myEid, \eid)$ is added for every other identifier $\eid$ of
an event that does not precede $\myEid$ and has an argument $v$. The requirement
not to precede $\myEid$ is explained by the following. Even though at commitment
points of {\tt insert} and {\tt remove} operations we never order events w.r.t.
{\tt contains} events, there still may be events preceding $\myEid$ in real-time
order. Consequently, it may be impossible to order $\myEid$ immediately after
$\insertOf(\events, {\tt curr})$.
\end{itemize}
At point (C) in the example from Figure~\ref{fig:settrace}, ${\tt commit}_{\tt
contains}$ in ``Con(2)'' changes the history from Figure~\ref{fig:setcommit}(b)
to Figure~\ref{fig:setcommit}(c). To this end, ``Con(2)'' is completed with a
return value $\TRUE$ and gets ordered after ``Ins(2)'' (this edge happened to be
already in the abstract history due to the real-time order), and also in front
of events following ``Ins(2)'', but not preceding ``Con(2)''. This does not
include ``Ins(4)'' due to the real-time ordering, but includes ``Rem(2)'', so
the latter is ordered after the {\tt contains} event, and all linearizations of
the abstract history Figure~\ref{fig:setcommit}(c) meet the sequential
specification in this example. In general case, we also need to show that
successful {\tt remove} events do not occur between $\insertOf(\events, {\tt
curr}), \myEid)$ and $\myEid$ in the resulting abstract history, which we
establish formally in \cite{ext}. Intuitively, when $\myEid$ returns $\TRUE$,
all successful removes after $\insertOf(\events, {\tt curr})$ are concurrent
with $\myEid$: if they preceded $\myEid$ in the real-time order, it would be
impossible for the {\tt contains} operation to reach the removed node by
starting from the head of the list in order return $\TRUE$.

\smallskip
\noindent\textbf{Case (ii).} Prior to executing ${\tt commit}_{\tt contains}$,
at line~\ref{line:locatecommit_start} we check that ${\tt curr.val} \geq {\tt
v}$. Thus, if {\tt curr.val = v} does not hold in ${\tt commit}_{\tt contains}$,
the requested value {\tt v} is not found in the sorted list, and $\FALSE$
becomes the return value of the current event $\myEid$. In this case, ${\tt
commit}_{\tt contains}$ adds two kinds of edges in the abstract history.
\begin{itemize}
\item Firstly, $(\lastremove(\events, \order, {\tt v}), \myEid)$ is added, when
there are successful remove events of value $v$ (note that they are linearly
ordered by construction of the abstract history, so we can choose the last of
them). This ensures that $\myEid$ occurs after a successful remove event in all
linearizations of the abstract history.
\item Secondly, $(\myEid, \eid)$ is added for every other identifier $\eid$ of
an event that does not precede $\myEid$ and has an argument $v$, which is
analogous to the case (i).
\end{itemize}
Intuitively, if $v$ has never been removed from the set, $\myEid$ needs to
happen in the beginning of the abstract history and does not need to be ordered
after any event.

For example, at point (D) in the execution from Figure~\ref{fig:settrace}, ${\tt
commit}_{\tt contains}$ changes the abstract history from
Figure~\ref{fig:setcommit}(e) to Figure~\ref{fig:setcommit}(f). To this end,
``Con(3)'' is ordered in front of all events with argument $3$ (specifically,
``Ins(3)''), since there are no successful removes of $3$ in the abstract
history. Analogously to the case (i), in general to ensure that all
linearizations of the resulting abstract history meet the sequential
specification, we need to show that there cannot be any successful {\tt insert}
events of $v$ between $\lastremove(\events, \order, {\tt v})$ (or the beginning
of the abstract history, if it is undefined) and $\myEid$. We prove this
formally in \cite{ext}. Intuitively, when $\myEid$ returns
$\FALSE$, all successful {\tt insert} events after $\lastremove(\events, \order,
{\tt v})$ (or the beginning of the history) are concurrent with $\myEid$: if
they preceded $\myEid$ in the real-time order, the inserted nodes would be
possible to reach by starting from the head of the list, in which case the {\tt
contains} operation could not possibly return $\FALSE$.







\section{Related Work}


\vspace{-1pt}
There has been a great deal of work on proving algorithms linearizable;
see~\cite{DongolD14} for a broad survey. However, despite a large number of
techniques, often supported by novel mathematical theory, it remains the case
that all but the simplest algorithms are difficult to verify. Our aim is to
verify the most complex kind of linearizable algorithms, those where the
linearization of a set of operations cannot be determined solely by examining
the prefix of the program execution consisting of these operations. Furthermore,
we aim to do this while maintaining a relatively simple proof argument.


Much work on proving linearizability is based on different kinds of
\emph{simulation proofs}. Loosely speaking, in this approach the linearization
of an execution is built incrementally by considering either its prefixes or
suffixes (respectively known as \emph{forward} and \emph{backward}
simulations). This supports inductive proofs of linearizability: the proof
involves showing that the execution and its linearization stay in correspondence
under forward or backward program steps. The linearization point method is an
instance of forward simulation: a syntactic point in the code of an operation is
used to determine when to add it to the linearization.


As we explained in \S\ref{sec:intro}, forward simulation alone is not sufficient
in general to verify linearizability. However, Schellhorn et
al.~\cite{schellhorn} prove that backward simulation alone \emph{is} always
sufficient.
They also present a proof technique 
and use it to verify the Herlihy-Wing queue~\cite{linearizability}. However,
backwards simulation proofs are difficult to understand intuitively:
programs execute forwards in time, and therefore it is much more natural to
reason this way.

The queue originally proposed by Herlihy and Wing in their paper on
linearizability~\cite{linearizability} has proved very difficult to verify.
Their proof sketch is based on reasoning about the possible linearizations
arising from a given queue configuration. 
Our method could be seen as being midway between this approach and linearization
points. We use partiality in the abstract history to represent sets of possible
linearizations, which helps us simplify the proof by omitting irrelevant
ordering (\S\ref{sec:linearizability}).


Another class of approach to proving linearizability is based on special-purpose
program logics. These can be seen as a kind of forward simulation: assertions in
the proof represent the connection between program execution and its
linearization. To get around the incompleteness of forward simulation, several
authors have introduced auxiliary notions that support limited reasoning about
future behaviour in the execution, and thus allow the proof to decide the order
of operations in the linearization~\cite{rgsep,rgsim-lin,turon-popl13}.
However, these new constructs have subtle semantics, which results in proofs
that are difficult to understand intuitively.



Our approach is based on program logic, and therefore is a kind of forward
simulation. The difference between us and previous program logics is that we do
not explicitly construct a linear order on operations, but only a partial order.
This removes the need for special constructs for reasoning about future
behaviour, but creates the obligation to show that the partially ordered
abstract history can always be linearized.

One related approach to ours is that of Hemed et al.~\cite{cal}, who
generalise linearizability to data structures with concurrent specifications
(such as barriers) and propose a proof method for establishing it. To this end,
they also consider histories where some events are partially ordered---such
events are meant to happen concurrently. However, the goal of Hemed et al.'s
work is different from ours: their abstract histories are never linearized, to
allow concurrent specifications; in contrast, we guarantee the existence of a
linearization consistent with a sequential specification. It is likely that the
two approaches can be naturally combined.




Aspect proofs~\cite{henzinger-concur13} are a non-simulation approach that is
related to our work. An aspect proof imposes a set of forbidden shapes on the
real-time order on methods; if an algorithm avoids these shapes, then it is
necessarily linearizable. These shapes are specific to a particular
data structure, and indeed the method as proposed in~\cite{henzinger-concur13}
is limited to queues (extended to stacks in~\cite{dodds-popl15}). In contrast,
our proof method is generic, not tied to a particular kind of data
structure. Furthermore, checking the absence of forbidden shapes in the aspect
method requires global reasoning about the whole program execution, whereas our
approach supports inductive proofs. The original proof of the TS stack used an
extended version of the aspect approach~\cite{dodds-popl15}. However, without a
way of reasoning inductively about programs, the proof of correctness reduced to
a large case-split on possible executions. This made the proof involved and
difficult. Our proof is based on an inductive argument, which makes it easier.

Another class of algorithms that are challenging to verify are those that use
\emph{helping}, where operations complete each others' work. In such algorithms,
an operation's position in the linearization order may be fixed by a helper
method. Our approach can also naturally reason about this pattern: the helper
operation may modify the abstract history to mark the event of the operation
being helped as completed.

The Optimistic set was also proven linearizable by O'Hearn et al. in
\cite{hindsight}. The essence of the work is a collection of lemmas (including
the Hindsight Lemma) proven outside of the logic to justify conclusions about
properties of the past of executions based on the current state. Based on our
case study of the Optimistic set algorithm, we conjecture that at commitment
points we make a constructive decision about extending abstract history where
the hindsight proof would use the Hindsight Lemma to non-constructively extend a
linearization with the {\tt contains} operation.

\section{Conclusion and Future Work}

\vspace{-1pt}
The popular approach to proving linearizability is to construct a total
linearization order by appending new operations as the program executes. This
approach is straightforward, but is limited in the range of algorithms it can
handle. In this paper, we present a new approach which lifts these
limitations, while preserving the appealing incremental proof structure of
traditional linearization points. As with linearization points, our
fundamental idea can be explained simply: at commitment points, operations
impose order between themselves and other operations, and all linearizations of
the order must satisfy the sequential specification. Nonetheless, our
technique generalises to far more subtle algorithms than traditional
linearization points. 

We have applied our approach to two algorithms known to present particular
problems for linearization points. Although, we have not presented it here, our
approach scales naturally to \emph{helping}, where an operation is completed by
another thread. We can support this, by letting any thread complete the
operation in an abstract history. In future work, we plan to apply our approach
to the Time-Stamped stack \cite{dodds-popl15}, which poses verification
challenges similar to the TS queue; a flat-combining style algorithm, which
depends fundamentally on helping, as well as a range of other challenging
algorithms. In this paper we have concentrated on simplifying manual proofs.
However, our approach also seems like a promising candidate for automation, as
it requires no special meta-theory, just reasoning about partial orders. We are
hopeful that we can automate such arguments using off-the-shelf solvers such as
Z3, and we plan to experiment with this in future.




\bibliographystyle{abbrv}
\bibliography{references}


\newcommand{\added}{{\rm added}}
\newcommand{\assert}[2]{{#1} \models {#2}}
\newcommand{\rf}{{\sf rf}}

\newcommand{\dependson}[1]{
	\ifdebug
	{\bf\color{red} Dependency: {\it #1}.}
	\fi
}

\clearpage
\appendix
\def\addcontentsline#1#2#3{\oldaddcontentsline{#1}{#2}{#3}}
\setcounter{tocdepth}{2}
\renewcommand{\contentsname}{Table of Annexes}
\tableofcontents

\newcommand{\hsemn}[2]{\mathcal{H}_{#1}(#2)}

\newcommand{\cont}{c}
\newcommand{\contType}{{\sf Cont}}

\newcommand{\acont}{\mathbbm{c}}
\newcommand{\acontType}{{\sf ACont}}

\newcommand{\astateType}{{\sf SState}}

\DeclarePairedDelimiter{\doublep}{\llparenthesis}{\rrparenthesis}

\newcommand{\trans}[3]{{#1} \mathrel{{\rightarrowtail}_{#2}} {#3}}

\newcommand{\id}{{\sf id}}

\section{Syntax and Semantics of Data Structure Operations}\label{app:proglang}


\ag{Check if we need formal constraints on thread-locality of arg and res locations.}

\newparagraph{Operation syntax}
Data structures implement every operation $\op \in \opType$ as {\em sequential
commands} with the following syntax:
\[
\com \in \comType ::= \pcom \mid \com \seqc \com \mid \com + \com
\mid \com^* \mid \cskip,\quad \mbox{where } \pcom \in \pcomType
\]
The grammar includes {\em primitive commands} $\pcom$ from a set $\pcomType$,
sequential composition $\com \seqc \com$, non-deterministic choice $\com
+ \com$, finite iteration $\com^*$ (we are interested only in
terminating executions) and a termination marker $\cskip$. We use $+$ and
$(\ )^*$ instead of conditionals and while loops for theoretical
simplicity: as we show further, given appropriate primitive commands
conditionals and loops can be encoded.


\newparagraph{Operations semantics} Assuming a set ${\sf Loc} \subseteq
\valueType$ of memory locations, we let $\stateType = {\sf Loc} \to \valueType$
denote the set of all possible states of an implementation and let $\state$
range over them. States are shared among threads from $\tidType$.


We assume that for every thread $\tid$ there are locations $\larg[\tid],
\lres[\tid] \in \dom{\stateType}$, which are used only by a thread $\tid$ for
storing an argument of a data structure operation and for returning its result
correspondingly.

We assume that the semantics of each primitive command $\pcom$ is given by a
non-deterministic {\em state transformer} $\db{\pcom}_\tid : \stateType \to
\powerset{\stateType}$, where $\tid \in \tidType$. For a state $\state$, the set
of states $\intp{\tid}{\pcom}{\state}$ is the set of possible states resulting
from executing $\pcom$ atomically in a state $\state$ by a thread $\tid$. We
also assume a primitive command $\id \in \pcomType$ with the interpretation
$\intp{\tid}{\id}{\state} \triangleq \{\state \}$.

State transformers may have different semantics depending on a thread
identifier, which we use to access thread-local memory locations such as
$\larg[\tid]$ and $\lres[\tid]$ for each thread $\tid$.


Figure~\ref{fig:opsem} gives selected rules of operational semantics;
$\sttrans{\com}{\state}{\tid}{\com'}{\state'}$ indicates a transition
from $\com$ to $\com'$ by performing a primitive command $\pcom$ in a thread
$\tid$ that updates the state from $\state$ to $\state'$. The rules of the
operational semantics are standard.




\begin{figure}[t]
${\rightarrowtail}
  \subseteq \comType
  \times \pcomType
  \times \comType:$

{\centering
$\begin{array}{l l l}
\myfrac{
  \trans{\com_1}{\pcom}{\com'_1}
}{
  \trans{\com'_1 \seqc \com_2}{\pcom}{\com'_1 \seqc \com_2}
} &
\myfrac{
  \,
}{
  \trans{\com^*}{\id}{\com; \com^*}
} &
\myfrac{
  \,
}{
  \trans{\pcom}{\pcom}{\cskip}
} \\
\myfrac{
  \,
}{
  \trans{\cskip \seqc \com}{\id}{\com}
} &
\myfrac{
  \,
}{
  \trans{\com^*}{\id}{\cskip}
} &
\myfrac{
  i \in \{1, 2\}
}{
  \trans{\com_1 + \com_2}{\id}{\com_i}
}
\end{array}$\par}

\medskip
${\longrightarrow}
  \subseteq (\comType \times \stateType)
  \times \tidType
  \times (\comType \times \stateType):$

{\centering
$\myfrac{
  \state' \in \intp{\tid}{\pcom}{\state} \quad \trans{\com}{\pcom}{\com'}
}{
  \sttrans{\com}{\state}{\tid}{\com'}{\state'}
}$\par}

\caption{The operational semantics of sequential commands
\label{fig:opsem}}
\end{figure}

Let us show how to define traditional control flow primitives, such as an
if-statement and a while-loop, in our programming language. Assuming a language
for arithmetic expressions, ranged over by $\mathcal{E}$, and a function
$\db{\mathcal{E}}_\state$ that evaluates expressions in a given state $\state$,
we define a primitive command ${\tt assume}(\mathcal{E})$ that acts as a filter
on states, choosing only those where $\mathcal{E}$ evaluates to non-zero
values.
\[
\intp{\tid}{{\tt assume}(\mathcal{E})}{\state} \triangleq
  (\mbox{if } \db{\mathcal{E}}_\state \neq 0
  \mbox{ then } \{\state\}
  \mbox{ else } \emptyset).
\]
Using ${\tt assume}(\mathcal{E})$ and the C-style negation $!\mathcal{E}$ in expressions, a conditional
and a while-loop can be implemented as the following commands:
\begin{gather*}
  {\tt if}\ \mathcal{E}\ {\tt then}\ C_1\ {\tt else}\ C_2 \triangleq ({\tt assume}(\mathcal{E}); C_1) + ({\tt assume}(!\mathcal{E}); C_2) \\
  {\tt while}\ \mathcal{E}\ {\tt do}\ C \triangleq ({\tt assume}(\mathcal{E}); C)^*; {\tt assume}(!\mathcal{E})
\end{gather*}

\newparagraph{Data structure histories}
We now define the set of histories produced by a data structure implementation
$\cdt$, which is required by the definition of linearizability
(Definition~\ref{dfn:linearizability}, \S\ref{sec:linearizability}). Informally,
these are the histories produced by threads repeatedly invoking data structure
operations in any order and with any possible arguments (this can be thought of
as running the data structure implementation under its {\em most general
  client}~\cite{linown}). We define this formally using a concurrent small-step
semantics of the data structure $D$ that also constructs corresponding
histories: ${\twoheadrightarrow}_{\cdt} \subseteq (\tpType \times \stateType
\times \historyType)^2$, where $\tpType = \tidType \to (\comType \uplus
\{\idle\})$. Here a function $\tp \in \tpType$ characterises the progress of an
operation execution in each thread $\tid$: $\tp(\tid)$ gives the continuation of
the code of the operation executing in thread $\tid$, or $\idle$ if no operation
is executing. The relation ${\twoheadrightarrow}_{\cdt}$ defines how a step of
an operation in some thread transforms the data structure state and the history:
\[
\begin{array}{@{}c@{}}
\myfrac{
	\eid \notin \ids{\events}
	\quad \args \in \valueType
  	\quad \events' = \events \sub{\eid}{(\tid, \op, \args, \Todo)}
  	\quad \order' = \order \cup \{ (\eidp, \eid) \mid \eidp \in \complete{\events} \}
}{
  \htrans{\tp \sub{\tid}{\idle}, \state, (\events, \order)}
         {\cdt}
         {\tp \sub{\tid}{\cdt(\op)}, \state \sub{\larg[\tid]}{\args}, (\events', \order')}
}
\\[10pt]
\myfrac{
	\sttrans{\com}{\state}{\tid}{\com'}{\state'}
}{
  \htrans{\tp \sub{\tid}{\com}, \state, (\events, \order)}
         {\cdt} 
         {\tp \sub{\tid}{\com'}, \state', (\events, \order)}
}
\\[10pt]
\myfrac{
	\eid = {\sf last}(\tid, (\events, \order))
	\quad \events(\eid) = (\tid, \op, \args, \Todo)
	\quad \events' = \events \sub{\eid}{(\tid, \op, \args, \state(\lres[\tid]))}
}{
  \htrans{\tp \sub{\tid}{\cskip}, \state, (\events, \order)}
         {\cdt}
         {\tp \sub{\tid}{\idle}, \state, (\events', \order)}
}
\end{array}
\]
First, an idle thread $\tid$ may call any operation $\op \in \opType$ with any
argument $\args$. This sets the continuation of thread $\tid$ to $\cdt(\op)$,
stores $\args$ into $\larg[\tid]$ and adds a new event $i$ to the history,
ordered after all completed events. Second, a thread $\tid$ executing an
operation may do a transition allowed by the sequential semantics of the
operation's implementation. Finally, when a thread $\tid$ finishes executing an
operation, as denoted by a continuation $\cskip$, the corresponding event is
completed with the return value in $\lres[\tid]$. The identifier ${\sf last}(t,
(E, R))$ of this event is determined as the last one in $E$ by thread $t$
according to $R$: as per Definition~\ref{def:history}, events by each thread are
totally ordered in a history, ensuring that ${\sf last}(\tid, \history)$ is
well-defined.
\[
{\sf last}(\tid, (\events, \order)) \triangleq
\begin{cases}
\eid, & \mbox{ such that } \tidof{\events(\eid)} = \tid \\
 & \mbox{ and } (\forall j \ldotp j \neq i \land \tidof{\events(j)} = \tid \implies \edge{j}{\order}{i})\\
\bot, & \mbox{ if } \forall \eid \ldotp \tidof{\events(\eid)} \neq \tid
\end{cases}
\]


\newparagraph{Specification histories} We assume a set of specification states $\astateType$, ranged over by $\astate$, and a specification $\adt$ of a data
structure that interprets every operation $\op \in \opType$ as a sequential
state transformer $\doublep{\op}_\adt : (\astateType \times \valueType) \to
\powerset{\astateType \times \valueType}$. When $(\astate', r) \in
\doublep{\op}_\adt(\astate, a)$, we say that sequential execution of $\op$ with
an argument $a$ leads to a state $\astate'$ with a return value $r$.

We generate all sets of histories of a specification $\adt$ starting from an
initial state $\astate_0$ as follows:
\[
\hsem{\adt, \astate_0} \triangleq \{
  \history \mid
  \isSeq(\history) \land
  \mhtrans{(\astate_0, (\emptyset, \emptyset)}
         {\adt}
         {\_, \_, \history} \},
\]
where ${\twoheadrightarrow}_{\adt}$ is a relation constraining a single step in
the generation of sequential histories (similarly to
${\twoheadrightarrow}_{\cdt}$ from Section~\ref{sec:lang}):
\[
\myfrac{
  \begin{array}{c}
  \eid \notin \ids{\events}
  \quad \args \in \valueType
  \quad \tid \in \tidType
  \quad (\astate', r) \in \doublep{\op}_\adt(\astate, a)
  \\
    \events' = \events \sub{\eid}{(\tid, \op, \args, r)}
    \quad \order' = \order \cup \{ (\eidp, \eid) \mid \eidp \in \complete{\events} \}
  \end{array}
}{
  \htrans{\astate, (\events, \order)}
         {\adt}
         {\astate', (\events', \order')}
}
\]

Having generated all histories $\histories = \hsem{\adt, \astate_0}$ of a data
structure specification $\adt$, we can use Theorem~\ref{thm:soundness} to
conclude that $\hsem{\cdt, \state_0} \sqsubseteq \hsem{\adt, \astate_0}$.

\artem{TODO specification histories}

\newcommand{\bigland}[2]{\bigwedge_{#1}\,({#2})}
\newcommand{\safe}[5]{{\sf safe}_{#1}(#2, #3, #4, #5)}
\newcommand{\ssw}{{\sf ssw}}

\newcommand{\continv}{{\sf cinv}}
\newcommand{\isidle}{{\rm isIdle}}

\newcommand{\nhtrans}[4]{
\langle {#2} \rangle
\mathrel{{\twoheadrightarrow}^{#1}_{#3}}
\langle {#4} \rangle
}

\section{Logic}\label{app:logic}

Assertions $\lP, \lQ \in \assnType$ are described with the following grammar:
\[
\begin{array}{lcl}
  E, F & ::= & a \mid X \mid E+F \mid \dots, \quad \mbox{where } X \in {\sf LVars}, a \in \valueType \\
  p \in {\sf Pred} & ::= & E = F \mid E \mapsto F \mid ... \\
  \lP, \lQ \in {\sf Assn} & ::= &
    p \mid \lP \lor \lQ \mid \lP \land \lQ \mid \lP \Rightarrow \lQ \mid \exists X \ldotp \lP
    \mid \forall X \ldotp \lP
\end{array}
\]
Thus, assertions from $\assnType$ contain the standard logical connectives, and
among them the existential and universal quantification over logical variables
$X$, ranging over a set $\lvarsType$. We assume a set ${\sf Pred}$ of
predicates, which includes a predicate $E \mapsto F$ denoting a concrete state
that describes a singleton heap. We also assume a function $\lint : \lvarsType
\rightharpoonup \valueType$ denoting an interpretation of logical variables.

Formulas ${\sf Assn}$ denote sets of concrete states, histories, ghost states and interpretation of logical variables as defined by a satisfaction
relation $\models$ in Figure~\ref{fig:assnSem}. Additionally, we define a function $\evalf{-}{-} : {\sf Assn} \times ({\sf Expr} \times {\sf LVars} \to \valueType) \to \powerset{\stateType \times \historyType \times \ghostType}$:
\[
\evalf{\lP}{\lint} = \{ (\state, \history, \ghost) \mid \assert{\state, \history, \ghost, \lint}{\lP} \}
\]

The proof rules of the logic are presented in Figure~\ref{fig:proofrules}.

\begin{figure}[t]
$
\begin{array}{ll}
\assert{(\state, \history, \ghost, \lint)}{E \mapsto F},
& \mbox{iff }
  \state(\db{E}_\lint) = \db{F}_\lint
\\
\assert{(\state, \history, \ghost, \lint)}{E = F},
& \mbox{iff }
  \db{E}_\lint = \db{F}_\lint
\\
\assert{(\state, \history, \ghost, \lint)}{\lP \lor \lQ},
& \mbox{iff either }
  \assert{(\state, \history, \ghost, \lint)}{\lP} 
  \mbox{ or }
  \assert{(\state, \history, \ghost, \lint)}{\lQ}
  \mbox{ holds}
\\
\assert{(\state, \history, \ghost, \lint)}{\lP \land \lQ},
& \mbox{iff both }
  \assert{(\state, \history, \ghost, \lint)}{\lP} 
  \mbox{ and }
  \assert{(\state, \history, \ghost, \lint)}{\lQ}
  \mbox{ hold}
\\
\assert{(\state, \history, \ghost, \lint)}{\lP \Rightarrow \lQ},
& \mbox{iff }
  \assert{(\state, \history, \ghost, \lint)}{\lQ} 
  \mbox{ holds, when so does }
  \assert{(\state, \history, \ghost, \lint)}{\lP}
\\
\assert{(\state, \history, \ghost, \lint)}{\exists X \ldotp \lP},
& \mbox{iff there is } a \in \valueType \mbox{ such that }
  \assert{(\state, \history, \ghost, \lint\sub{X}{a})}{\lP} 
  \mbox{ holds}
\\
\assert{(\state, \history, \ghost, \lint)}{\forall X \ldotp \lP},
& \mbox{iff for any } a \in \valueType,
  \assert{(\state, \history, \ghost, \lint\sub{X}{a})}{\lP} 
  \mbox{ holds}
\end{array}
$
\caption{Semantics of the assertion language $\sf Assn$. Here we assume a
function $\evalf{-}{\lint} : {\sf Expr} \times ({\sf LVars} \to \valueType) \to
\valueType$ substitutes logical variables in expressions with their
interpretation and evaluates the result.}
\label{fig:assnSem}
\end{figure}

\begin{figure}[t]
$
\begin{array}{r@{\hspace{2em}}l}
\mbox{\footnotesize\sc(RG-Weaken)}
&
\dfrac{
  \spec{\rely, \guar}{\tid}{\lP}{\com}{\lQ}
  \quad
  \lP' \Rightarrow \lP
  \quad
  \rely' \subseteq \rely
  \quad
  \guar \subseteq \guar'
  \quad
  \lQ \Rightarrow \lQ'
}{
  \spec{\rely', \guar'}{\tid}{\lP'}{\com}{\lQ'}
} 
\\[10pt]
\mbox{\footnotesize\sc(Skip)}
&
\dfrac{
  \,
}{
  \spec{\rely, \guar}{\tid}{\lP}{\cskip}{\lP}
}
\\[10pt]
\mbox{\footnotesize\sc(Seq)}
&
\dfrac{
  \spec{\rely, \guar}{\tid}{\lP}{\com}{\lP'}
  \quad
  \spec{\rely, \guar}{\tid}{\lP'}{\com'}{\lQ}
}{
  \spec{\rely, \guar}{\tid}{\lP}{\com \seqc \com'}{\lQ}
}
\\[10pt]
\mbox{\footnotesize\sc(Choice)}
&
\dfrac{
  \spec{\rely, \guar}{\tid}{\lP}{\com}{\lQ}
  \quad
  \spec{\rely, \guar}{\tid}{\lP}{\com'}{\lQ}
}{
  \spec{\rely, \guar}{\tid}{\lP}{\com + \com'}{\lQ}
}
\\[10pt]
\mbox{\footnotesize\sc(Iter)}
&
\dfrac{
  \spec{\rely, \guar}{\tid}{\lP}{\com}{\lP}
}{
  \spec{\rely, \guar}{\tid}{\lP}{\com^*}{\lP}
}
\end{array}
$
\caption{Proof rules of Rely/Guarantee
}
\label{fig:proofrules}
\end{figure}

\newparagraph{Semantics of Hoare triples} For every set of configurations $p \in
\powerset{\Config}$ and each rely $\rely$, let $\ssw(p, \rely)$ be the {\em
strongest stable weaker} set of configurations:
\[
\ssw(p, \rely) = p \cup \{ \conf' \mid \conf \in p
\land (\conf, \conf') \in \rely \}
\]

\begin{definition}[Safety Judgement]\label{app:def:safe}
 We define ${\sf safe}_\tid$ as the greatest relation such that the following holds whenever $\safe{\tid}{\rely, \guar}{p}{\com}{q}$ does:
  \begin{itemize}
    \item if $\com \neq \cskip$, then
    $\begin{multlined}[t]
      \forall \com', \pcom \ldotp
          \trans{\com}{\pcom}{\com'} \implies
          \exists p' \ldotp {\sf stable}(p', \rely) \\ {}
          \land \semj{\guar}{\tid}{\ssw(p, \rely)}{\pcom}{p'}
          \land \safe{\tid}{\rely, \guar}{p'}{\com'}{q},
    \end{multlined}$
    \item if $\com = \cskip$, then $\ssw(p, \rely) \subseteq q$.
  \end{itemize}
\artem{any need in stabilising postconditions?}
\end{definition}


\begin{lemma}\label{app:lem:safe}
For any $\tid, \lP, \com, \op, \args$ and $\lQ$, if
$\spec{\rely, \guar}{\tid}{\lP}{\com}{\lQ}$ holds then
$\forall \lint \ldotp
\safe{\tid}{\rely, \guar}{\evalf{\lP}{\lint}}{\com}{\evalf{lQ}{\lint}}$
\end{lemma}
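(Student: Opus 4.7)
The plan is to prove the lemma by induction on the derivation of $\spec{\rely, \guar}{\tid}{\lP}{\com}{\lQ}$, doing a case analysis on the last rule applied. Since $\safe{\tid}{\rely, \guar}{p}{\com}{q}$ is defined as a greatest relation, the non-trivial cases will be established coinductively: it suffices to exhibit a relation $\mathcal{X}$ containing the tuples we care about and satisfying the unfolding condition in Definition~\ref{app:def:safe}.

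For the base cases, first consider the rule for primitive commands (Figure~\ref{fig:prim}). Fix $\lint$; its premises give $\semj{\guar}{\tid}{\evalf{\lP}{\lint}}{\pcom}{\evalf{\lQ}{\lint}}$ together with stability of $\evalf{\lP}{\lint}$ and $\evalf{\lQ}{\lint}$ under $\rely$. By stability, $\ssw(\evalf{\lP}{\lint}, \rely) = \evalf{\lP}{\lint}$, so the only transition $\trans{\pcom}{\pcom}{\cskip}$ is witnessed by taking $p' = \evalf{\lQ}{\lint}$; the remaining obligation is that the continuation $\cskip$ is safe with postcondition $\evalf{\lQ}{\lint}$, which reduces to $\ssw(\evalf{\lQ}{\lint}, \rely) \subseteq \evalf{\lQ}{\lint}$, again by stability. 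The Skip rule is similar but even simpler: the condition on $\cskip$ holds as long as $\evalf{\lP}{\lint}$ is stable, which I would assume is ensured by a well-formedness convention on the rules (or else handled via RG-Weaken; I would state this invariant up front and propagate it through the induction).

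The interesting inductive cases are Seq, Choice, Iter, and RG-Weaken. For RG-Weaken, I would prove by coinduction that ${\sf safe}$ is monotone in $\guar$, antitone in $\rely$, antitone in the precondition (modulo entailment) and monotone in the postcondition; this gives the rule directly. For Seq, I would define $\mathcal{X}$ to contain all pairs $(\com \seqc \com', \evalf{\lP}{\lint})$ where the IH gives safety of $\com$ leading to intermediate assertions that then feed into safety of $\com'$; the stepping condition follows by case analysis on whether $\com = \cskip$ (in which case we switch to $\com'$ via the reduction rule) or $\com \xrightarrow{\pcom} \com''$ (propagate the resulting $p'$). Choice is direct: the only steps of $\com + \com'$ are to $\com$ or $\com'$ with the identity state transformer, and safety of each branch follows from the IH.

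The main obstacle will be the Iter case. Here I would form a candidate relation $\mathcal{X}$ containing the pairs $(\com^*, \evalf{\lP}{\lint})$ and $(\com \seqc \com^*, \evalf{\lP}{\lint})$, and verify the coinductive condition using the premise safety of the body from $\lP$ to $\lP$: the only transitions of $\com^*$ are to $\cskip$ (handled by stability of $\lP$) or to $\com \seqc \com^*$ (fits back into $\mathcal{X}$); and transitions of $\com \seqc \com^*$ are handled as in the Seq case, relying on the IH for $\com$. The subtle point is that at the point when the body finishes and the continuation becomes $\cskip \seqc \com^*$, we must reuse the invariant $\lP$ to re-enter $\mathcal{X}$; this is where the $\ssw(-,\rely) \subseteq \lP$ consequence for $\cskip$ feeds back into stability of $\lP$ under $\rely$, closing the coinduction.
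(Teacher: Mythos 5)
The paper states Lemma~\ref{app:lem:safe} without proof, so there is no official argument to compare against; your proposal has to stand on its own. Its overall shape --- rule induction on the derivation of $\spec{\rely,\guar}{\tid}{\lP}{\com}{\lQ}$, with each case discharged by exhibiting a candidate relation and appealing to the coinductive (greatest-fixpoint) characterisation of ${\sf safe}_\tid$ from Definition~\ref{app:def:safe} --- is the standard soundness argument for such judgements and is correct in outline. Your treatment of the primitive-command case is exactly right: stability of $\evalf{\lP}{\lint}$ collapses $\ssw(\evalf{\lP}{\lint},\rely)$ to $\evalf{\lP}{\lint}$, the witness $p'$ is $\evalf{\lQ}{\lint}$, and the residual obligation on $\cskip$ is again stability of the postcondition. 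The monotonicity/antitonicity properties you propose for {\sc RG-Weaken}, and the candidate relations for {\sc Seq} and {\sc Iter} (closing the loop by re-entering the invariant $\lP$ when the body terminates), are the right ingredients.

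Two remarks. First, you correctly spot the one genuine gap in the paper's rule set: the {\sc Skip} axiom (and the precondition side of {\sc RG-Weaken}) carries no stability hypothesis, yet the $\cskip$ clause of Definition~\ref{app:def:safe} demands $\ssw(p,\rely)\subseteq q$, which is stability of the pre/postcondition. As literally stated the lemma fails for unstable $\lP$; your proposed fix (a well-formedness convention propagated as an extra induction invariant, which is how the lemma is actually used in the proof of Theorem~\ref{app:thm:abs}) is the right repair and should be made explicit rather than assumed. Second, a smaller point you gloss over: the operational semantics introduces $\id$-labelled administrative steps (for $\cskip\seqc\com$, $\com^*$ and $\com+\com'$), and discharging $\semj{\guar}{\tid}{\ssw(p,\rely)}{\id}{p'}$ for these requires the guarantee $\guar$ to admit stuttering transitions $(\conf,\conf)$; this is another implicit well-formedness assumption of the same kind, and it belongs in the same up-front invariant you propose to carry through the induction.
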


\subsection{Proof of Theorem~\ref{thm:soundness}}

For convenience, we further refer to the assumptions of
Theorem~\ref{thm:soundness} as a relation ${\sf safelib}$ defined as follows.
\begin{definition}\label{app:def:safelib}
Given a data structure $\cdt$, its initial state $\state_0 \in \stateType$ and a
set of sequential histories $\histories$, we say that ${\sf safelib}(\cdt,
\adt, \histories)$ holds, if there exists an assertion $\mathcal{I}$ and
relations $\rely_{\tid}, \guar_{\tid} \subseteq \Config^2$ for each $\tid \in
\tidType$ such that:
\begin{enumerate}
\item
  $\exists \ghost_0.\, \forall \lint \ldotp (\state_0, (\emptyset, \emptyset),
  \ghost_0) \in \evalf{I}{\lint}$;
\item 
$\forall \tid, \lint. \, {\sf stable}(\evalf{I}{\lint}, \rely_\tid)$;
\item 
  $\forall \history, \lint \ldotp (\_, \history, \_) \in \evalf{I}{\lint} \implies
  \abs(\history, \histories)$;
\item
$\forall \tid, \op.\, (\spec{\rely_\tid, \guar_\tid}
      {\tid}
      {\begin{array}{@{}c@{}}
      I \land \started_{\mathcal{I}}(\tid, \op)
      \end{array}}
      {\cdt(\op)}
      {\begin{array}{@{}c@{}}
      I \land \finished(\tid, \op)
      \end{array}})$;
\item
$\forall \tid, \tid' \ldotp \tid \neq \tid' \implies
  \guar_{\tid} \cup {\dashrightarrow}_{\tid} \subseteq \rely_{\tid'}$.
\end{enumerate}
\end{definition}


\begin{proposition}\label{app:prop:abs}
If $\mhupd{\history}{\history'}$, then there exists $\history''$ such that $\completes{\history}{\history''} \land \refines{\history''}{\history'}$
\end{proposition}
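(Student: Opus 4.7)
The plan is to take $\history'' = (\events', \order)$, where $\events'$ is the event set of $\history'$ and $\order$ is the real-time order of the original $\history$. The intuition is that $\hupd{}{}$ only ever completes events or adds edges, so the ``data'' of $\history'$ (its events) is already obtainable by completing $\history$'s events, while the order of $\history'$ is a superset of the order of $\history$; I will therefore need to split that information between the two relations in the conclusion.

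First, I would prove by an easy induction on the length of the $\mhupd{}{}$-chain the following three invariants: (a) $\ids{\events} = \ids{\events'}$, since neither clause of $\hupd{}{}$ adds or removes identifiers; (b) $\order \subseteq \order'$, since the only way edges change is by being added; and (c) for every $\eid \in \ids{\events}$, $\completes{\sub{\eid}{\events(\eid)}}{\sub{\eid}{\events'(\eid)}}$, since a step either leaves $\events(\eid)$ untouched or turns a $\Todo$ return value into an arbitrary $\res$. As a direct consequence, $\complete{\events} \subseteq \complete{\events'}$. Each of these is preserved through composition, so they lift from a single $\leadsto$-step to $\mhupd{}{}$.

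Next I would verify that $\history'' = (\events', \order)$ is a valid history per Definition~\ref{def:history}. Total ordering of events within a thread is inherited from $\history$ via (a) since the identifiers are the same and the order is unchanged. The interval-order property carries over for the same reason. The key point to check is that uncompleted events in $\events'$ remain $\order$-maximal: if $\resof{\events'(\eid)} = \Todo$ then by (c) also $\resof{\events(\eid)} = \Todo$ (since $\completes{}{}$ does not uncomplete), hence $\eid$ was already $\order$-maximal in $\history$, and this property is about $\order$ only so it carries to $\history''$.

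Finally, I would unfold the two conclusion relations. For $\completes{\history}{\history''}$: $\ids{\events''} = \ids{\events'} = \ids{\events}$ by (a); $\complete{\events} \subseteq \complete{\events''}$ is the consequence of (c); $\order \cap (\ids{\events''} \times \ids{\events''}) = \order = \order''$ trivially; and the per-event clause is exactly (c). For $\refines{\history''}{\history'}$: $\events'' = \events'$ by construction and $\order'' = \order \subseteq \order'$ by (b). The only subtlety, and hence the main obstacle, is the step that establishes the invariants (a)--(c) together with preservation of the history axioms; this requires carefully disentangling the two disjuncts of $\hupd{}{}$, but nothing beyond unfolding definitions is needed.
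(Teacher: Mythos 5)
Your proof is correct and uses exactly the same witness as the paper, which takes $\history'' = (\events', \order)$ and observes $\completes{(\events,\order)}{(\events',\order)}$ and $\refines{(\events',\order)}{(\events',\order')}$; the paper states this in one line, whereas you additionally spell out the induction over the $\leadsto$-chain and the check that $(\events',\order)$ is a well-formed history. No substantive difference.
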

The proof is straightforward: when $\mhupd{(\events,
\order)}{(\events', \order')}$ holds, $\completes{(\events,
\order)}{(\events', \order)}$ and $\refines{(\events', \order)}{(\events',
\order')}$.

In the following theorem, we prove that the ${\leadsto}$ relation (defined in
\S\ref{sec:logic}) is the correspondence established between a concrete history
of a data structure and a matching abstract history under conditions of
Theorem~\ref{thm:soundness}.

\begin{theorem}\label{app:thm:abs}
Given a data structure $\cdt$, its initial state $\state_0 \in \stateType$ and
  a set of sequential histories $\histories$,
if ${\sf safelib}(\cdt, \adt, \histories)$ holds, then the following is true:
\[
\forall h \ldotp h \in \hsem{\cdt, \state_0} \implies
\exists \history \ldotp \mhupd{h}{\history} \land \abs(\history, \histories)
\]
\end{theorem}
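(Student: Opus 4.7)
The plan is to prove a strengthened inductive invariant on every reachable concurrent configuration $(\tp, \state, h)$ of the small-step semantics ${\twoheadrightarrow}_\cdt$, from which the conclusion follows for any terminal state. The invariant asserts the existence of an abstract history $\history$ and a ghost state $\ghost$ such that: (i) $\mhupd{h}{\history}$; (ii) for every $\lint$, $(\state, \history, \ghost) \in \evalf{\mathcal{I}}{\lint}$; and (iii) for every thread $\tid$ with $\tp(\tid) = \com \neq \idle$ running some operation $\op$, there is a safety certificate $\safe{\tid}{\rely_\tid, \guar_\tid}{p_\tid}{\com}{q_\tid}$ in the sense of Definition~\ref{app:def:safe}, where $p_\tid$ is a stable set containing $(\state, \history, \ghost)$ and $q_\tid = \evalf{\mathcal{I} \land \finished(\tid, \op)}{\lint_\tid}$ for some fixed $\lint_\tid$. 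The base case, $h = (\emptyset, \emptyset)$ with all threads idle, is discharged by $\history = (\emptyset, \emptyset)$ and the $\ghost_0$ given by clause~\ref{item:init}; clause (iii) is vacuous.

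The induction proceeds by cases on the next ${\twoheadrightarrow}_\cdt$ transition. For an invocation in thread $\tid$: clause~\ref{item:spec} together with Lemma~\ref{app:lem:safe} yields a safety certificate for $\cdt(\op)$ with precondition $\evalf{\mathcal{I} \land \started_\mathcal{I}(\tid, \op)}{\lint_\tid}$, and the new configuration lies in this precondition directly by the shape of the invocation rule and the definition of $\started_\mathcal{I}$ (which is tailored to configurations produced by ${\dashrightarrow}_\tid$). For every $\tid' \neq \tid$, the invocation move lies in ${\dashrightarrow}_\tid \subseteq \rely_{\tid'}$ by~\ref{item:rg}, so their safety certificates survive by the rely-closure built into $\safe{\cdot}{\cdot}{\cdot}{\cdot}{\cdot}$, and $\mathcal{I}$ is preserved by~\ref{item:stable}. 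For an internal step of thread $\tid$ on an atomic $\pcom$: unfolding the certificate gives $p'$ with $\semj{\guar_\tid}{\tid}{\ssw(p_\tid, \rely_\tid)}{\pcom}{p'}$, which existentially provides the updated $\history'$ (with $\mhupd{\history}{\history'}$) and $\ghost'$, so $\mhupd{h}{\history'}$ follows by transitivity; other threads' certificates are preserved via $\guar_\tid \subseteq \rely_{\tid'}$ from~\ref{item:rg}. For a completion step: $\com = \cskip$ and the $\cskip$-clause of $\safe{}{}{}{}{}$ gives $\ssw(p_\tid, \rely_\tid) \subseteq q_\tid$, whence $(\state, \history, \ghost)$ already witnesses $\finished(\tid, \op)$ — in particular the return value in $\lres[\tid]$ matches the one recorded in $\history$ — so the concrete completion step requires no further update to $\history$.

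Once the invariant is established, any $h \in \hsem{\cdt, \state_0}$ comes with a witness $\history$ satisfying $\mhupd{h}{\history}$ and $(\_, \history, \_) \in \evalf{\mathcal{I}}{\lint}$ for all $\lint$, and clause~\ref{item:inv} then directly gives $\abs(\history, \histories)$, completing the proof.

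The main obstacle will be bookkeeping the per-thread preconditions $p_\tid$ across arbitrary interleavings: after any other thread's move we need the updated global configuration to still lie in $p_\tid$, which forces us to keep each $p_\tid$ closed under $\ssw(\cdot, \rely_\tid)$ at every step and to route every guarantee move and every invocation move into the relies of all other threads via~\ref{item:rg}. A secondary subtlety is the mismatch between the global invariant, which is universally quantified over $\lint$ in~\ref{item:stable} and~\ref{item:inv}, and the safety judgement of~\ref{item:spec}, which is per-$\lint$; this is reconciled by fixing an arbitrary $\lint_\tid$ at each invocation and observing that clause (ii) of the inductive invariant holds for every $\lint$, so in particular for the $\lint_\tid$ relevant to each thread's certificate.
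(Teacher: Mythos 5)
Your proposal is correct and follows essentially the same route as the paper: the paper likewise strengthens the statement to an inductive invariant over reachable configurations that carries, per thread, a stable precondition $p_\tid$ containing the current configuration together with a $\safe{\tid}{\rely_\tid,\guar_\tid}{p_\tid}{\cont(\tid)}{\cdot}$ certificate (packaged as the relation $\continv$), and discharges the same three transition cases (invocation via $\started_{\mathcal{I}}$ and Lemma~\ref{app:lem:safe}, internal step via the $\semj{\guar}{\tid}{\cdot}{\pcom}{\cdot}$ clause, completion via the $\cskip$ clause and the matching of $\lres[\tid]$), with interference routed through requirement~\ref{item:rg} exactly as you describe. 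The only cosmetic difference is that the paper quantifies the per-thread certificates over all $\lint$ inside the invariant rather than fixing one $\lint_\tid$ per invocation, which changes nothing essential.
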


Intuitively, Theorem~\ref{app:thm:abs} describes the main idea of our method:
for every concrete history $h \in \hsem{\cdt, \state_0}$, we build a matching
abstract history $\history$ such that all of its linearizations are sequential
histories from $\histories$.

\newparagraph{Proof of Theorem~\ref{thm:soundness}} We show that
  Theorem~\ref{app:thm:abs} is a corollary of Theorem~\ref{thm:soundness}: given
  a data structure $\cdt$, its initial state $\state_0 \in \stateType$ and a set
  of sequential histories $\histories$, we have $(\cdt, \state_0)$ {\em
  linearizable} with respect to $\histories$ if the following holds:
  \begin{equation}\label{app:eq:abs}
\forall \history_1 \ldotp \history_1 \in \hsem{\cdt, \state_0} \implies
\exists \history \ldotp \mhupd{\history_1}{\history} \land \abs(\history,
\histories)
\end{equation}

Let us consider every history $\history_1 \in \hsem{\cdt, \state_0}$. To
conclude linearizability w.r.t. $\histories$, we need to show the following:
\begin{equation}\label{app:thm:soundness:goal}
\exists \history_2 \in \histories \ldotp \exists \history'_1 \ldotp
  \completes{\history_1}{\history'_1} \land
  \refines{\history'_1}{\history_2}
\end{equation}

According to (\ref{app:eq:abs}), there exists $\history$ such that
$\abs(\history, \histories)$ and $\mhupd{\history_1}{\history}$ both hold. By
Definition~\ref{dfn:abstracthistory}, the former gives the followings:
\begin{equation}\label{app:thm:soundness:subgoal1}
\{\history' \mid \refines{\complete{\history}}{\history'}
  \land \isSeq(\history')\} \subseteq \histories
\end{equation}
Note that the set above is not empty, since $\history$ is acyclic and thus has
at least one linearization $\history'$. Thus, $\history' \in \histories$ holds.

By Proposition~\ref{app:prop:abs}, there exists a history $\history''$ such that
$\completes{history_1}{\history''}$ and $\refines{\history''}{\history}$. It is
easy to see that the following two observations can be made for $\history_1$,
$\history$ and $\history''$: \artem{(lemmatise them?)}
\begin{itemize}
\item since $\refines{\history''}{\history}$ holds, so does
$\refines{\complete{\history''}}{\complete{\history}}$, and
\item since $\completes{\history_1}{\history''}$ holds, so does
$\completes{\history_1}{\complete{\history''}}$.
\end{itemize}
By combining these two observations with (\ref{app:thm:soundness:subgoal1}), we
conclude that there exist $\history_2 = \history'$ and $\history'_1 =
\complete{\history''}$ such that $\completes{\history_1}{\history'_1}$ and
$\refines{\history'_1}{\history_2}$. This concludes the proof of
(\ref{app:thm:soundness:goal}).
\qed

Thus, we reduced the proof of Theorem~\ref{thm:soundness} to proving
Theorem~\ref{app:thm:abs}.

\subsection{Proof of Theorem~\ref{app:thm:abs}}
In this subsection, we first introduce auxiliary definitions used in the proof
of Theorem~\ref{app:thm:abs}, and then present the proof itself.

\begin{definition}
We let $\isidle(\tid) \in \assnType$ be an assertion satisfying the
following:
\[
\evalf{\isidle(\tid)}{\lint}
= \{(\state, \history, \ghost) \mid
  {\sf last}(\tid, \history) = \bot
  \lor {\sf last}(\tid, \history) \in \complete{\history} \}.
\]
\end{definition}
The assertion $\isidle(\tid)$ represents the set of configurations, in
which abstract histories do not have uncompleted events in a thread $\tid$.

\begin{definition}
We let $\continv$ be a relation such that $\continv(\tid, \cont, p_\tid, \rely_\tid, \guar_\tid, \mathcal{I})$ holds whenever the
following is true:
\begin{itemize}
  \item ${\sf stable}(p_\tid, \rely_\tid)$ and $p_\tid \subseteq
  \evalf{\mathcal{I}}{\lint}$ holds;
  \item if $\cont(\tid) = \idle$ then $p_\tid \subseteq
  \evalf{\isidle(\tid)}{\lint}$ holds;
  \item
  if $\cont(\tid) \neq \idle$ then there is $\op$ such that
  $\safe{\tid}
    {\rely_\tid, \guar_\tid}
    {p_\tid}
    {\cont(\tid)}
    {\evalf{\mathcal{I} \land \finished(\tid, \op)}{\lint}}$.
\end{itemize}
\end{definition}



\newparagraph{Proof of Theorem~\ref{app:thm:abs}}
Let us consider a data structure $\cdt$, its initial state $\state_0 \in
\stateType$ and the set of specification histories $\histories$. Let us assume
that ${\sf safelib}(\cdt, \adt, \histories)$ holds. In particular, there exist
$\rely_\tid$, $\guar_\tid$ and $\mathcal{I}$ satisfying the constraints in ${\sf
safelib}(\cdt, \adt, \histories)$. We prove that $(\cdt, \state_0)$ is
linearizable with respect to $\histories$, i.e., $\hsem{\cdt, \state_0}
\sqsubseteq \histories$. To this end, we strengthen the statement of the theorem
as follows:
\begin{multline*}
\forall \cont, \state, h \ldotp 
    \mhtrans{(\lambda \tid \ldotp \idle), \state_0, (\emptyset, \emptyset)}{\cdt}{\cont, \state, h}
    \implies{}
    \\
    \exists \history, \ghost \ldotp \mhupd{h}{\history} \land \abs(\history, \histories) \land{} \\
      (\forall \lint, \tid \ldotp \exists p_\tid \ldotp
      (\state, \history, \ghost) \in \evalf{p_{\tid}}{\lint} \land
      \continv(\lint, \tid, \cont, p_\tid, \rely_\tid, \guar_\tid, \mathcal{I}))
\end{multline*}

The proof is done by induction on the length $n$ of executions in
$\hsem{\cdt, \state_0}$. We define the following formula:
\begin{multline*}
\phi(n) \triangleq
\forall \cont, \state, h \ldotp 
    \nhtrans{n}{(\lambda \tid \ldotp \idle), \state_0, (\emptyset, \emptyset)}
      {\cdt}{\cont, \state, h}
    \implies{}
    \\
    \exists \history, \ghost \ldotp \mhupd{h}{\history} \land{} \\
      (\forall \lint, \tid \ldotp \exists p_\tid \ldotp
      (\state, \history, \ghost) \in p_{\tid} \land
      \continv(\lint, \tid, \cont, p_\tid, \rely_\tid, \guar_\tid, \mathcal{I}))
\end{multline*}
and prove that $\forall n \geq 0 \ldotp \phi(n)$ holds. Note that in $\phi(n)$
we omit the requirement $\abs(\history, \histories)$, since it is implied by the
other requirements. Specifically, when $\continv$ holds of each thread $\tid$,
the configuration $(\state, \history, \ghost)$ satisfies the invariant
$\mathcal{I}$. Consequently, by ${\sf safelib}(\cdt, \state_0, \histories)$,
$\abs(\history, \histories)$ holds.

\noindent\textbf{Base of the induction.}
We need to show that $\phi(n)$ holds when $n=0$. In this case, $\cont = (\lambda
\tid \ldotp \idle)$, $\state = \state_0$ and $h = (\emptyset, \emptyset)$.
According to Definition~\ref{app:def:safelib}.1 of ${\sf safelib}(\cdt, \adt,
\histories)$, there exists a ghost state $\ghost_0$ such that:
\[
\forall \lint \ldotp (\state_0, (\emptyset, \emptyset), \ghost_0) \in \evalf{I}{\lint}
\]
It is easy to see that $\phi(0)$ holds for $\history = (\emptyset, \emptyset)$
and $\ghost = \ghost_0$ when $p_\tid = \mathcal{I}$ for each thread $\tid$.

\noindent\textbf{Induction step.}
We need to show that $\forall n \ldotp \phi(n) \implies \phi(n+1)$. Let us
choose any $n$ and assume that $\phi(n)$ holds. We need to prove that so does
$\phi(n+1)$, i.e., for every $\cont', \state'$ and $h'$ such that
$\nhtrans{n+1}{(\lambda \tid \ldotp \idle), \state_0, (\emptyset,
\emptyset)}{\cdt}{\cont', \state', h'}$ holds, the following is true:
\begin{multline}\label{app:thm:abs:goal}
    \exists \history', \ghost' \ldotp \mhupd{h'}{\history'} \land \abs(\history', \histories) \land{} \\
      (\forall \lint, \tid \ldotp \exists p'_\tid \ldotp
      (\state', \history', \ghost') \in p'_{\tid} \land
      \continv(\lint, \tid, \cont', p'_\tid, \rely_\tid, \guar_\tid, \mathcal{I}))
\end{multline}
When $\nhtrans{n+1}{(\lambda \tid \ldotp \idle), \state_0, (\emptyset,
\emptyset)}{\cdt}{\cont', \state', h'}$, there exist $\cont$, $\state$ and
$\history$ such that:
\[
\nhtrans{n}{(\lambda \tid \ldotp \idle), \state_0, (\emptyset,
\emptyset)}{\cdt}{\cont, \state, h} \land \htrans{\cont, \state,
h}{\cdt}{\cont', \state', h'}
\]
According to the induction hypothesis $\phi(n)$, for $\cont, \state$ and $h$ there exist $\history$ and $\ghost$ such that:
\begin{multline}\label{app:thm:abs:hypo}
\mhupd{h}{\history} \land \abs(\history, \histories) \land{} \\
      (\forall \lint, \tid \ldotp \exists p_\tid \ldotp
      (\state, \history, \ghost) \in \evalf{p_{\tid}}{\lint} \land
      \continv(\lint, \tid, \cont, p_\tid, \rely_\tid, \guar_\tid, \mathcal{I}))
\end{multline}

By definition of the transition relation ${\twoheadrightarrow}_{\cdt}$, a
transition $\htrans{\cont, \state, h}{\cdt}{\cont', \state', h'}$ corresponds to
one of the three cases for a continuation $\cont(T)$ of some thread $T$: an
invokation of an arbitrary new operation in $T$, a return from the current
operation of $T$, or a transition in $T$. We consider each case separately. 

Let $\events'$ and $\order'$ be such that $h' = (\events', \order')$, and let
$\events$ and $\order$ be such that $h = (\events, \order)$.

\noindent\textbf{Case \#1.} There exists a thread $T$ such that $\cont(T) = \idle$, an operation $\op \in \opType$, its arguments $\args$, its event identifier $\eid \notin \ids{\events}$ such that $\htrans{\cont, \state, (\events, \order)}{\cdt}{\cont', \state', (\events', \order')}$ holds and the following is true:
\begin{itemize}
  \item $\cont' = \cont \sub{T}{\cdt(\op)}$,
  \item $\state' = \state \sub{\larg[T]}{\args}$,  
  \item $\events' = \events \sub{\eid}{(T, \op, \args, \Todo)}$,
  \item $\order' = \order \cup \{ (\eidp, \eid) \mid \eidp \in \complete{\events} \}$.
  \end{itemize}
Let us consider any $\lint$. According to (\ref{app:thm:abs:hypo}), for a thread
$T$ there exists $p_T$ such that $\continv(\lint, T, \cont, p_T,
\rely_T, \guar_T, \mathcal{I})$ and $(\state, \history, \ghost) \in p_T$ both
hold.

From the former we learn that $\conf \in p_T \subseteq \evalf{\mathcal{I} \land
\isidle(T)}{\lint}$, since $\cont(T) = \idle$. It is easy to see that under
this condition, for every abstract history $\history'$ such that
$\addevent{T}{\state, \history, \ghost}{\state', \history', \ghost}$ holds, the
following is true:
\begin{itemize}
  \item $\mhupd{h'}{\history'}$ holds, since the new event $\sub{\eid}{(T, \op,
  \args, \Todo)}$ and the new edges added into $h = (\events, \order)$ can also
  be added into $\history$,
  \item $(\state', \history', \ghost) \in \evalf{\mathcal{I} \land \started(T,
  \op)}{\lint}$ holds.
\end{itemize}
According to ${\sf safelib}(\cdt, \state_0, \histories)$, the command $\cdt(\op)$ fulfils the following specification:
\[
\spec{\rely_T, \guar_T}
      {T}
      {\begin{array}{@{}c@{}}
      \mathcal{I} \land \started_{\mathcal{I}}(T, \op)
      \end{array}}
      {\cdt(\op)}
      {\begin{array}{@{}c@{}}
      \mathcal{I} \land \finished(T, \op)
      \end{array}})
\]
Hence, $\safe{T}{\rely_T, \guar_T} {\evalf{\mathcal{I} \land
\started_{\mathcal{I}}(T, \op)}{\lint}} {\cdt(\op)} {\evalf{\mathcal{I} \land
\finished(T, \op)}{\lint}}$ holds by Lemma~\ref{app:lem:safe}. Let $p'_T =
\ssw(\evalf{\mathcal{I} \land \started_{\mathcal{I}}(T, \op)}{\lint}, \rely_T)$.
It is easy to see that $\safe{T}{\rely_T, \guar_T} {p'_T} {\cdt(\op)}
{\evalf{\mathcal{I} \land \finished(T, \op)}{\lint}}$ holds as well. Thus,
$\continv(\lint, T, \cont', p'_T, \rely_T, \guar_T, \mathcal{I})$ holds. For a
thread $T$, we have found $p'_T$ such that:
\[
  (\state', \history', \ghost) \in p'_{T} \land
      \continv(\lint, \tid, \cont, p'_T, \rely_T, \guar_T, \mathcal{I})
\]
Let us consider every thread $\tid \neq T$. By the hypothesis
(\ref{app:thm:abs:hypo}), there exists $p_\tid$ such that $(\state,
\history, \ghost) \in p_{\tid}$ and $\continv(\lint, \tid,
\cont, p_\tid, \rely_\tid, \guar_\tid, \mathcal{I})$ hold. According to the
latter, $p_\tid$ is stable under $\rely_\tid$. By ${\sf safelib}(\cdt, \state_0,
\histories)$, $\rely_\tid$ includes $\dashrightarrow_{T}$, so $p_\tid$ is stable
under $\dashrightarrow_{T}$ as well. Consequently, $(\state', \history', \ghost)
\in p_{\tid}$ holds.

Thus, we have found $\history'$, $\ghost' = \ghost$ and a new set of
configurations $p_T$ for a thread $T$ such that (\ref{app:thm:abs:goal}) holds.

\noindent\textbf{Case \#2.} There exists a thread $T$ such that $\cont(T) = \cskip$, an operation $\op \in \opType$, its arguments $\args$, its event identifier $\eid = {\sf last}(\tid, (\events, \order))$ such that $\htrans{\cont, \state, (\events, \order)}{\cdt}{\cont', \state', (\events', \order')}$ holds and the following is true:
\begin{itemize}
  \item $\cont' = \cont \sub{T}{\idle}$,
  \item $\events(\eid) = (\tid, \op, \args, \Todo)$
  \item $\events' = \events \sub{\eid}{(\tid, \op, \args, \state(\lres[\tid]))}$,
  \item $\state' = \state$ and $\order = \order'$  
  \end{itemize}
Let us consider any $\lint$. According to (\ref{app:thm:abs:hypo}), for a thread
$T$ there exists $p_T$ such that $\continv(\lint, T, \cont, p_T,
\rely_T, \guar_T, \mathcal{I})$ and $(\state, \history, \ghost) \in p_T$ both
hold. From the former we learn that $\safe{T}{\rely_T, \guar_T}{p_T}{\cskip}{\evalf{\mathcal{I} \land \finished(T, \op)}{\lint}}$ holds. Then the following is true:
\[
(\state, \history, \ghost) \in p_T \subseteq \evalf{\mathcal{I} \land \finished(T, \op)}{\lint} \subseteq \evalf{\mathcal{I} \land \isidle(T)}{\lint}
\]
Hence, the abstract history $\history$ does not have uncompleted events in a thread $T$. By the induction hypothesis, $\mhupd{h}{\history}$ holds. Since $h'$ completes the event that is already completed in $\history$, we can conclude that $\mhupd{h'}{\history}$ holds too.

Also, when $p_T \subseteq \evalf{\mathcal{I} \land \isidle(T)}{\lint}$ holds, so
does $\continv(\lint, T, \cont', p_T, \rely_T, \guar_T, \mathcal{I})$. Thus, for
a thread $T$, we have found $p'_T = p_T$ such that:
\[
  (\state', \history, \ghost) \in p'_{T} \land
      \continv(\lint, \tid, \cont', p'_T, \rely_T, \guar_T, \mathcal{I})
\]
Overall, we have found $\history' = \history$ and $\ghost' = \ghost$ such that
(\ref{app:thm:soundness:goal}) holds.

\noindent\textbf{Case \#3.} There exists a thread $T$, a primitive command
$\pcom$, commands $\com$ and $\com'$ such that $\trans{\com}{\pcom}{\com'}$,
$\state' \in \intp{T}{\pcom}{\state}$ and the following is true:
\begin{itemize}
  \item $\cont(T) = \com$ and $\cont' = \cont \sub{T}{\com'}$,
  \item $\events' = \events$ and $\order = \order'$  
  \end{itemize}
Let us consider any $\lint$. According to (\ref{app:thm:abs:hypo}), for a thread
$T$ there exists $p_T$ such that $\continv(\lint, T, \cont, p_T,
\rely_T, \guar_T, \mathcal{I})$ and $(\state, \history, \ghost) \in p_T$ both
hold. According to the former, $\safe{T}{\rely_T, \guar_T}{p_T}{\com}{\evalf{\mathcal{I} \land \finished(T, \_)}{\lint}}$. By Definition~\ref{app:def:safe}, there exists a stable $p'_T$ such that $\semj{\guar_T}{T}{p_T}{\pcom}{p'_T}$ and $\safe{T}{\rely_T, \guar_T}{p'_T}{\com'}{\evalf{\mathcal{I} \land \finished(T, \_)}{\lint}}$. Also, the following holds:
\begin{itemize}
  \item By definition of $\semj{\guar_T}{T}{p_T}{\pcom}{p'_T}$, there exist
  $\history'$ and $\ghost'$ such that $(\state', \history', \ghost') \in p'_T$. Also, $\semj{\guar_T}{T}{p_T}{\pcom}{p'_T}$ implies that $\mhupd{h' = h}{\mhupd{\history}{\history'}}$.
  \item By ${\sf safelib}(\cdt, \state_0, \histories)$,
  $\evalf{\mathcal{I}}{\lint}$ is stable under $\guar_T$, meaning that $p'_T
  \subseteq \evalf{\mathcal{I}}{\lint}$ holds.
\end{itemize}
It is easy to see that $\continv(\lint, T, \cont', p'_T,
\rely_T, \guar_T, \mathcal{I})$ holds. Thus, for a thread
$T$, we have found $p'_T$ such that:
\[
  (\state', \history', \ghost') \in p'_{T} \land
      \continv(\lint, \tid, \cont', p'_T, \rely_T, \guar_T, \mathcal{I})
\]

Let us consider every thread $\tid \neq T$. By the hypothesis
(\ref{app:thm:abs:hypo}), there exists $p_\tid$ such that $(\state,
\history, \ghost) \in p_{\tid}$ and $\continv(\lint, \tid,
\cont, p_\tid, \rely_\tid, \guar_\tid, \mathcal{I})$ hold. According to the
latter, $p_\tid$ is stable under $\guar_T \subseteq \rely_\tid$. Consequently,
$(\state', \history', \ghost') \in p_{\tid}$ holds.

Thus, we have found $\history'$, $\ghost'$ and a new set of
configurations $p'_T$ for a thread $T$ such that (\ref{app:thm:abs:goal}) holds.
\qed

\clearpage

\newcommand{\fispotcand}{{\rm isPotCand}}
\newcommand{\fnopotcand}{{\rm noPotCand}}

\newcommand{\visible}{{\sf visible}}
\newcommand{\fdisc}{{\rm disc}}

\newcommand{\sq}{{\sf sq}}

\newcommand{\seqrf}[1]{\rf\{{#1}\}}

\newcommand{\invseq}{\inv_{\sf SEQ}}

\section{Proof details for the Time-Stamped Queue}\label{app:tsqdetails}

\begin{figure}[t]
\hfill
\begin{minipage}[t]{0.96\textwidth}
\begin{lstlisting}[numbers=none]
enqueue(Val v) {
  #\lstassert{
    \inv \land \started(t, \enqOp)
  }#
  atomic {
    PoolID node := insert(#$\myTid$#, v);
\end{lstlisting}
\begin{lstlisting}[numbers=none,backgroundcolor=\color{gray}]      
    #$\ghostts[\myEid]$ := $\top$;#
\end{lstlisting}
\begin{lstlisting}[numbers=none]         
  }
  #\lstassert{
    \inv \land \started(\tid, \enqOp)
  }#  
  TS timestamp := newTimestamp();
  #\lstassert{
    \exists T \ldotp \inv \land
    \started(\tid, \enqOp)
    \land \newts({\tt timestamp})
  }#  
  atomic {
    setTimestamp(#$\myTid$#, node, timestamp);
\end{lstlisting}
\begin{lstlisting}[numbers=none,backgroundcolor=\color{gray}]      
    #$\ghostts[\myEid]$ := ${\tt timestamp}$;#
    #$\resof{\events(\myEid)}$ := $\Done$;#
\end{lstlisting}
\begin{lstlisting}[numbers=none]      
  }
  #\lstassert{
    \inv \land \finished(t, \enqOp)
  }#
  return $\Done$;
}
\end{lstlisting}
\end{minipage}

\caption{The proof outline for the enqueue method of the TS queue.
\label{tsq:outline-enq}
}
\end{figure}

\subsection{Proof outline}

We prove the following specifications for $\enqOp$ and $\deqOp$ for each thread
$\tid \in \tidType$ (proof outlines are provided in Figure~\ref{tsq:outline-enq}
and Figure~\ref{tsq:outline-deq}):
\[
\begin{array}{@{}c@{}}
\spec{\rely_\tid, \guar_\tid}
      {\tid}
      {\begin{array}{@{}c@{}}
      \inv \land \started(\tid, \enqOp)
      \end{array}}
      {\enqOp}
      {\begin{array}{@{}c@{}}
      \inv \land \finished(\tid, \enqOp)
      \end{array}}
\\[5pt]
\spec{\rely_\tid, \guar_\tid}
      {\tid}
      {\begin{array}{@{}c@{}}
      \inv \land \started(\tid, \deqOp)
      \end{array}}
      {\deqOp}
      {\begin{array}{@{}c@{}}
      \inv \land \finished(\tid, \deqOp)
      \end{array}}
\end{array}
\]
In the specifications, $\inv$ is the global invariant, and $\rely_\tid$ and
$\guar_\tid$ are rely and guarantee relations defined in \S\ref{sec:details}.
Assertions $\started(\tid, \op)$ and $\finished(\tid, \op)$ are defined in
\S\ref{sec:logic}.


    

We introduce assertions $\fnopotcand$ and $\fispotcand$ to denote the properties
analogous to $\fnocand$ and $\fiscand$ that hold of an enqueue event in the
front of the pool of thread $\tt k$.
To this end, we let $\seen_\ttvisited(\conf,
d)$ denote the set of enqueue events observed $d$ in the pools of threads from $\ttvisited$ (see Definition~\ref{def:seen}), and we also let $\fmints_\ttvisited({\tt ENQ})$ be a predicate
asserting that ${\tt ENQ}$'s timestamp is minimal among enqueues in $\pool{\tt
k}$.
\[
\begin{array}{lcl}
\evalf{\fnopotcand}{\lint}
& = &
\{ (\state, \history, \ghostts)
\mid
\seen_{\tt k}((\state, \history, \ghostts), \myEid) = \emptyset \land \state({\tt pid}) = \NULL
\};
\\
\evalf{\fispotcand}{\lint}
& = &
\{ (\state, \history, \ghostts)
\mid
  \exists {\tt ENQ} \ldotp
 {\tt ENQ} = \getEvent(\events, \ghost, \state({\tt k}), \state({\tt ts}))
 \\ & & \hspace{1em} {} \land
  \fmints_{\tt k}({\tt ENQ}) \land 
  ({\tt ENQ} \in \inqueue{\state(\ttpools), \events, \ghostts} \implies{}
\\ & & \hfill
    {\tt ENQ} \in \seen_{\tt k}((\state, \history, \ghostts), \myEid)) \land
      \state({\tt pid}) \neq \NULL
\};
\end{array}
\]
We further explain how they are used in \S\ref{app:tsq:linvbuild}.

\begin{figure}[t]
\hfill
\begin{minipage}[t]{0.99\textwidth}
\begin{lstlisting}[numbers=none]
Val dequeue() {
  #\lstassert{\inv \land \started(\tid, \deqOp)}#
  Val ret := NULL;
\end{lstlisting}
\begin{lstlisting}[numbers=none,backgroundcolor=\color{gray}]
  #$\eidType$ $\ttcand$;#
\end{lstlisting}
\begin{lstlisting}[numbers=none]
  do {
    TS #\ttstartts# := newTimestamp(); 
    PoolID pid, #\ttcandpid# := NULL;
    TS ts, #\ttcandts# := $\top$;
    ThreadID #\ttcandtid#;
    #\lstassert{
      \inv \land \started(\tid, \deqOp) 
      \land \linv \land \lres[\tid] = \NULL
    }#
    for each k in 1..NThreads do {
      #\lstassert{
        \inv \land \started(\tid, \deqOp) 
        \land \linv \land \lres[\tid] = \NULL
      }#    
      atomic {
        (pid, ts) := getOldest(k);
\end{lstlisting}
\begin{lstlisting}[numbers=none,backgroundcolor=\color{gray}]  
        #$\order$ := ($\order \cup \{ (e, \myEid) \mid
        e \in \ids{\complete{\events}} \cap
          \inqueue{\ttpools, \events, \ghostts}$# 
                                      #${} \land
        \lnot(\ttstartts \tsless \ghostts(e))\})^+;$#
\end{lstlisting}
\begin{lstlisting}[numbers=none]           
      }
      #\lstassert{
        \inv \land \started(\tid, \deqOp) 
        \land \linv \land \lres[\tid] = \NULL \\ {} \land
        (\fnopotcand \lor \fispotcand)
      }#
      if (pid ${\neq}$ NULL && ${\tt ts} \tsless \ttcandts$ && #$\lnot$#(#$\ttstartts \tsless {\tt ts}$#)) {
        #(\ttcandpid, \ttcandts, \ttcandtid)# := (pid, ts, k);
\end{lstlisting}
\begin{lstlisting}[numbers=none,backgroundcolor=\color{gray}]        
        #$\ttcand$ := \getEvent($\events$, $\ghostts$, \ttcandtid, \ttcandts);#
\end{lstlisting}
\begin{lstlisting}[numbers=none]        
      }
      #\lstassert{
        \inv \land \started(\tid, \deqOp) 
        \land \linv \land \lres[\tid] = \NULL
      }#
    }
    #\lstassert{
      \inv \land \started(\tid, \deqOp) 
      \land \linv \land \lres[\tid] = \NULL
    }#
    if (#\ttcandpid# #$\neq$# NULL)
      #\lstassert{
        \inv \land \started(\tid, \deqOp) 
        \land \fiscand \land \ttcandpid \neq \NULL
        \land \lres[\tid] = \NULL
      }#
      atomic {
        ret := remove(#\ttcandtid#, #\ttcandpid#);
\end{lstlisting}
\begin{lstlisting}[numbers=none,backgroundcolor=\color{gray}]          
        if (#${\tt ret} \neq \NULL$#) {
          #$\resof{\events(\myEid)}$ := ret;#
          #$\order$ := ($\order \cup
%            \{ (\ttcand, \myEid) \}$#
%                  #${}\cup
                  \{ (\ttcand, e) \mid
                    e \in \inqueue{\ttpools, \events, \ghostts} \}$#
                  #${}\cup
                  \{ (\myEid, d) \mid
                            \opof{\events(d)} = \deqOp \land
                              d \in \ids{\incomplete{\events}} \}
                  )^+;$#
        }
\end{lstlisting}
\begin{lstlisting}[numbers=none]        
      }
      #\lstassert{
        \inv \land ((\started(\tid, \deqOp) \land \lres[\tid] = \NULL)
        \lor \finished(\tid, \deqOp))
      }#
  } while (ret = NULL);
  #\lstassert{
    \inv \land \finished(t, \deqOp)}# 
  return ret;
}
\end{lstlisting}
\end{minipage}
\caption{The proof outline for the dequeue method of the TS queue.
\label{tsq:outline-deq}
}
\end{figure}

\FloatBarrier
\subsection{Preservation of the loop invariant.}
\label{app:tsq:linvbuild}

We consider the current dequeue operation $\myEid$ in a thread $\tid$, which
generates a timestamp $\ttstartts$ and proceeds to execute {\tt for each} loop.
At ${\tt k}$'s iteration of the loop, the following two steps performed:
\begin{itemize}
  \item \textbf{Step 1:} Using $\tt getOldest(k)$, we learn a pool identifier ${\tt pid}$ and a
  timestamp ${\tt ts}$ of the value in the front of the queue (if there is any).
  Additionally, every enqueue in $\pool{\tt k}$ with a timestamp not greater
  than $\ttstartts$ is ordered in front of $\myEid$. As a result, one of the two
  cases takes place:
  \begin{itemize}
    \item $\fnopotcand$ holds, in which case we say that there is no potential
    candidate in $\pool{\tt k}$. This describes configurations, in which
    $\pool{\tt k}$ is either empty or $\tt ts$ is greater than $\ttstartts$.
    \item $\fispotcand$ holds, in which case we say that the enqueue event ${\tt
    ENQ} = \getEvent(\events, \ghostts, {\tt k}, {\tt ts})$ is the potential
    candidate in $\pool{\tt k}$. This describes configurations, in which $\tt
    ts$ is not greater than $\ttstartts$. Additionally, this requires that $\tt
    ts$ be smaller than other timestamps currently present in $\pool{\tt k}$.
    Note that the latter follows from $\invalg$(ii) and $\invalg$(i)
    after Step 1.
  \end{itemize}
  \item \textbf{Step 2:} if there is a potential candidate, its timestamp is compared to the
  timestamp $\ttcandts$ of the current candidate for removal and the earliest of
  the two is kept as the candidate;
\end{itemize}
To show that the loop invariant $\linv$ by the $\tt k$'s iteration, we consider
separately the cases when there is the potential candidate ${\tt ENQ}$ and when
there is no such enqueue event.

Let us first assume that $\fispotcand$ holds, and ${\tt ENQ} =
\getEvent(\events, \ghostts, {\tt k}, {\tt ts})$ is the potential candidate in
$\pool{\tt k}$. At step 2 (line~\ref{line:tsq_compare}), the current dequeue
compares ${\tt ts}$ to $\ttcandts$ and decides whether to chose ${\tt ENQ}$ as
the candidate for removal. According to the loop invariant, there are two
possibilities: either no candidate has been chosen so far ($\fnocand$ holds), or
there is a candidate $\ttcand$ ($\fiscand$ holds). When the former is the case,
$\seen_\ttvisited(\conf, \myEid) = \emptyset$. It is easy to see that
$\fispotcand$ immediately implies $\fiscand$ after ${\tt k}$'s iteration. Let us
now consider the case when $\fiscand$ holds, and $\ttcand$ has been selected as
the candidate for removal out of enqueues in threads from $\ttvisited$. Let us
assume that ${\tt ts} \tsless \ttcandts$ takes place (the other situation is
justified analogously). To conclude $\fiscand$ for the next iteration, we need
to show that:
\begin{itemize}
\item ${\tt ENQ} \in \inqueue{\ttpools, \events, \ghostts} \implies
   {\tt ENQ} \in \seen_{\ttvisited \uplus \{ {\tt k} \}}((\state, \history, \ghostts), \myEid)$, and
\item $\fmints_{\ttvisited \uplus \{ {\tt k} \}}({\tt ENQ})$.
\end{itemize}
The first requirement follows trivially from $\fispotcand$: if ${\tt ENQ} \in \inqueue{\ttpools, \events, \ghostts}$ holds, then so does:
\[
{\tt ENQ} \in \seen_{\{ {\tt k} \}}((\state, \history, \ghostts), \myEid) \subseteq \seen_{\ttvisited \uplus \{ {\tt k} \}}((\state, \history, \ghostts), \myEid).
\]
It remains to show that $\fmints_{\ttvisited \uplus \{ {\tt k} \}}({\tt ENQ})$
holds, i.e. that every other enqueue in $\seen_{\ttvisited \uplus \{ {\tt k}
\}}(\conf, \myEid)$ does not have a timestamp smaller than ${\tt ENQ}$.
According to $\fispotcand$, $\tt ENQ$ is minimal among enqueues in thread $\tt
k$. Let us assume that $\tt ENQ$ is not minimal among enqueues in $\ttvisited$,
i.e. that there is $\tt ENQ' \in \seen_\ttvisited(\conf, \myEid)$ such that
$\ghostts({\tt ENQ'}) \tsless \ghostts({\tt ENQ})$. Knowing that $\ghostts({\tt
ENQ}) \tsless \ttcandts$, we conclude that $\ghostts({\tt ENQ'}) \tsless
\ttcandts$, which contradicts the loop invariant. Therefore, $\tt ENQ$ is minimal among enqueues in both $\ttvisited$ and $\tt k$.


\artem{The treatment of configurations here is a mess.}


Now let us assume that $\fnopotcand$ holds, i.e. that there is no potential
candidate in $\pool{\tt k}$. In this case, the candidate for removal remains
unchanged. Intuitively, when there is no potential candidate in $\pool{\tt k}$,
all values occurring in the pool have timestamps greater than $\ttstartts$.
According to the invariant $\invalg$(i), all successors of corresponding events
will have even greater timestamps.

Prior to ${\tt k}$'s iteration, either $\fnocand$ or $\fiscand$ holds. Let us
first assume the former. Then no candidate has been chosen after iterating over
$\ttvisited$. Together, $\fnocand$ and $\fnopotcand$ immediately imply
$\fnocand$ for the next iteration. Let us now consider the case when $\fiscand$
holds. Then there is a candidate for removal $\ttcand$. It is easy to see that
$\ttcand \in \seen_{\ttvisited \uplus \{ {\tt k} \}}(\conf, \myEid)$ holds, so
it remains to ensure that $\fmints_{\ttvisited \uplus \{ {\tt k} \}}(\ttcand)$
holds. To this end, we need to demonstrate that for every enqueue $e \in
\seen_{\tt k}(\conf, \myEid)$, $\lnot(\ghostts(e) \tsless \ghostts(\ttcand))$
holds. However, according to $\fnopotcand$, there are no such enqueues $e$, so
$\fiscand$ can be concluded for the next iteration.

\subsection{Auxiliary proofs for the loop invariant (Lemma~\ref{lem:tsqmin})}

\begin{lemma}\label{lem:tsqinvis}
Given any configuration $\conf = (\state, (\events, \order), \ghostts)$
satisfying $\inv$ and an identifier $d$ of a dequeue event that has generated
its timestamp $\ttstartts$, an enqueue by a visited thread not seen by $d$ does not precede any enqueue seen by $d$:
\begin{multline*}
\forall \eid, \eid' \ldotp
\eid \in \inqueue{\state, \events, \ghostts}
    \setminus \seen(\conf, d)
    \\ {}
    \land \eid' \in \seen(\conf, d)
    \land
\{\tidof{\events(\eid)}, \tidof{\events(\eid')}\} \subseteq \ttvisited \implies \nedge{\eid}{\order}{\eid'}.
\end{multline*}
\end{lemma}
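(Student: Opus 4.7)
The plan is to argue by contradiction: assume $\edge{\eid}{\order}{\eid'}$ and derive a contradiction by case analysis on why $\eid \notin \seen(\conf, d)$. Unfolding the definition of $\seen$ from equation~(\ref{def:seen}), and using the hypotheses $\eid \in \inqueue{\state(\ttpools), \events, \ghostts}$ and $\tidof{\events(\eid)} \in \ttvisited$, the membership $\eid \notin \seen(\conf, d)$ must fail because of at least one of the three remaining clauses: (a) $\eid \notin \ids{\complete{\events}}$; (b) $\nedge{\eid}{\order}{d}$; or (c) $\state(\ttstartts) \tsless \ghostts(\eid)$. In each case I would combine the assumed edge $\edge{\eid}{\order}{\eid'}$ with the three properties that $\eid' \in \seen(\conf, d)$ provides, namely $\eid' \in \ids{\complete{\events}}$, $\edge{\eid'}{\order}{d}$, and $\lnot(\state(\ttstartts) \tsless \ghostts(\eid'))$.

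First, case (a) is immediate from Definition~\ref{def:history}: uncompleted events are maximal in $\order$, so there is no $\eidp$ with $\edge{\eid}{\order}{\eidp}$, contradicting the assumption. Case (b) follows by transitivity of $\order$ applied to $\edge{\eid}{\order}{\eid'}$ and $\edge{\eid'}{\order}{d}$, yielding $\edge{\eid}{\order}{d}$ and contradicting (b). The main case is (c). Here I invoke $\invalg$(i) from Figure~\ref{tsq:inv}: since $\eid, \eid' \in \inqueue{\state(\ttpools), \events, \ghostts}$ and $\edge{\eid}{\order}{\eid'}$, we obtain $\ghostts(\eid) \tsless \ghostts(\eid')$. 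Chaining this with $\state(\ttstartts) \tsless \ghostts(\eid)$ and using transitivity of $\tsless$ (which is a partial order on $\tsType$) yields $\state(\ttstartts) \tsless \ghostts(\eid')$, contradicting the property that $\eid' \in \seen(\conf, d)$ supplies.

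The only subtlety is making sure the initial case split is exhaustive: the remaining conjunct of the $\seen$ definition that could fail, $\tidof{\events(\eid)} \notin \ttvisited$, is explicitly ruled out by the lemma's hypothesis, so the three cases above are the only possibilities. Beyond this, I expect no substantial obstacle; the proof is a routine combination of transitivity of $\order$, transitivity of $\tsless$, the maximality of uncompleted events from Definition~\ref{def:history}, and the invariant $\invalg$(i) tying the partial order on abstract events to the partial order on their timestamps.
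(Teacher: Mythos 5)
Your proposal is correct and follows essentially the same route as the paper's proof: contradiction, a case split on which clause of the $\seen$ definition fails for $\eid$, transitivity of $\order$ for the $\nedge{\eid}{\order}{d}$ case, and $\invalg$(i) plus transitivity of $\tsless$ for the timestamp case. The only divergence is the uncompleted-event case, which you dispatch directly via the maximality of uncompleted events in Definition~\ref{def:history}, whereas the paper reduces it to the timestamp case using $\invwf$(iv) ($\ghostts(\eid) = \top$) together with the non-maximality of $\ttstartts$; both arguments are valid, and yours is marginally more economical.
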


\begin{proof}
We do a proof by contradiction. Let us assume that there exist $\eid$ and
$\eid'$ satisfying the premise of the implication above and such that
$\edge{\eid}{\order}{\eid'}$ holds. Since $\tidof{\events(\eid)} \in
\ttvisited$ and $\eid \in \inqueue{\state, \events, \ghostts}
\setminus \seen(\conf, d)$, the following
holds by definition of $\seen$:
\begin{itemize}
\item[(a)] $\eid \notin \complete{\ids{\events}}$,
\item[(b)] $\state(\ttstartts) \tsless \ghostts(\eid)$,
\item[(c)] $\nedge{\eid}{\order}{d}$.
\end{itemize}
Note that (b) takes place whenever (a) does. Let us assume (a). Since $\eid$ is
not completed, $\ghostts(\eid) = \top$ holds (by $\invwf$). On the other hand,
by the assumption of the lemma, $\ttstartts$ contains a non-maximal timestamp.
Under such conditions, (b) holds.

Let us obtain a contradiction for (b). Since $\conf$ satisfies the invariant
$\inv$, from $\invalg$(i) and $\edge{\eid}{\order}{\eid'}$ we learn that
$\ghostts(\eid)
\tsless \ghostts(\eid')$. Consequently:
\[
\state(\ttstartts) \tsless \ghostts(\eid) \tsless \ghostts(\eid')
\]
On the other hand, since $\eid' \in \seen(\conf, d)$ holds, so does $\lnot
(\state(\ttstartts) \tsless \ghostts(\eid'))$ by definition of $\seen$. Thus, we
arrived to a contradition.

Let us obtain a contradiction for (c). Since $\eid' \in \seen(\conf, d)$,
$\edge{\eid'}{\order}{d}$ holds. By Defition~\ref{def:history} of a history,
$\order$ is a transitive relation. Thus, $\edge{\eid}{\order}{\eid'}$ and
$\edge{\eid'}{\order}{d}$ together imply $\edge{\eid}{\order}{d}$, which
contradicts (c).
\qed
\end{proof}

\begin{proof}[Lemma~\ref{lem:tsqmin}]
Let us take any interpretation of logical variables $\lint$ and a configuration $\conf = (\state,
(\events, \order), \ghostts) \in \evalf{\fiscand}{\lint}$ such that $\ttcand =
\getEvent(\events, \ghostts, \ttcandtid, \ttcandts)$ and $\ttcand \in
\inqueue{\state({\tt pools}), \events, \ghostts}$ both hold. We need to prove
that:
\begin{equation*}
\forall e \in \inqueue{\state(\tt pools), \events, \ghostts} \ldotp
\tidof{\events(e)} \in \ttvisited
\implies
\nedge{e}{\order}{\ttcand}
\end{equation*}

Let ${\tt DEQ} = \myEid$ be the current dequeue event. By definition, the set
$\seen(\conf, {\tt DEQ})$ is a subset of $\inqueue{\ttpools, \events,
\ghostts}$. In other words, every enqueue with a value in the data structure is
either seen by ${\tt DEQ}$ or not.

According to $\fiscand$, $\ttcand \in \seen(\conf, {\tt DEQ})$. By Lemma~\ref{lem:tsqinvis}, no unseen enqueue
can precede $\ttcand$ in the abstract history. Additionally, since $\ttcand$'s
timestamp is minimal among enqueues seen by ${\tt DEQ}$, $\ttcand$ is necessary
$\order$-minimal among them according to $\invalg$(i).
\qed
\end{proof}

\subsection{Preservation of $\invalg$(i)} \label{sec:invts}

Showing that the invariant $\inv$ is preserved by all primitive commands is
mostly straightforward, except for the command assigning a timestamp to an
enqueued value at line~\ref{line:tsq_setts}. When the latter happens, it is
necessary to prove that the property of the timestamps $\invalg$(i) is not
invalidated. To show that this is indeed the case, one has to observe a certain
property of timestamps generated by the function {\tt newTimestamp}: a timestamp
generated for the current enqueue event $\myEid$ and stored in a
memory cell $\tt timestamp$ is greater than timestamps of all enqueues that
precede $\myEid$ in the abstract history and still have their values in the
data structure. Specifically, we define an assertion $\newts({\tt ts})$ denoting
configurations $(\state, (\events, \order), \ghost)$ that satisfy the following:
\[
\forall \eid \in \inqueue{\state({\tt pools}), \events, \ghost} \ldotp \edge{\eid}{\order}{\myEid} \implies
\ghostts(\eid) \tsless {\tt ts}
\]
It is easy to see that $\newts$ asserts the same property as $\invalg$(i), but only for the current event and a timestamp generated for it. When at line~\ref{line:tsq_setts} the timestamp gets assigned, $\newts$ enables concluding that $\invalg$(i) is preserved.

\begin{figure}[t]
\begin{lstlisting}[basicstyle=\small\ttfamily]
int counter = 1;

TS newTimestamp() {
#\lstassert{
    \inv \land \started(\tid, \enqOp)
}#   
  int oldCounter = counter;
#\lstassert{
    \inv \land \started(\tid, \enqOp) \land
    {\sf cntProp}({\tt oldCounter})\\
    {}\land {\tt oldCounter} \leq {\tt counter} 
}# 
  TS result;
  if (CAS(counter, oldCounter, oldCounter+1))
    #\lstassert{
        \inv \land
        \started(\tid, \enqOp)
        \\{}\land {\sf cntProp}({\tt oldCounter})
        \land {\tt oldCounter} < {\tt counter} 
    }# 
    result = (oldCounter, oldCounter);
    #\lstassert{
        \inv \land
        \started(\tid, \enqOp) \land \newts({\tt result})
    }# 
  else
    #\lstassert{
        \inv \land
        \started(\tid, \enqOp)
        \\{}\land {\sf cntProp}({\tt oldCounter})
        \land {\tt oldCounter} < {\tt counter}
    }#
    result = (oldCounter, counter-1);
    #\lstassert{
        \inv \land
        \started(\tid, \enqOp) \land \newts({\tt result})
    }#
  #\lstassert{
    \inv \land
        \started(\tid, \enqOp) \land \newts({\tt result})
  }# 
  return result;
}
\end{lstlisting}
\caption{Proof outline for the timestamp generating algorithm}
\label{fig:outline-ts}
\end{figure}

We prove the following Hoare
specification for the timestamp generation algorithm and outline the proof if Figure~\ref{fig:outline-ts}:

\begin{minipage}{\textwidth}
\begin{lstlisting}[numbers=none,basicstyle=\small\ttfamily]
#\lstassert{
    \inv \land \started(\tid, \enqOp)
}# 
TS timestamp := newTimestamp();
#\lstassert{
  \inv \land \started(\tid, \enqOp) \land \newts
}# 
\end{lstlisting}
\end{minipage}

The assertion $\newts$ is obtained with the help of the following auxiliary assertion, which connects the generated timestamp to the real-time order using $\invalg$(iii):
\begin{multline*}
\evalf{{\sf cntProp}(C)}{\lint} \triangleq
\{
(\state, \history, \ghost) \mid
\forall a, b \ldotp \forall \eid \in \inqueue{\state({\tt pools}), \history,
\ghost} \ldotp \edge{\eid}{\order}{\myEid}
\\ {}
\land \ghostts(\eid) = (a, b) \implies b < C
\}
\end{multline*}
It is easy to see that ${\sf cntProp}({\tt counter})$ is implied by the
invariant property $\invalg$(iii). Thus, after the first line of {\tt
newTimestamp}, ${\sf cntProp}({\tt oldCounter})$ holds. Later on, when the
timestamp ${\tt result}$ is formed, ${\sf cntProp}({\tt counter})$ yields us the
fact that ${\tt result}$ is a timestamp greater than timestamps of all enqueues
that have a value in the pools and precede $\myEid$, which concludes the proof
of $\newts({\tt ts})$.

\subsection{Stability of the loop invariant}\label{app:lem:tsq:linv:stable}

\begin{proof}[Lemma~\ref{lem:tsq:linv:stablevis}]
Since $(\conf, \conf') \in \rely_{\tid}$, there exists a thread $\tid'$ such
that one of the following situations takes place:
\begin{itemize}
  \item $(\conf, \conf') \in {\dashrightarrow}_{\tid'}$,
  \item $(\conf, \conf') \in \guar_{\tid', {\tt local}}$, or
  \item there exists $\hat{\alpha}$ and $P$ such that
  $\guar_{\tid',\hat{\alpha},P} \subseteq \rely_{\tid}$ and $(\conf, \conf') \in
  \guar_{\tid',\hat{\alpha},P}$.
\end{itemize}
In further, we prove the lemma separately for each $\hat{\alpha}$ and $P$. In
each case, we assume that $\conf = (\state, (\events, \order), \ghostts)$ and
$\conf' = (\state', (\events', \order'), \ghostts')$.

\noindent\textbf{Case \#1:} $(\conf, \conf') \in {\dashrightarrow}_{\tid'}$.
This environment transition only adds a new event $e$ in a thread $\tid'$ and
orders it after completed events. As a result of this environment transition,
$e$ is uncompleted in $\conf'$. By Definition~\ref{def:history},
$\edge{e}{\order'}{{\tt DEQ}}$ does not hold. Consequently, $e \notin
\visible(\conf', {\tt DEQ})$. It is easy to see that all other enqueues outside
of $\visible(\conf, {\tt DEQ})$ are not affected by this environment transition,
so we can conclude that $\visible(\conf', {\tt DEQ}) \subseteq \visible(\conf,
{\tt DEQ})$.

\medskip

\noindent\textbf{Cases \#2 and \#3:} $\hat{\alpha}$ is either ${\tt insert}$ or
${\tt setTS}$, and $P = \inv\land\started(\tid',\enqOp)$. These environment
transitions only update the abstract history, concrete and ghost state
associated with an event $e$, which is uncompleted in $\conf$ ($\events(e) =
(\tid', \enqOp, v, \Todo)$). Since $e$ is uncompleted, by
Definition~\ref{def:history}, $\edge{e}{\order}{{\tt DEQ}}$ does not hold.
Neither of these environment transitions add any edges into the abstract
history, meaning that $\edge{e}{\order'}{{\tt DEQ}}$ does not hold either.
Consequently, $e \notin \visible(\conf', {\tt DEQ})$. It is easy to see that all
other enqueues outside of $\visible(\conf, {\tt DEQ})$ are not affected by this
environment transition, so we can conclude that $\visible(\conf', {\tt DEQ})
\subseteq \visible(\conf, {\tt DEQ})$.

\medskip

\noindent\textbf{Case \#4:} $\hat{\alpha} = {\tt scan}$ and $P =
\inv\land\started(\tid', \deqOp)$. This environment transition orders some of
the enqueue events in front of an uncompleted dequeue $d$ ($\events(d) = (\tid',
\deqOp, \_, \Todo)$). Let $e$ be an enqueue event in $\conf$ such that $e \notin
\visible(\conf, {\tt DEQ})$. Out of the reasons why $e$ is not visible by ${\tt
DEQ}$ in $\conf$, only $\edge{e}{\order}{{\tt DEQ}}$ may be affected by this
environment transition, as it simply adds edges in the abstract history.
However, we argue that an edge $\edge{e}{\order'}{{\tt DEQ}}$ is not added by
${\tt scan}$. Indeed, $d$ is uncompleted, so by Definition~\ref{def:history} it
cannot precede any other event in the abstract history. Consequently,
$\edge{e}{\order'}{{\tt DEQ}}$ is not added as implied by transitivity.

\noindent\textbf{Case \#5:} $\hat{\alpha} = {\tt remove}$ and $P = \inv \land
\started(\tid', \deqOp)$. Let $d$ be the uncompleted event by a thread $\tid'$,
i.e. such that $\events(d) = (\tid', \deqOp, \_, \Todo)$. Let $e \in
\inqueue{\state({\tt pools}), \events, \ghostts}$ be the enqueue event removed
by this environment transition. As a result,  $e \in
\inqueue{\state'({\tt pools}), \events', \ghostts'}$ does not hold in $\conf'$,
so $e \notin \visible(\conf', {\tt DEQ})$. It is easy to see that this
environment transition affects other enqueue events only by ordering them w.r.t.
other events. Consequently, if some ${\tt ENQ} \notin \visible(\conf, {\tt
DEQ})$, the only reason it may become visible in $\conf'$ is an addition of the
edge $\edge{{\tt ENQ}}{\order'}{{\tt DEQ}}$. However ${\tt remove}$ does not
introduce such edge, and it is not implied by transitivity. \artem{Say more, and
more formally.}
\medskip

\noindent\textbf{Case \#6:} $\hat{\alpha} = {\tt genTS}$ and $P = \inv$. This
transition does not affect any concrete state, ghost state or the abstract
history associated with any enqueue event.
\medskip

\noindent\textbf{Case \#7:} $(\conf, \conf') \in \guar_{\tid', {\tt local}}$.
This transition does not affect any concrete state, ghost state or the abstract
history associated with any enqueue event.
\qed
\end{proof}

\clearpage

\newcommand{\streach}[3]{{#2} \mathrel{{\leadsto}_{#1}} {#3}}
\newcommand{\nstreach}[3]{{#2} \mathrel{{\not\leadsto}_{#1}} {#3}}

\newcommand{\ttcurr}{{\tt curr}}
\newcommand{\ttmarked}{{\tt marked}}
\newcommand{\ttval}{{\tt val}}

\section{Proof details for the Optimistic Set}
\label{app:set}

\subsection{Overview of proof details}

\begin{figure}[t]
\hfill
\begin{minipage}[t]{0.92\textwidth}
\begin{itemize}
\item[$(\invlin)$] all linearizations of completed events of the abstract
history satisfy the queue specification:
\[
\abs(\history, \histories_{\sf set})
\]
\item[($\invord$)]
completed insert and remove events are linearly ordered:
\[
\forall \event, \event' \ldotp \opof{\events(\event)}, \opof{\events(\event')}
\in \{ \insertOp, \removeOp \} \implies \edge{\event}{\order}{\event'} \lor
\edge{\event'}{\order}{\event}
\]
\item[($\invalg$)] for every node $n \in \dom{\state}$, the following holds:
\begin{itemize}[leftmargin=0pt]
\item[$(i)$]
$\begin{multlined}[t]
\node.\ttmarked = \FALSE \implies
(\events, \node) \in \dom{\insertOf} \land 
(\events, \node) \notin \dom{\removeOf}
\\ {} \land
\forall \eid \ldotp
\edge{\insertOf(\events, \node)}{\order}{\eid}
\land
\resof{\events(\eid)} = \TRUE
\implies \opof{\events(\eid)}
= \containsOp
\end{multlined}$
\item[$(ii)$] $\begin{multlined}[t]
n.\ttmarked = \TRUE \implies
(\events, \node) \in \dom{\insertOf} \land 
(\events, \node) \in \dom{\removeOf} \\ {} \land
\forall \eid \ldotp
\edge{\insertOf(\events, \node)}{\order}{\edge{\eid}{\order}{\removeOf(\events, \node)}}
\land
\resof{\events(\eid)} = \TRUE \implies{} \\
\opof{\events(\eid)}
= \containsOp
\end{multlined}$
\item[$(iii)$]
$\forall \node' \in \dom{\state} \ldotp
\streach{\state}{\node}{\node'} \implies \node.\ttval \leq \node'.\ttval$
\item[$(iv)$]
$\forall \node \ldotp (\events, \node) \in \dom{\removeOf} \implies
\edge{\insertOf(\events, \node)}{\order}{\removeOf(\events, \node)}$
\item[$(v)$]
$\streach{\state}{{\tt head}}{\node} \iff \node.\ttmarked = \FALSE$
\item[$(vi)$]
$\streach{\state}{\node}{{\tt tail}}$
\end{itemize}
\item[($\invwf$)]
\begin{itemize}
\item[$(i)$]
$\forall \event \ldotp
\resof{\events(\event)} = \TRUE
  \iff \dom{\ghostnode}$
\item[$(ii)$]
$\forall \event \ldotp \argsof{\events(\event)} = \ghostnode(\event).\ttval$
\item[$(iii)$]
$\forall \node \in \dom{\state} \ldotp \node.\ttmarked = \TRUE
  \lor \node.\ttmarked = \FALSE$
\end{itemize}
\end{itemize}
\end{minipage}
\caption{The Optimistic Set: The invariant $\inv = \invlin \land \invord \land
\invalg \land \invwf$
\label{app:fig:set:inv}
}
\end{figure}






We prove the following specifications for the set operations:
\[
\spec{\rely_\tid, \guar_\tid}
      {\tid}
      {\begin{array}{@{}c@{}}
      \inv \land \started(\tid, \insertOp)
      \end{array}}
      {\insertOp}
      {\begin{array}{@{}c@{}}
      \inv \land \finished(\tid, \insertOp)
      \end{array}}
\]
\[
\spec{\rely_\tid, \guar_\tid}
      {\tid}
      {\begin{array}{@{}c@{}}
      \inv \land \started(\tid, \removeOp)
      \end{array}}
      {\removeOp}
      {\begin{array}{@{}c@{}}
      \inv \land \finished(\tid, \removeOp)
      \end{array}}
\]
\[
\spec{\rely_\tid, \guar_\tid}
      {\tid}
      {\begin{array}{@{}c@{}}
      \inv \land \started(\tid, \containsOp)
      \end{array}}
      {\containsOp}
      {\begin{array}{@{}c@{}}
      \inv \land \finished(\tid, \containsOp)
      \end{array}}
\]

For each thread $\tid$, we generate rely and guarantee relations analogously to
\S\ref{sec:details}. To this end, we let ${\tt insert}$, ${\tt remove}$, ${\tt
contains}$ denote atomic steps corresponding to atomic blocks extended with
ghost code in Figure~\ref{fig:set} (at lines
\ref{line:insert_atomic_start}-\ref{line:insert_atomic_end},
\ref{line:remove_atomic_start}-\ref{line:remove_atomic_end} and
\ref{line:locate_atomic_start}-\ref{line:locate_atomic_end} accordingly). For
each thread $\tid$, relations $\guar_\tid$ and $\rely_\tid$ are then defined as
follows:

\[
\begin{array}{rcl}
\guar_\tid & \triangleq &
\guar_{\tid,{\tt insert},\inv} \cup
\guar_{\tid,{\tt remove},\inv} \cup
\guar_{\tid,{\tt contains},\inv} \cup
\guar_{\tid, {\tt local}},
\\
\rely_\tid & \triangleq &
\cup_{\tid' \in \tidType \setminus \{\tid\}}
(\guar_{\tid'} \cup {\dashrightarrow}_{\tid'})
\end{array}
\]
In the above, we assume a relation $\guar_{\tid, {\tt local}}$, which describes
arbitrary changes to certain program variables and no changes to the abstract
history and the ghost state. That is, we say that the nodes of the linked list
(such as ${\tt head}$ are {\em shared} program variables in the algorithm, and
all others are {\tt thread-local}, in the sense that every thread has its own
copy of them. We let $\guar_{\tid, {\tt local}}$ denote every possible change to
{\em thread-local} variables of a thread $\tid$ only.

In Figure~\ref{app:fig:set:inv} we present the invariant $\inv$. To formulate
the invariant, we characterise all of the nodes in the data structure as either
{\em reachable} or {\em unreachable}.

\begin{definition}[Reachable nodes]
For a set of nodes of the data structure in a state $\state$, we let
${\leadsto}_\state \subseteq \nodeType \times \nodeType$ to be a {\em
reachability} relation on the nodes and let $\streach{\state}{\node}{\node'}$
hold whenever there exists a sequence of node identifiers $\node_0, \node_1,
\dots,
\node_{k}$ ($k \geq 0$) such that $\node_{i+1} = ({*}\node_i).{\tt next}$,
$\node_0 = \node$, $\node_k = \node'$.
\end{definition}

Additionally, we define a function $\removeOf :
\powerset{\eventType} \times \nodeType \rightharpoonup \eidType$ which maps a
node identifier $n$ to a matching remove event identifier $\eid$ (if it exists).
\[
\removeOf(\events, n) =
\begin{cases}
\eid, \mbox{if } \ghostnode(\eid) = n \land \opof{\events(\eid)} = \removeOp\\
\mbox{undefined otherwise}
\end{cases}
\]

We also assume that $\eventType$ consists of well-typed queue events
$\sub{\eid}{(\tid, \op, \args, \res)}$ meeting the following constraints:
\begin{itemize}
\item $\op \in \opType = \{ \insertOp, \removeOp, \containsOp \}$,
\item $\args \in \valueType$, and
\item $\res \in \{ \FALSE, \TRUE \}$.
\end{itemize}

\subsection{Loop invariant}

The most important part of the proof are the obligations to satisfy the
sequential specification of the set at the commitment point of {\tt contains}. As we argue in \S\ref{sec:set}, it is necessary to demonstrate that the following two properties hold of the current {\tt contains} event $\myEid$:
\begin{itemize}
  \item if $\ttcurr.\ttval = \argsof{\events(\myEid)}$, then all successful
  removes after $\insertOf(\events, {\tt curr})$ are concurrent with $\myEid$;
  \item if $\ttcurr.\ttval > \argsof{\events(\myEid)}$, then all successful
  inserts after $\lastremove(\events, \argsof{\events(\myEid)})$ are concurrent
  with $\myEid$;
\end{itemize}
To discharge both obligations, we build a loop invariant $\linv$ for the loop in
the {\tt locate} operation invoked by the {\tt contains} operation in a thread
$\tid$. For a given interpretation of logical variables $\lint$, the loop
invariant $\linv$ denotes triples $(\state, (\events, \order), \ghostnode) \in
\evalf{\linv}{\lint}$ such that the following conditions hold of the current
node \ttcurr in a thread $\tid$ and every node $\node' \in \nodeType$:
\begin{itemize}
\item when $\node'$ is reachable from $\node$ and stores the value sought by the
{\tt contains} operation, it is either in the data structure or a matching
remove operation is concurrent with the current one:
\begin{multline*}
\streach{\state}{\ttcurr}{\node'} \land \node'.{\sf val} = \argsof{{\sf
last}(\tid, \events, \order)} \implies{}
\\
\node'.{\sf marked} = \FALSE \lor
\nedge{\removeOf(\node')}{\order}{{\sf last}(\tid, \events, \order)}
\end{multline*}
\item when $\node'$ is not reachable from $\node$ and stores the value sought by
the {\tt contains} operation, it is either removed from the data structure or it
has been inserted concurrently:
\begin{multline*}
\nstreach{\state}{\ttcurr}{\node'}
 \land \node'.{\sf val} = \argsof{{\sf
last}(\tid, \events, \order)} \implies{}
\\
\node'.{\sf marked} = \TRUE \lor
\nedge{\insertOf(\node')}{\order}{{\sf last}(\tid, \events, \order)}
\end{multline*}
\end{itemize}

\artem{informal explanations of lemmas are commented out. merge them with the proofs for readability}



\begin{lemma}
For every $\lint : \lvarsType \to \valueType$ and configuration $(\state,
(\events, \order), \ghostts) \in \evalf{\linv}{\lint}$, if ${\tt curr.val} =
\argsof{\events(\myEid)}$ then:
\[
\neg\exists \eid \ldotp 
\events(\eid) = (\_, \removeOp, \argsof{\events(\myEid)}, \TRUE)
\land
\edge{\insertOf(\events, \ttcurr)}{\order}{\edge{\eid}{\order}{\myEid}}
\]
\end{lemma}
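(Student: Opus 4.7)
My plan is a proof by contradiction. Assume such an identifier $\eid$ exists, and pick one that is $\order$-minimal among all successful removes $\hat\eid$ satisfying $\events(\hat\eid) = (\_, \removeOp, v, \TRUE)$ and $\edge{\insertOf(\events, \ttcurr)}{\order}{\edge{\hat\eid}{\order}{\myEid}}$, where $v = \argsof{\events(\myEid)} = \ttcurr.\ttval$. Let $\node' = \ghostnode(\eid)$, which is well-defined by $\invwf$(i); then $\invwf$(ii) gives $\node'.\ttval = v$, and the contrapositive of $\invalg$(i) applied to $\eid = \removeOf(\events, \node')$ forces $\node'.\ttmarked = \TRUE$.

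The next step is to invoke the loop invariant $\linv$ at $\node'$. If $\streach{\state}{\ttcurr}{\node'}$ — which includes the possibility $\node' = \ttcurr$ via the zero-length reachability path — the reachability clause of $\linv$ reduces to $\nedge{\eid}{\order}{\myEid}$ once $\node'.\ttmarked = \TRUE$ rules out the other disjunct, directly contradicting the hypothesis $\edge{\eid}{\order}{\myEid}$.

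The main obstacle is the residual case $\node' \neq \ttcurr$ together with $\nstreach{\state}{\ttcurr}{\node'}$, where the non-reachability clause of $\linv$ is trivially satisfied by $\node'.\ttmarked = \TRUE$ and so yields no immediate contradiction. To close it I plan to exploit $\invord$, which linearly orders the two distinct completed successful inserts $\insertOf(\events, \ttcurr)$ and $\insertOf(\events, \node')$, in combination with the set specification inherited from $\invlin$. In the subcase $\edge{\insertOf(\events, \ttcurr)}{\order}{\insertOf(\events, \node')}$, any linearization of the completed prefix has these two successful inserts of $v$ in this fixed order, so the set spec forces a successful removal $\eid''$ of $v$ to be ordered between them; by transitivity $\eid''$ is a strictly $\order$-smaller counterexample, contradicting minimality. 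In the dual subcase $\edge{\insertOf(\events, \node')}{\order}{\insertOf(\events, \ttcurr)}$, the same spec-based reasoning pinpoints the first successful removal of $v$ following $\insertOf(\events, \node')$ as $\eid = \removeOf(\events, \node')$ itself — using $\invalg$(iv) plus the impossibility of any intervening successful insert of $v$ in the totally ordered insert/remove sub-history — placing $\eid$ strictly before $\insertOf(\events, \ttcurr)$ and contradicting $\edge{\insertOf(\events, \ttcurr)}{\order}{\eid}$.
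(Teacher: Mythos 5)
Your proof is correct, but it takes a genuinely different route from the paper's. The paper instantiates the loop invariant only at the node $\ttcurr$ itself (reachable from itself by the empty path) and splits on $\ttcurr.\ttmarked$: if it is $\FALSE$, clause $\invalg$(i) already states that every successful event ordered after $\insertOf(\events,\ttcurr)$ is a contains, so no offending remove can exist; if it is $\TRUE$, the loop invariant yields $\nedge{\removeOf(\events,\ttcurr)}{\order}{\myEid}$, and $\invalg$(ii) together with $\invord$ force any offending remove $r$ to coincide with or follow $\removeOf(\events,\ttcurr)$, so $\edge{r}{\order}{\myEid}$ would propagate back to $\edge{\removeOf(\events,\ttcurr)}{\order}{\myEid}$, a contradiction. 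You instead instantiate the loop invariant at the node $\node' = \ghostnode(\eid)$ of an $\order$-minimal offending remove, which forces a reachability case split; the reachable case closes exactly as in the paper's marked case, but the unreachable case requires an appeal to the sequential set specification through $\invlin$, plus $\invord$ to transfer ``a successful remove lies between the two successful inserts in every linearization'' back into a statement about $\order$, plus $\invalg$(ii) and (iv). Both arguments go through; the paper's is shorter and purely invariant-based, needing neither $\invlin$, nor a minimal-counterexample selection, nor any reasoning about linearizations, whereas yours shows the same fact can also be extracted from the specification-level invariant at the cost of noticeably more machinery.
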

\begin{proof}
According to the loop invariant, the following holds:
\begin{equation*}\label{app:eq:setlemmat}
\ttcurr.\ttmarked = \FALSE \lor
\nedge{\removeOf(\ttcurr)}{\order}{\myEid}
\end{equation*}
Let us first consider the case when $\ttcurr.\ttmarked = \FALSE$. According to
$\invalg$(i), the following is true:
\begin{multline*}
(\events, \ttcurr) \in \dom{\insertOf} \land 
(\events, \ttcurr) \notin \dom{\removeOf}
\\ {} \land
\forall \eid \ldotp
\edge{\insertOf(\events, \node)}{\order}{\eid}
\land
\resof{\events(\eid)} = \TRUE
\implies \opof{\events(\eid)}
= \containsOp
\end{multline*}
which immediately implies that no successful remove operation follows
$\insertOf(\events, \ttcurr)$ and allows us to conclude the statement of the
lemma.

Let us now consider the case when $\ttcurr.\ttmarked = \TRUE$. By
(\ref{app:eq:setlemmat}), $\nedge{\removeOf(\ttcurr)}{\order}{\myEid}$ also holds then. Let us assume that there exists a remove event $r$ contradicting the lemma:
\[ 
\opof{\events(r)} = \removeOp \land
\argsof{\events(r)} = \argsof{\events(\myEid)}
\land
\edge{\insertOf(\events, \ttcurr)}{\order}{\edge{r}{\order}{\myEid}}
\]
By $\invalg$(ii) and $\invord$, it can only be the case that
$\edge{\insertOf(\events, \ttcurr)}{\order}{\edge{\removeOf(\events,
\ttcurr)}{\order}{r}}$. However, together with the formula above, that implies
$\edge{\removeOf(\ttcurr)}{\order}{\myEid}$, so we arrived to a contradiction.
Consequently, the statement of the lemma holds.
\qed
\end{proof}

\begin{lemma}
For every $\lint : \lvarsType \to \valueType$ and configuration $(\state,
(\events, \order), \ghostts) \in \evalf{\linv}{\lint}$, if $\ttcurr.\ttval >
\argsof{\events(\myEid)}$ then:
\begin{multline*}
\neg\exists \eid \ldotp
\events(\eid) = (\_, \insertOp, \argsof{\events(\myEid)}, \TRUE)
\\ {} \land
\edge{\lastremove(\argsof{\events(\myEid)})}{\order}{\edge{i}{\order}{\myEid}}
\end{multline*}
\end{lemma}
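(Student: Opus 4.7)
The plan is to prove the lemma by contradiction, mirroring the strategy of the preceding lemma but swapping the roles of insert and remove (and of reachable and unreachable nodes). Suppose there exists an event $i$ with $\events(i) = (\_, \insertOp, \argsof{\events(\myEid)}, \TRUE)$ and $\edge{\lastremove(\argsof{\events(\myEid)})}{\order}{\edge{i}{\order}{\myEid}}$. Let $v = \argsof{\events(\myEid)}$ and let $\node' = \ghostnode(i)$; this is well defined by $\invwf$(i) because $i$ is a successful insert, and $\node'.\ttval = v$ by $\invwf$(ii).

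The first key step is to argue that $\node'$ is not reachable from $\ttcurr$ in $\state$. Indeed, from the hypothesis $\ttcurr.\ttval > v$ and $\invalg$(iii) applied to any potential path from $\ttcurr$ to $\node'$, we would get $\ttcurr.\ttval \le \node'.\ttval = v$, contradicting $\ttcurr.\ttval > v$. Hence $\nstreach{\state}{\ttcurr}{\node'}$. Now I would apply the second clause of the loop invariant $\linv$ at the event $\myEid$: since $\node'$ is unreachable and $\node'.\ttval = v = \argsof{\events(\myEid)}$, we obtain
\[
\node'.\ttmarked = \TRUE \ \lor\ \nedge{\insertOf(\events, \node')}{\order}{\myEid}.
\]
By $\invwf$(i) (together with the definition of $\insertOf$), $\insertOf(\events, \node') = i$, and we assumed $\edge{i}{\order}{\myEid}$. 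Therefore the disjunction forces $\node'.\ttmarked = \TRUE$.

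The final step is to extract a cycle in $\order$. From $\node'.\ttmarked = \TRUE$, $\invalg$(ii) yields a successful remove $r = \removeOf(\events, \node')$ with $\edge{i}{\order}{r}$; by $\invwf$(ii), $\argsof{\events(r)} = \node'.\ttval = v$. By the definition of $\lastremove(\events, \order, v)$ together with $\invord$, which linearly orders all completed insert/remove events, we have either $r = \lastremove(\events, \order, v)$ or $\edge{r}{\order}{\lastremove(\events, \order, v)}$. In either case, combining with $\edge{i}{\order}{r}$ and with the standing assumption $\edge{\lastremove(\events, \order, v)}{\order}{i}$, we obtain $\edge{i}{\order}{i}$ by transitivity, contradicting irreflexivity of $\order$ from Definition~\ref{def:history}.

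The main obstacle I anticipate is the bookkeeping around $\lastremove$ when the event $r$ coincides with $\lastremove(\events, \order, v)$ rather than strictly preceding it; the argument still goes through because one then has $\edge{i}{\order}{\lastremove(\events, \order, v)}$ directly, which together with the assumption $\edge{\lastremove(\events, \order, v)}{\order}{i}$ again yields a cycle. Everything else is a direct application of the loop invariant and the global invariant, exactly parallel to the previous lemma.
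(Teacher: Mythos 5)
Your proof is correct and follows essentially the same route as the paper's: derive $\nstreach{\state}{\ttcurr}{\ghostnode(i)}$ from $\invalg$(iii), apply the second clause of the loop invariant, discharge the $\nedge{i}{\order}{\myEid}$ disjunct against the assumption, and obtain a contradiction in the marked case via $\removeOf$ and the maximality of $\lastremove$. The only cosmetic differences are that you fold the paper's two-case split on the marked flag into an immediate elimination of one disjunct, and you cite $\invalg$(ii) where the paper uses $\invalg$(iv) for the edge $\edge{i}{\order}{\removeOf(\events,\node')}$ — neither affects correctness.
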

\begin{proof}
It is easy to see that $\invalg$(i, ii, iii) and $\invwf$(i) together imply that:
\[
\forall \node, \eid \ldotp \node \in \dom{\state} \Leftrightarrow
\eid = \insertOf(\events, \node)
\]
Let us assume that there exists $\eid$ such that:
\begin{multline}\label{app:eq:setlemmafgoal}
\events(\eid) = (\_, \insertOp, \argsof{\events(\myEid)}, \TRUE)
\\ {} \land
\edge{\lastremove(\argsof{\events(\myEid)})}{\order}{\edge{i}{\order}{\myEid}}
\end{multline}
Then $\ghostnode(\eid).\ttval = \argsof{\events(\myEid)}$.
By $\invalg$(iii), $\ttcurr.\ttval >
\argsof{\events(\myEid)}$ implies that $\nstreach{\state}{\ttcurr}{\ghostnode(\eid)}$.
Hence, from the loop invariant we learn that:
\begin{equation}\label{app:eq:setlemmaf}
\node.\ttmarked = \TRUE \lor
\nedge{\eid}{\order}{\myEid}
\end{equation}
Let us first consider the case when $\ttcurr.\ttmarked = \TRUE$. According to
$\invalg$(ii), there exists $\removeOf(\events, \node)$. By $\invalg$(iv),
$\edge{\eid}{\order}{\removeOf(\events, \node)}$. Note that
$\lastremove(\argsof{\events(\myEid)})$ is the last remove event of a value
$\argsof{\events(\myEid)}$, so $\edge{\eid}{\order}{\lastremove(\argsof{\events(\myEid)})}$. However, that contradicts (\ref{app:eq:setlemmafgoal}).

Let us now consider the case when $\ttcurr.\ttmarked = \FALSE$. By
(\ref{app:eq:setlemmaf}), $\nedge{\eid}{\order}{\myEid}$ also holds
then. However, that contradicts (\ref{app:eq:setlemmafgoal}).

For all values $\ttcurr.\ttmarked$, we got a contradiction assuming that there
exists $\eid$ satisfying (\ref{app:eq:setlemmafgoal}). Consequently, such $\eid$
does not exist, which concludes the proof of the lemma.
\qed
\end{proof}

\artem{TODO preservation of $\linv$}



\artem{TODO preservation of $\invlin$}

\artem{TODO stability}

\artem{Alternative proof with helping is commented out}
\clearpage

\newcommand{\slotType}{{\sf Slot}}
\newcommand{\ghostslot}{\ghost_{\sf slot}}

\newcommand{\ttarray}{{\tt Array}}
\newcommand{\ttk}{{\tt k}}
\newcommand{\ttn}{{\tt n}}
\newcommand{\ttback}{{\tt back}}

\section{The Herlihy-Wing Queue}

\subsection{The algorithm}

\begin{figure}[t]
\begin{minipage}[t]{0.36\textwidth}
\begin{lstlisting}[basicstyle=\small\ttfamily]
int back = 0;
int[] #\ttarray# =
          new int[+#$\infty$#];

enqueue(#$\valueType$# v) {
  #\lstassert{ \inv \land \started(\tid, \enqOp)}#
  atomic { // getSlot $\label{line:hwq:getslot}$
    #\ttk# := inc(#\ttback#);
    #\colorbox{gray}{
      $\ghostslot$[\myEid] := \ttk;
      }#
  }
  #\lstassert{ \inv \land {\sf hasSlot}(\tid, \enqOp)}#
  atomic { // insert $\label{line:hwq:insert}$
    #\ttarray[\ttk]# := v;
    #\colorbox{gray}{
      $\resof{\myEid} := \Done;$
      }#
  }
  #\lstassert{ \inv \land \finished(\tid, \enqOp)}#
}
\end{lstlisting}
\end{minipage}
\hfill
\begin{minipage}[t]{0.6\textwidth}
\begin{lstlisting}[firstnumber=14,basicstyle=\small\ttfamily]
#$\valueType$# dequeue() {
  Val res = $\NULL$;
  #\lstassert{
    \inv \land \started(\tid, \deqOp)
  }#
  do {
    #\ttn := \ttback;# $\label{hwq:back}$
    #\lstassert{
      \inv \land \started(\tid, \deqOp) \land \ttn \leq \ttback
    }#
    for k = 1 to n {
    #\lstassert{
      \inv \land \started(\tid, \deqOp) \land \linv
    }#   
    atomic { // remove  $\label{line:hwq:remove}$
      res := Swap(#\ttarray[\ttk]#, #$\NULL$#);
\end{lstlisting}
\begin{lstlisting}[firstnumber=23,backgroundcolor=\color{gray}]  
      if (res #${}\neq \NULL$#) {
        #$\eidType$# #${\tt ENQ}$# := #${\sf getEvent}$(\ttk);#
        #$\resof{\events(\myEid)}$# := res;
        #$\order := (\order \cup \{ ({\tt ENQ}, \myEid) \}$#
               #${} \cup \{ ({\tt ENQ}, e') \mid e' \in \untaken{\events, \ghost} \}$#
               #${}\cup \{ (\myEid, d) \mid \opof{\events(d)} = \deqOp$#
                         #${} \land d \in \ids{\incomplete{\events}} \})^+$#
      }
\end{lstlisting}
\begin{lstlisting}[firstnumber=31] 
    }
    #\lstassert{
      \inv \land \linv \land ((\started(\tid, \deqOp) \land {\tt res} = \NULL)
      \\ \hfill {} \lor (\finished(\tid, \deqOp) \land {\tt res} \neq \NULL)) 
    }#
    if res $\neq$ $\NULL$ then
      break;
    #\lstassert{
      \inv \land \started(\tid, \deqOp) \land \linv
    }#
  } } while (res != NULL);
  #\lstassert{
    \inv \land \finished(\tid, \deqOp)
  }#
}
\end{lstlisting}
\end{minipage}
\caption{The Herlihy-Wing queue
\label{app:fig:hwq}
}
\end{figure}

We now present the Herlihy-Wing queue~\cite{linearizability} as our next running
example. Values in the queue are stored in an infinite array, $\ttarray$, with
unbounded index $\ttback$ pointing to the first unoccupied cell of the
array. Initially, each cell of the array is considered empty and contains
$\NULL$. Accordingly, initially $\ttback = 0$.

An {\tt enqueue} operation performs two steps. First, it acquires an index
$\ttk$ with the help of atomic command ${\tt inc}$ returning the value of
$\ttback$ and then incrementing it. At the second step, the {\tt enqueue}
operations stores its argument in $\ttarray[\ttk]$.

A {\tt dequeue} operation obtains the length of the currently used part of the
array and stores it in $\ttn$. Then the operation iterates over array cells from
the beginning till $\ttn$ and looks at the values in them. If a non-$\NULL$
value is encountered, the cells gets overwritten with $\NULL$ to remove the
value from the queue, and the value itself is returned as a result of the {\tt
  dequeue} operation. Alternatively, if all cells of $\ttarray$ appeared to
store $\NULL$ during the loop, the algorithm restarts.

\subsection{Concrete and auxiliary state}

We assume that $\eventType$ consists of well-typed queue events $[\eid : (\tid,
\op, \args, \res)]$ meeting the following constraints:
\begin{itemize}
\item $\op \in \opType = \{ \enqOp, \deqOp \}$,
\item $\op = \deqOp \iff \args = \None$, and
\item $\res = \Done \implies \op = \enqOp$.
\end{itemize}

We consider a set of states $\stateType = \locType \to \valueType$, ranged over
by $\state$, where $\locType = \{ {\tt back} \} \cup \{ \ttarray[i] \mid i
\in \mathbb{N} \} \cup \{ {\tt k}[\tid] \mid \tid \in \tidType \} \cup ...$ is
the set of all memory locations including the global ${\tt back}$ and infinite
array $\ttarray[\-]$, as well as thread-local variables ($\ttk$, $\ttn$ etc).

We use a function $\ghostslot : \eidType \rightharpoonup \slotType$ as ghost
state in the proof in order to map event identifiers to slots in the infinite
array. The map is established with the help of auxiliary code in the atomic
block at line~\ref{line:hwq:getslot} in Figure~\ref{app:fig:hwq}.

For given $\ghostslot$ and $\events$, every enqueue event $e \in \ids{\events}$
can be one of the following:
\begin{itemize}
\item $e \notin \dom{\ghostslot}$ --- the slot is not assigned to the enqueue
yet,
\item $e \in {\sf withSlot}(\state, \events, \ghostslot)$ --- the enqueue has a
slot, but has not written a value into it yet:
\begin{multline*}
{\sf withSlot}(\state, \events, \ghostslot)
\triangleq
\{ \event \mid e \in \ids{\incomplete{\events}}
\land
\opof{\events(e)} = \enqOp
\\ {} \land
\state(\ttarray[\ghostslot(e)]) = \NULL
\}
\end{multline*}
\item $e \in \untaken{\state, \events, \ghostslot}$ --- the slot has writen a
value into its slot:
\begin{multline*}
\untaken{\state, \events, \ghostslot}
\triangleq
\{ \event \mid e \in \complete{\events}
\land
\opof{\events(e)} = \enqOp
\\ {} \land
\state(\ttarray[\ghostslot(e)]) = \argsof{\events(e)}
\}
\end{multline*}
\item $e \in \taken{\state, \events, \ghostslot}$ -- the value written into the
slot by the enqueue has been successfully taken by some dequeue event:
\begin{multline*}
\taken{\state, \events, \ghostslot}
\triangleq
\{ \event \mid e \in \complete{\events}
\land
\opof{\events(e)} = \enqOp
\\ {} \land
\state(\ttarray[\ghostslot(e)]) = \NULL
\}
\end{multline*}
\end{itemize}

\subsection{Commitment points}

To explain the construction of abstract histories for the Herlihy-Wing queue, we
instrument the code in Figure~\ref{app:fig:hwq} with auxiliary code. When an
operation starts, we automatically add a new uncompleted event into the set of
events $\events$ to represent this operation and order it in $\order$ after all
completed events. Aside from that, the {\tt enqueue} operation has two more
commitment points. For the first, the auxiliary code in the atomic block at
line~\ref{line:hwq:getslot} maintains the ghost state $\ghostslot$. For the
second, the auxiliary code at line~\ref{line:hwq:insert} completes the enqueue
event.

Upon a {\tt dequeue}'s start, we similarly add an event representing it, and
then the operation does one of the two commitment points. At
line~\ref{line:hwq:remove}, the current {\tt dequeue} operation encounters a
non-$\NULL$ value in a slot $\ttarray[\ttk]$, in which case it returns this
value and removes it from the array. The auxiliary code accompanying this change
to the state completes the {\tt dequeue} event and also adds three following
kinds of edges to $\order$ and then transitively closes it:
\begin{enumerate}
\item $({\tt ENQ}, \myEid)$, ensuring that in all linearizations of the abstract
  history, the current dequeue returns a value that has been already inserted by
  ${\tt ENQ} = \getEvent(\events, \ghost, \ttk)$.
\item $({\tt ENQ}, e)$ for each identifier $e$ of an enqueue event whose value
  is still in the pools. This ensures that the {\tt dequeue} removes the oldest
  value in the queue.
\item $(\myEid, d)$ for each identifier $d$ of an uncompleted dequeue event.
  This ensures that dequeues occur in the same order as they remove values from
  the queue.
\end{enumerate}

\subsection{The overview of proof details}

In Figure~\ref{app:fig:hwq}, we provide the
proof outlines for the enqueue and dequeue operations, in which we prove the
following specifications:
\[
\spec{\rely_\tid, \guar_\tid}
      {\tid}
      {\begin{array}{@{}c@{}}
      \inv \land \started(\tid, \deqOp)
      \end{array}}
      {\deqOp}
      {\begin{array}{@{}c@{}}
      \inv \land \finished(\tid, \deqOp)
      \end{array}}
\]
\[
\spec{\rely_\tid, \guar_\tid}
      {\tid}
      {\begin{array}{@{}c@{}}
      \inv \land \started(\tid, \enqOp)
      \end{array}}
      {\enqOp}
      {\begin{array}{@{}c@{}}
      \inv \land \finished(\tid, \enqOp)
      \end{array}}
\]

In the proof outlines, we use an auxiliary assertion describing an enqueue event
that has obtained a slot in the array, but has not written into it yet.
\begin{multline*}
\evalf{{\sf hasSlot}(\tid, \op)}{\lint} = \{
  (\state, \history, \ghost) \mid 
    \events({\sf last}(\tid, \history)) = 
            (\tid, \op, \state(\larg[\tid]), \Todo)
    \\ \hfill {}
    \land {\sf last}(\tid, \history) \in \dom{\ghost}
\};
\end{multline*}

For each thread $\tid$, we generate rely and guarantee relations analogously to
\S\ref{sec:details}. To this end, we let ${\tt getSlot}$, ${\tt insert}$ and
${\tt remove}$ 
denote atomic steps corresponding to atomic blocks extended with ghost code in
Figure~\ref{app:fig:hwq} (at lines \ref{line:hwq:getslot},
\ref{line:hwq:insert} and \ref{line:hwq:remove})
accordingly). For each thread $\tid$, relations $\guar_\tid$ and $\rely_\tid$
are then defined as follows:

\[
\begin{array}{rcl}
\guar_\tid & \triangleq &
\guar_{\tid,{\tt getSlot},\inv\land\started(\tid,\enqOp)} \cup
\guar_{\tid,{\tt insert},\inv\land{\sf hasSlot}(\tid,\enqOp)} \cup
\guar_{\tid,{\tt remove},\inv} \\ & & \hfill {} \cup
\guar_{\tid, {\tt local}},
\\
\rely_\tid & \triangleq &
\cup_{\tid' \in \tidType \setminus \{\tid\}}
(\guar_{\tid'} \cup {\dashrightarrow}_{\tid'})
\end{array}
\]
In the above, we assume a relation $\guar_{\tid, {\tt local}}$, which describes
arbitrary changes to certain program variables and no changes to the abstract
history and the ghost state. That is, we say that $\ttback$ and $\ttarray$
are {\em shared} program variables in the algorithm, and all others are {\tt
  thread-local}, in the sense that every thread has its own copy of them. We let
$\guar_{\tid, {\tt local}}$ denote every possible change to {\em thread-local}
variables of a thread $\tid$ only.

In Figure~\ref{fig:hwq_inv} we present the invariant $\inv$. It consists of several properties:
\begin{itemize}
\item $\invlin$ -- the main correctness property;
\item $\invord$ -- properties of uncompleted events that hold by construction of the partial order;
\item $\invalg$ -- a property of the array slots;
\item $\invwf$ -- well-formedness of ghost state.
\end{itemize}

\begin{figure}[t]
\hfill
\begin{minipage}[t]{0.91\textwidth}
\begin{enumerate}
\item[$(\invlin)$] all linearizations of completed events of the abstract
history satisfy the queue specification:
\vspace{-5pt}
\[
\abs(\history, \histories_{\sf queue})
\]
\item[$(\invord)$] completed dequeues precede uncompleted ones:
\vspace{-5pt}
\[
\forall d, d' \ldotp d \in \ids{\complete{\events}} \land
   d' \notin \ids{\complete{\events}} \land \opof{d} = \opof{d'} = \deqOp \implies \edge{d}{\order}{d'}
\]
\item[$(\inv_{\sf ALG})$] the order on untaken enqueue events does not
contradict the order in which they appear in the array:
\[
\forall e_1, e_2 \in \untaken{\state, \events, \ghost} \ldotp
  \edge{e_1}{\order}{e_2} \implies
  \ghostslot(e_1) < \ghostslot(e_2)
\]
\item[$(\invwf)$] well-formedness properties of ghost state that
 enumerate all possible combinations of states, ghost states and events in a
 history:
\begin{enumerate}[leftmargin=-8pt]
\item $\ghostslot$ maps some of the events from $\events$ to slots preceding $\tt
back$:
\[
\forall e \in \dom{\ghostslot} \ldotp e \in \ids{\events} \land
\ghostslot(e) < \state({\tt back})
\]
\item $\ghostslot$ is injective;
\item if $\state(\ttarray[k]) \neq \NULL$ then $k$ is a slot corresponding to an
untaken completed enqueue event:
\[
\forall k, v \ldotp \state(\ttarray)[k] \neq \NULL \iff 
  \exists e \in \untaken{\state, \events, \ghostslot} \ldotp \ghostslot(e) = k
\]
\item if $\state(\ttarray[k]) = \NULL$ then $k$ is a slot behind the $\tt
back$ of the array, or it has not been used yet, or it is assigned to an
uncompleted enqueue, or it has been inserted into and taken already:
\[
  \forall k \ldotp \state(\ttarray)[k] = \NULL \iff
    \begin{array}[t]{@{}l@{}}
    \state({\tt back}) \leq k \lor
    k \notin \dom{\ghostslot^{-1}} \lor{}\\
    (\exists e \in \taken{\state, \events, \ghostslot} \ldotp \ghostslot(e) = k)
  \lor{} \\
    (\exists e \in {\sf withSlot}(\events, \ghostslot) \ldotp \ghostslot(e) = k)
    \end{array}
\]
\end{enumerate}
\end{enumerate}
\end{minipage}
\caption{The invariant $\inv = \invlin \land \invord \land \invwf \land \inv_{\sf ALG}$}
\label{fig:hwq_inv}
\end{figure}

\subsection{Loop invariant}

We define a loop invariant $\linv$, which we use to ensure that the uncompleted
dequeue of a thread $\tid$ returns a correct return value (the value
inserted by the $\order$-minimal enqueue).

\begin{definition}
  Given interpretation of logical variables $\lint$, we let $\linv$ be an
  assertion denoting the set of configurations $\evalf{\linv}{\lint}$ such that
  every configuration $(\state, (\events, \order), \ghost)$ in it satisfies the
  following:
\begin{enumerate}
\item $\forall e, e' \in \untaken{\events, \ghost} \ldotp
\ghostslot(e) < \ttk \leq \ghostslot(e') \leq \ttn
\implies \nedge{e}{\order}{e'}$;
\item $\forall e \in \untaken{\events, \ghost} \ldotp
\ghostslot(e) < \state(\ttk)
\implies \nedge{e}{\order}{\myEid};$
\item $\state(\ttn) \leq \state(\ttback)$.
\end{enumerate}
\end{definition}
The loop invariant $\linv$ consists of three properties, which are formulated
w.r.t. the thread-local memory cells $\ttk$ (contains the current loop index),
$\ttn$ (contains the loop boundary), $\ttback$ and events of the abstract
history. The first property states that an enqueue event $e$ of a value in each
slot preceding the current ($\ghostslot(e) < k$) does not precede in $\order$ an
enqueue event $e'$ of a value from the subsequent part of the array. The second
property similarly requires that an enqueue event $e$ of a value in each slot
that has already been visited ($\ghostslot(e) < \state(k)$) 
does not precede the current dequeue event $\myEid$. Finally, the third property
simply asserts that the value in $\ttn$ is smaller than $\ttback$.

The following lemma justifies the history update by the atomic step ${\tt
remove}$.
\begin{lemma}\label{app:lem:hwqmin}
For every $\lint : \lvarsType \to \valueType$ and configuration $(\state,
(\events, \order), \ghostts) \in \evalf{\linv}{\lint}$, if
$\state(\ttarray[\ttk]) \neq \NULL$, then ${\tt ENQ} = \getEvent(\events,
\ghost, \ttk)$ is minimal among untaken enqueue events:
\[
\forall e \in \untaken{\state, \history, \ghost} \ldotp
\nedge{e}{\order}{{\tt ENQ}}
\]
\end{lemma}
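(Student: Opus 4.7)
The plan is to deduce the conclusion by case analysis on the relative position of an arbitrary untaken enqueue event $e'$ versus the slot $\ttk$ of ${\tt ENQ}$, combining the three clauses of $\linv$ with the invariant properties $\invalg$ and $\invwf$. First, I would note that since $\state(\ttarray[\ttk]) \neq \NULL$, property $\invwf$(iii) yields some untaken enqueue event $e$ with $\ghostslot(e) = \ttk$, and by injectivity of $\ghostslot$ ($\invwf$(ii)) this event is exactly ${\tt ENQ} = \getEvent(\events, \ghost, \ttk)$. I would also record that, inside the {\tt foreach} loop, the loop index satisfies $\ttk \leq \state(\ttn)$ (this is implicit from the loop header \texttt{for k = 1 to n}), so that the bound $\ghostslot({\tt ENQ}) = \ttk \leq \state(\ttn)$ is available.

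Next I would fix an arbitrary $e' \in \untaken{\state, \history, \ghost}$ and split on how $\ghostslot(e')$ compares with $\ttk$. In the case $\ghostslot(e') < \ttk$, instantiating clause~1 of $\linv$ with $e := e'$ and $e' := {\tt ENQ}$ gives exactly $\ghostslot(e') < \ttk \leq \ghostslot({\tt ENQ}) \leq \state(\ttn)$, so $\nedge{e'}{\order}{{\tt ENQ}}$ follows directly. In the case $\ghostslot(e') = \ttk$, injectivity of $\ghostslot$ ($\invwf$(ii)) forces $e' = {\tt ENQ}$, and irreflexivity of $\order$ from Definition~\ref{def:history} then yields $\nedge{e'}{\order}{{\tt ENQ}}$. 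In the case $\ghostslot(e') > \ttk$, I would argue by contradiction: if $\edge{e'}{\order}{{\tt ENQ}}$ held, then since both $e'$ and ${\tt ENQ}$ are untaken, $\invalg$ would force $\ghostslot(e') < \ghostslot({\tt ENQ}) = \ttk$, contradicting $\ghostslot(e') > \ttk$.

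Since $e'$ was arbitrary and the three cases are exhaustive, the conclusion follows. I do not expect a serious obstacle here: the clauses of $\linv$ and the algorithmic invariant $\invalg$ are tailored to exactly this argument. The one mildly subtle point is being careful about the boundary conditions on $\ttk$ and $\ttn$ ($\ttk$ ranges over $1..\ttn$ during the loop body, which makes the hypothesis $\ttk \leq \ghostslot({\tt ENQ}) \leq \ttn$ of $\linv$ clause~1 available with $e' := {\tt ENQ}$); this is the only spot where one has to match up the loop bookkeeping with the shape of the loop invariant, but it is immediate from the proof outline in Figure~\ref{app:fig:hwq}.
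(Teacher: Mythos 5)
Your proof is correct and follows essentially the same route as the paper's: slots strictly before $\ttk$ are handled by clause~1 of $\linv$, slots strictly after $\ttk$ by $\invalg$, with $\invwf$ supplying well-definedness of ${\tt ENQ}$. You merely spell out the details the paper leaves implicit (the degenerate case $\ghostslot(e') = \ttk$ via injectivity and irreflexivity, and the boundary condition $\ttk \leq \state(\ttn)$ needed to instantiate clause~1), which is a faithful elaboration rather than a different argument.
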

\begin{proof}
  The statement of the lemma follows from the first property of the loop
  invariant, $\invalg$ and $\invwf$. According to the latter, every untaken
  enqueue $e$ has a value in a slot $\ghostslot(e) < \state(\ttback)$.

  When $\ttarray[\ttk] \neq \NULL$, it is easy to see that all untaken enqueues
  with slots later than $k$ in the array cannot precede ${\tt ENQ}$ according to
  $\invalg$, and the loop invariant asserts that all untaken enqueues before $k$
  in the array do not precede $\tt ENQ$ either.  Thus, $\tt ENQ$ is a minimal
  untaken enqueue.  \qed
\end{proof}

With the help of Lemma~\ref{app:lem:hwqmin}, we can conclude that the history
update of the atomic step ${\tt remove}$ at line~\ref{line:hwq:remove} in the
dequeue operation does not invalidate acyclicity of the partial order. Let $\ttarray[\ttk] \neq \NULL$ hold and let ${\tt ENQ} = \getEvent(\events, \ghost, \ttk)$ be an identifier of an enqueue event whose value is being removed. We consider separately each kind of
edges added into the abstract history:
\begin{enumerate}
\item \textbf{The case of $({\tt ENQ}, \myEid)$}. Note that prior to the
commitment point, $\myEid$ is an uncompleted event. By
Definition~\ref{def:history} of the abstract history, the partial order on its
events is transitive, and uncompleted events do not precede other events. Thus,
ordering ${\tt ENQ}$ before $\myEid$ does not create a cycle.
\item \textbf{The case of $(\myEid, d)$ for each identifier $d$ of an
uncompleted dequeue event}. Analogously to the previous case, if $d$ is
uncompleted event, it does not precede other events in the abstract history.
Hence, ordering $\myEid$ in front of all such dequeue events does not create
cycles.
\item \textbf{The case of $({\tt ENQ}, e)$ for each $e \in \untaken{\state, \history, \ghost}$}. By Lemma~\ref{app:lem:hwqmin}, from
$\linv$ it follows that no $e \in \untaken{\state, \history, \ghost}$
precedes ${\tt ENQ}$ in the abstract history. Consequently, ordering ${\tt ENQ}$
before all such enqueue events does not create cycles.
\end{enumerate}

\artem{Correctness of the emptiness check is commented out from the loop
invariant. The argument was flawed, and the algorithm in fact is not
linearizable with the emptiness check.}

\artem{Explanations of the loop invariant in hindsight are commented out.}


\artem{Preservation of the loop invariant is commented out}
\ifdebug
\clearpage
\section*{Notes}
- events in t instead of events by t

- Extra notation to introduce:
\begin{itemize}
\item $\_$ -- any value;
\item $\cdot$ -- sequence constructor;
\item $f\sub{x}{a}$ is a function such that $f\sub{x}{a}(y) \triangleq
(\mbox{if } y = x \mbox{ then } f(y) \mbox{ else } a)$, and we use the same
notation for adding/replacing events in $\events \subseteq \eventType$.
\item \ag{We use reflexive and transitive closure in some places.}
\end{itemize}

- Double-check for where we use ids and where we use events - in some places
it's a mess.

- Configurations are sometimes brackets, sometimes angle brackets

- Algorithm vs data structure
\fi
\end{document}
